\DeclareMathOperator*{\argmax}{arg\,max}
\pgfplotsset{compat=1.8}
\tikzset{elegant/.style={smooth,thick,samples=500,magenta}}
\theoremstyle{plain}
\newtheorem{theorem}{Theorem}[section]
\newtheorem{lemma}[theorem]{Lemma}
\newtheorem{remark}[theorem]{Remark}
\newtheorem{corollary}[theorem]{Corollary}
\newtheorem{proposition}[theorem]{Proposition}
\theoremstyle{definition}
\newtheorem{definition}[theorem]{Definition}
\newtheorem{assumption}[theorem]{Assumption}
\newtheorem{example}[theorem]{Example}
\newtheorem{claim}[theorem]{Claim}
\crefname{assumption}{Assumption}{Assumptions}
\theoremstyle{plain}
\newtheorem*{thm*}{Theorem}
\theoremstyle{plain}
\newcommand{\fed}{\mathrm{fed}}
\newcommand{\tr}{\mathrm{tr}}
\newcommand{\eff}{\mathrm{eff}}
\newcommand{\peff}{\mathrm{pure}}
\newcommand{\supp}{\mathrm{supp}}
\newcommand{\sg}{\mathrm{SG}}
\newcommand{\poa}{\mathrm{POA}}
\newcommand{\R}{\mathbb{R}}
\newcommand{\E}{\mathbb{E}}
\newcommand{\M}{\mathbf{M}}
\newcommand{\0}{\mathbf{0}}
\newcommand{\X}{\mathcal{X}}
\newcommand{\mec}{\mathcal{M}}
\newcommand{\wt}{\widetilde}
\definecolor{b2}{RGB}{51,153,255}
\definecolor{mygreen}{RGB}{80,180,0}
\title{Evaluating and Incentivizing Diverse Data Contributions in Collaborative Learning}
\author{
}
\date{}
\author{%
  Baihe Huang\\
  University of California, Berkeley\\
  \texttt{baihe\_huang@berkeley.edu} \\
  \And
  Sai Praneeth Karimireddy\\
  University of California, Berkeley\\
  \texttt{sp.karimireddy@berkeley.edu} \\
  \AND
  Michael I. Jordan\\
  University of California, Berkeley\\
  \texttt{jordan@cs.berkeley.edu} \\
}
\begin{document}

\ifdefined\isarxiv
  \maketitle
  \begin{abstract}
  \end{abstract}

\else
\maketitle

\begin{abstract}
For a federated learning model to perform well, it is crucial to have a diverse and representative dataset. However, the data contributors may only be concerned with the performance on a specific subset of the population, which may not reflect the diversity of the wider population. This creates a tension between the principal (the FL platform designer) who cares about global performance and the agents (the data collectors) who care about local performance. 
In this work, we formulate this tension as a game between the principal and multiple agents, and focus on the linear experiment design problem to formally study their interaction. We show that the statistical criterion used to quantify the diversity of the data, as well as the choice of the federated learning algorithm used, has a significant effect on the resulting equilibrium. We leverage this to design simple optimal federated learning mechanisms that encourage data collectors to contribute data representative of the global population, thereby maximizing global performance.
\end{abstract}

\fi

\section{Introduction}

Collaborative learning can be viewed as a transactional process where participants collectively receive a reduction in uncertainty in return for sharing their data~\cite{karimireddy2022mechanisms}. However, each participant may be primarily interested in inference for a different sub-population of the global population. Thus a reduction in uncertainty on the latter may not necessarily translate to an improvement for every participant.

Consider in particular a collaborative learning project between multiple countries to study rare cancers~\cite{moncada2020vantage6,geleijnse2020prognostic}. Different countries operate cancer registries with the goal of collecting comprehensive data on rare cancer cases within their jurisdictions. These registries collaborate to pool their data and resources. However, each registry has the responsibility to prioritize the benefit for their own population while minimizing the risks associated with data collection and sharing. Thus, the global performance needs to be balanced with the specific needs and goals of each registry.

The need to balance local and global interests becomes even more critical when collecting data from marginalized communities. Issues of equity and autonomy underpin indigenous critiques of genetic research and the sharing of genomic data~\cite{hudson2020rights,chediak2020sharing}. Such communities have historically faced exploitation and mis/under-representation in research studies~\cite{graham2015disparities,albain2009racial,nana2021health}. Therefore, it is essential to  consider carefully the costs incurred by and benefits provided to them individually.

\begin{figure}[h]
    \centering
    \includegraphics[width=\linewidth]{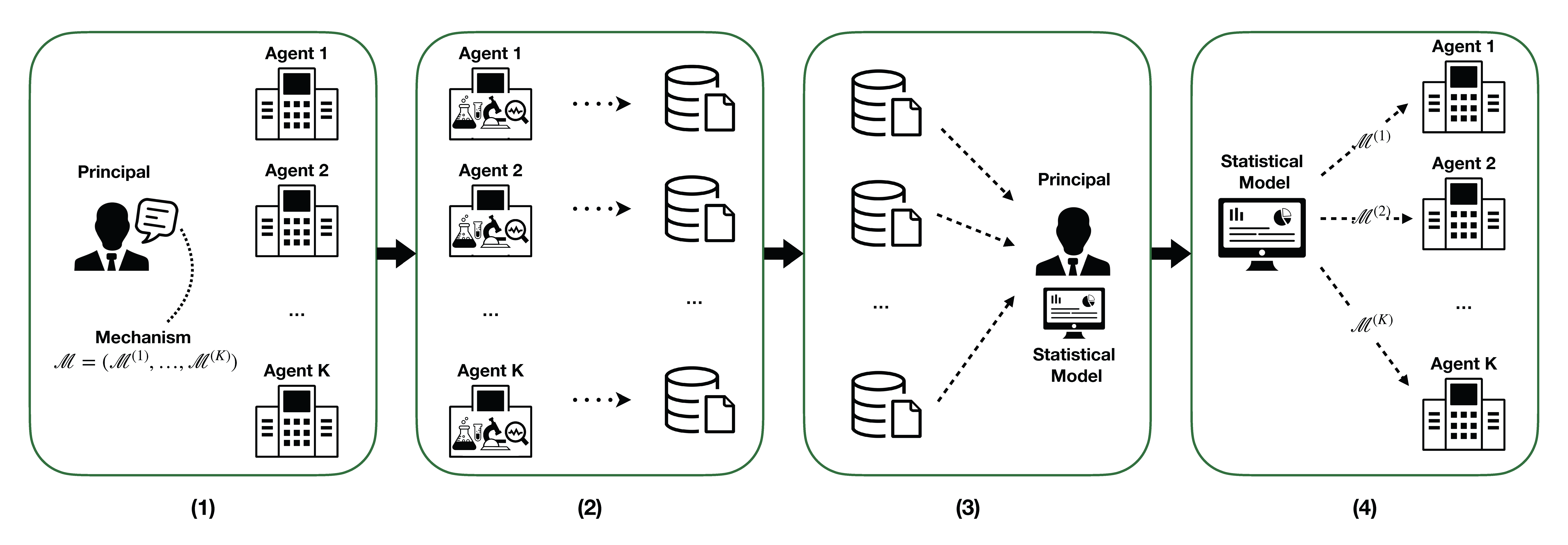}
    \vspace{-2.5em}
    \caption{Principal-agent experiment design. 
    (1) The principal publishes a mechanism incentivizing participation. (2) Each agent strategically selects how many and which of their available experiment conditions to collect and share. (3) The agents engage in collaborative learning with the principal, who utilizes the collected data points to train a statistical model. (4) The principal applies the mechanism to the trained model for each agent and subsequently distributes the models to them.}
    \label{fig:flowchart}
\end{figure}

We formalize this problem as a game between a principal (the platform designer), and multiple agents (participants) whose needs and agency needs to be respected---see Fig.~\ref{fig:flowchart}. Together, they wish to determine a statistical model between responses and variables. Each agent has access to a set of experiment conditions relevant to specific demographic groups within their population. They autonomously decide how many (as well as which) samples to collect and share. The platform then employs federated learning to train a model on the collective data, which is then shared back to the agents. Notably, each agent wishes to minimize the data costs incurred while maximizing uncertainity reduction.

This can be seen as a "multi-agent" version of the classic \emph{optimal experiment design} problem~\cite{wald1943efficient,kiefer1958nonrandomized,kiefer1960equivalence,karlin1966optimal,atwood1969optimal,fedorov1972optimal}, where the final allocation of samples among experiment conditions results from decisions made by multiple agents. 
Unlike classical theories, we introduce game-theoretic subtleties since each agent is primarily concerned with the validity of the model for its specific demographic group. 
Hence, we must account for the strategic behavior that emerges due to both data diversity and cost heterogeneity. 
In this context, two fundamental questions arise. 

The first question is efficiency---it is crucial to allocate resources such as time, money, and materials in the most efficient and informative manner. Such global efficiency was the main concern of classic theory of optimal experiment design which proposed different optimality (efficiency) criterion. In our version, we ask:

\begin{center}
\textit{When is it in the agents' best interest to follow the globally efficient optimal experiment design?}
\end{center}

The second issue is maximizing the amount of information collected. Conventional data-sharing mechanisms such as federated learning face the critical issue of free-riding \cite{baumol2004welfare,choi2019contributors,sarikaya2019motivating,lin2019free,ding2020incentive,sim2020collaborative,xu2021gradient}, where some strategic agents may contribute minimal or no data but still benefit from an improved model. The principal may instead want to maximize the information generated by data contributions from all agents (without regards to efficiency). This raises the following challenge for the principal:

\begin{center}
\textit{Can the principal design mechanisms to incentivize strategic agents to contribute their fair share of data, thereby maximizing the information produced?}
\end{center}

The quality and diversity of data play a vital role in the context of heterogeneous experiment conditions. In this study, we address the aforementioned challenges by specifically focusing on linear experiment design \cite{pukelsheim2006optimal,silvey2013optimal}, where diversity is characterized by the Fisher information matrix and quality is assessed using optimality criteria. The main contributions of this paper are as follows:
\vspace{-0.5em}
\begin{itemize}
    \item We formalize the problem of \emph{principal-agent experiment design} to capture the strategic behavior of self-interest agents in a collaborative learning process where the collection of data is planned and optimized. 
    Notably, our model considers both the quantity and diversity of the data, extending the scope of previous work \cite{karimireddy2022mechanisms}.
    \item 
    Within this framework, we investigate the statistical efficiency of the multi-agent system in the federated learning mechanism. Surprisingly, we demonstrate that the D-criterion is the \emph{only} commonly used optimality criterion that supports efficiency in the multi-agent system. This finding provides strong support for the application of the D-criterion in collaborative experiment design problems.
    \item We show that strategic agents may display excessive free-riding in heterogeneous data settings. Motivated by this fact, we consider the Stackelberg game between the principal and multiple agents, and propose a mechanism that incentives the agents to generate fair amounts of data so that the total information is maximized.
\end{itemize}

\paragraph{Related work}
In recent years, Federated Learning (FL) \cite{konevcny2016federated,mcmahan2017communication, kim2019blockchained, kairouz2021advances, li2020review, mancini2021data} has emerged as an important machine learning paradigm that allows multiple distributed clients to train a central statistical model under the orchestration of a principal. The cooperative nature of the framework raises economical and ethical concerns such as free-riding and fairness \cite{baumol2004welfare,fraboni2021free,mohri2019agnostic,huang2020exploratory,shi2021survey}. 
To address with these concerns, several authors have investigated free-rider attacks and have developed methods for detection~\cite{richardson2020budget, sarikaya2019motivating,lin2019free,fraboni2021free,ding2020incentive, zhang2022enabling}. Another line of works designs metrics for quantifying the contribution of each agent~\cite{ghorbani2019data,jia2019towards,wang2020principled}.
More closely related to the current paper are recent works that apply the theory of contracts and incentives~\cite{smith2004contract, laffont2009theory, bolton2004contract} to FL. In particular, \cite{tian2021contract} propose a mechanism to achieve improved generalization accuracy by eliciting the private type of the agents, \cite{sim2020collaborative, xu2021gradient} propose mechanisms based on notions from the cooperative game theory literature to incentivize agents through the model quality, and \cite{karimireddy2022mechanisms} introduce mechanisms based on accuracy-shaping to maximize the number of data points generated by each agent. Recently, \cite{xu2023fair} propose a collaborative data generation method that satisfies the Individual Rationality and fairness.

Our work is different from these works in the following way: (i) we study an autonomous data-generation process in which each agent can strategically choose what experiment condition to collect data from, and (ii) we model the utility of self-interested agents and formulate the process of data generation as a game. 
Addressing these challenges requires a more thoroughgoing blend of economic and statistical foundations, and we pursue this blend within the statistical theory of experiment design \cite{pukelsheim2006optimal,silvey2013optimal}. 
In particular, we repose on a long line of works studying optimally in linear experiment design~\cite{smith1918standard,wald1943efficient,kiefer1958nonrandomized,eccleston1974theory,kiefer1975construction,cheng1978optimality}. A key concept in this literature is the notion of the D-criterion as a formalization of optimality~\cite{shah1989optimality}. Our work makes use of the D-criterion, in particular its instantiation in the work of \cite{kiefer1960equivalence,sibson1973da}. 
We bring this line of work on optimal experiment design into contact with the study of multi-agent learning systems.

\paragraph{Notations}
For any vector $w \in \R^n$ and index set $G \subset [n]$, let $w_{G} \in \R^{|G|}$ denote the vector formed by the coordinates of $w$ in the index set $G$ (preserving the order),
and let $G^c$ denote the complement of $G$. 
We use $d F(u,v)$ to denote the Gateaux derivative of $F$ at $u$ in the direction $v$. 
Define $(x)_+ := \max\{x,0\}$. 
Let $\R^d_+$ denote the nonnegative orthant in $\R^d$, i.e. $\R^d_+ = \{x\in \R^d: x_i \geq 0, \forall i \in [d]\}$. 
Denote $\langle A,B \rangle := \mathrm{tr}[A^\top B]$ for $A,B \in \R^{d \times d}$. 
Finally, $M^\dagger$ represents the Moore-Penrose inverse of matrix $M$.

\section{Single Agent Experiment Design}

\subsection{Quantifying and minimizing uncertainty}
\label{sec:experiment_uncertainty}

Many scientific problems involve determining the underlying parameter $\theta$ in relationships of type
\begin{align}\label{eq:def-problem}
    y = \theta^\top x + e\,,
\end{align}
where $x \in \R^d$ represents an experimental condition under which the data is collected, and $e$ has zero mean and unit variance. Given a set of observations, $(y_i,x_i)_{i  = 1,\dots,m}$, ordinary least squares (OLS) yields the estimator $\hat \theta = (X^\top X)^{\dagger}X^\top Y$ where $X= (x_1, \cdots, x_m)^\top \in \R^{m \times d}$ and $Y = (y_1,\dots,y_m)^\top \in \R^m$ denote the experimental conditions and the responses. The variability of $\hat \theta$ is governed by the Fisher information matrix $\mathcal{I}(X;\theta) = \sum_{i=1}^m x_ix_i^\top$. Specifically, the expected error for a prediction at the covariate $x$, conditioned on $X$, is
\begin{equation}\label{eq:uncertainity}
    \mathcal{E}(x;X) = \E \big(\hat\theta^\top x - \E(y | x)\big)^2 = x^\top \mathrm{Cov}(\hat \theta)x = x^\top \mathcal{I}^{\dagger} x\,.
\end{equation}    
The problem of optimal experiment design is to select the training data $X$ which ``maximize'' the information matrix $\mathcal{I}$, thereby minimizing uncertainty. More precisely, we are given a \textit{design space} $\X \subset\R^d$ containing {all possible data points} which could be collected, and we choose a sampling strategy $\pi$ (referred to as the \textit{design measure}), which is a measure over $\X$. For technical simplicity, we assume that the design space is finite with $\X = \{x_1,\dots,x_n\}$. We then define the \textit{information matrix} as a function of $\pi$:
\begin{align}\label{eq:def-information}
    \M(\pi) := \sum_{i=1}^n \pi_i x_ix_i^\top .
\end{align}
Then, similar to \eqref{eq:uncertainity}, if an experimenter draws $m$ samples $X_m$ following $\pi$, Slutsky's theorem implies the expected error satisfies the following asymptotic relationship
\begin{align}\label{eq:uncertainty_asymp}
    m \cdot \mathcal{E}(x;X_m) \overset{p}{\rightarrow} x^\top \M(\pi)^{-1} x\,.
\end{align}
Thus, $\M(\pi)^{-1}$ is a matrix representing uncertainty along different directions under a sampling strategy $\pi$. We reduce this to a single scalar using an \textit{optimality criterion} $f$\footnote{Without loss of generality, we assume $f(M^{-1}) = -\infty$ when $M$ is singular, and so restrict ourselves to nonsingular $\M(w)$.} which is a function from the set of symmetric matrices in $\R^{d \times d} \rightarrow \R$:
\begin{align}\label{eq:def-criterion}
    \max_{\pi} f(\M(\pi)^{-1}), &~ \text{s.t.} ~ \pi \in \Delta(\X)\,.
\end{align}
Some popular choices of optimality criterion are as follows:
\begin{multicols}{2}
\begin{itemize}[noitemsep, topsep=0pt, leftmargin=*]
\item E-criterion: $f_E(M^{-1}) = -\|M^{-1}\|_2$
\item A-criterion: $f_A(M^{-1}) = -\tr\left(M^{-1}\right)$
\item V-criterion: $f_V(M^{-1}) = -\frac{1}{|\X|}\sum_{x \in \X}\left(x^\top M^{-1} x\right)$.
\item D-criterion: $f_D(M^{-1}) = \log \det M$
\item G-criterion: $f_G(M^{-1}) = - \max_{x \in \X} \left(x^\top M^{-1} x\right)$
\end{itemize}
\end{multicols}

\subsection{Agent utility}\label{sec:agent_utility}

Each agent wants to minimize uncertainty (on their design space) while also minimizing the costs of data collection and sharing. Consider an agent $k$ with design space $\X_k \subseteq \X$. Here, $\X_k$ represents the sub-population that agent $k$ cares about and has access to. In general, this is different from the global $\X$. Further, define the index set $G_k \subseteq [n]$ such that $\X_k = \{x_i\}_{i \in G_k}$.

Agent $k$ may wish to collect multiple data points, and needs to decide the quantity of samples along with the sampling strategy. To model this, we define $w$ to be a general design measure over $\X$; that is, $w \in \R_+^{|\X|}$ and may not sum to one. Suppose that the cost of collecting a single data point is $c^{(k)}>0$. This can represent both the actual cost incurred in collecting and storing the data as well as potential privacy risks associated with storing and sharing it. Then, the total cost incurred by agent $k$ is $c^{(k)}\sum_{i \in G_k}w_i$.

We next need to quantify the information gained from a design measure $w$ for agent $k$. While Eq.~\eqref{eq:def-information} defines the information matrix for the global design space $\X$, agent $k$ only cares about $\X_k$, which is of rank $r_k$ (that may be less than $d$). We thus need to consider a \emph{local information matrix}~\cite{sibson1973da,silvey1978optimal} representing uncertainty along directions only in $\X_k$:
\begin{equation}\label{eq:def-local-information}
\M^{(k)}(w) := ({A^{(k)}}^\top\M(w)^{\dagger}A^{(k)})^{-1}\,.    
\end{equation}
Here, $A^{(k)} \in \R^{d \times r_k}$ such that $A^{(k)} {A^{(k)}}^\top$ is a projection matrix onto $\mathrm{span}(\X_k)$. To see the meaning of local information matrix, consider $X$ being a set of $\sum_{i=1}^m w_i$ samples from $\X$ where $x_i$ is collected $w_i$ times. Then for any $z \in \R^{r_k}$, the expected error for a prediction at the covariate $A^{(k)}z \in \R^d$ is given by
\begin{align*}
    \mathcal{E}(A^{(k)}z;X) = (A^{(k)}z)^\top \big(\textstyle\sum_{i=1}^n w_i x_ix_i^\top \big)^{\dagger} A^{(k)}z = z^\top \M^{(k)}(w)^{-1} z\,.
\end{align*}
Therefore comparing to Eq.~\eqref{eq:uncertainty_asymp}, $\M^{(k)}(w)$ quantifies the variability of prediction only in the subspace spanned by $\X_k$. 
The value of a design strategy $w$ for agent $k$ on $\X_k$ using criterion $f^{(k)}(\cdot)$ can then be written as
\(
    f^{(k)}\left({\M^{(k)}(w)}^{-1}\right).
\)
Here, the criterion $f^{(k)}$ depends on $k$ since it may implicitly depend on the design space $\X_k$. In particular, the G-criterion takes a maximum of prediction error over the design space $\X_k$, and the V-criterion takes an average. We refer to Appendix~\ref{sec:local_criterion} for the explicit expressions of such criteria.

Next, note that agent $k$ only has control over $w_i$ for $i \in G_k$; i.e., it can only decide the sampling strategy over $\X_k$. For convenience, define $w_{G_k} = (w_i)_{i \in G_k}$. Putting all of this together, we can define the goal of an agent to be to choose a design measure $w_{G_k}$ which maximizes its utility
\begin{equation}\label{eq:single_agent_utility}
    \max_{w_{G_k}}  \Big\{u^{(k)}(w) := f^{(k)}\left({\M^{(k)}(w)}^{-1}\right) - c^{(k)} \sum_{i\in G_k} w_i\Big\}.
\end{equation}
Finally, note that the cost incurred by agent $k$ is completely independent of $w_{{G}^c_k} := (w_j)_{j\notin G_k}$. However, the local information matrix $\M^{(k)}(w)$ depends on the whole $w$ and $\X$. In particular, if the complementary design space $x_j \in \X \setminus \X_k$ is similar to $\X_k$ and $w_j > 0$, then the local information matrix $\M^{(k)}(w)$ as well as $f^{(k)}\big({\M^{(k)}(w)}^{-1}\big)$ will be larger. Thus, $w_{{G}^c_k}$ represents free outside information given to agent $k$, and importantly it affects the optimal choice of $w_{G_k}$. 
The following result connects this formulation with the standard experiment design.
\begin{proposition}[Isolated agent follows optimal design]\label{cla:connection_to_optimal_design}
Suppose that $f$ satisfies the following: there exists $p \in \R$ and function $g: \R \to \R$ such that $f^{(k)}(\lambda M) = \lambda^p \cdot f^{(k)}(M) + g(\lambda)$ for all p.s.d. matrices $M$. Also, define the projected local design space $\tilde\X_k := \big\{{A^{(k)}}^\top x: x \in \X_k \big\}$. Then, when $w_{{G}^c_k} = 0$, the strategic response of agent $k$, i.e. design $w_{G_k}^*$ which maximizes $u^{(k)}$ in Eq.~\eqref{eq:single_agent_utility}, satisfies\vspace{-0.7em}
\[
w_{G_k}^* \propto \argmax_{\tilde\pi \in \Delta(\tilde\X_k)} f\big(\big(\textstyle\sum_{x_i \in \tilde\X_k}\tilde\pi_i x_i x_i^\top\big)^{-1}\big)\,.
\]
\end{proposition}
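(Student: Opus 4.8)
The plan is to show that once outside contributions vanish ($w_{{G}^c_k}=0$), the agent's local information matrix collapses to an \emph{ordinary} information matrix on the projected design space $\tilde\X_k$, after which the homogeneity hypothesis separates the choice of total sample budget from the choice of sampling direction.

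First I would simplify $\M^{(k)}(w)$. Since each $x_i$ with $i\in G_k$ lies in $\mathrm{span}(\X_k)$ and $A^{(k)}{A^{(k)}}^\top$ projects onto this span with ${A^{(k)}}^\top A^{(k)}=I_{r_k}$, we may write $x_i=A^{(k)}\tilde x_i$ where $\tilde x_i={A^{(k)}}^\top x_i$. Setting $w_{{G}^c_k}=0$ gives $\M(w)=\sum_{i\in G_k}w_i x_i x_i^\top=A^{(k)}\tilde\M(w){A^{(k)}}^\top$ with $\tilde\M(w):=\sum_{i\in G_k}w_i\tilde x_i\tilde x_i^\top$. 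Invoking the identity $(ABA^\top)^\dagger=AB^\dagger A^\top$, valid whenever $A$ has orthonormal columns, together with ${A^{(k)}}^\top A^{(k)}=I$, I obtain ${A^{(k)}}^\top\M(w)^\dagger A^{(k)}=\tilde\M(w)^\dagger$, hence $\M^{(k)}(w)=\tilde\M(w)$ on full-rank designs (singular designs are excluded as they yield value $-\infty$). Thus the utility becomes $u^{(k)}(w)=f^{(k)}\big(\tilde\M(w)^{-1}\big)-c^{(k)}\sum_{i\in G_k}w_i$, where by construction the local criterion $f^{(k)}$ evaluated on $\tilde\M(w)^{-1}$ coincides with the base criterion $f$ on the projected points $\tilde\X_k$.

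Next I would factor the design into total mass and direction: write $t:=\sum_{i\in G_k}w_i>0$ and $\tilde\pi:=w_{G_k}/t\in\Delta(\tilde\X_k)$, so that $\tilde\M(w)=t\,\M_{\tilde\pi}$ with $\M_{\tilde\pi}:=\sum_i\tilde\pi_i\tilde x_i\tilde x_i^\top$ and $\tilde\M(w)^{-1}=t^{-1}\M_{\tilde\pi}^{-1}$. Applying the homogeneity hypothesis to $f^{(k)}$ with $\lambda=1/t$ and $M=\M_{\tilde\pi}^{-1}$ gives $u^{(k)}(w)=t^{-p}\,f^{(k)}\big(\M_{\tilde\pi}^{-1}\big)+g(1/t)-c^{(k)}t$. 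The key structural point is that the dependence on the direction $\tilde\pi$ enters only through the term $t^{-p}f^{(k)}(\M_{\tilde\pi}^{-1})$, and the multiplier $t^{-p}$ is strictly positive for every admissible $t>0$.

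Consequently, for any fixed budget $t$, maximizing $u^{(k)}$ over $\tilde\pi$ is equivalent to maximizing $f^{(k)}(\M_{\tilde\pi}^{-1})$, whose set of maximizers $\tilde\pi^*$ does not depend on $t$; by the identification above this set equals $\argmax_{\tilde\pi\in\Delta(\tilde\X_k)}f(\M_{\tilde\pi}^{-1})$. Hence the optimal design has the form $w_{G_k}^*=t^*\tilde\pi^*$, giving exactly the claimed proportionality, with the residual scalar problem over $t$ fixing $t^*$ but leaving the direction untouched. I expect the main obstacle to be Step 1: the pseudoinverse manipulation must be justified carefully with respect to the relevant ranges, and the identification of the agent-specific criterion $f^{(k)}$ on the local information matrix with the base criterion $f$ on $\tilde\X_k$ is immediate for the D-, A-, and E-criteria but requires the explicit local forms for the V- and G-criteria; once positivity of $t^{-p}$ is observed, the separation in the later steps follows directly.
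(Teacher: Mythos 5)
Your proposal is correct and follows essentially the same route as the paper's proof: both reduce $\M^{(k)}(w)$ to the ordinary information matrix $\sum_{i\in G_k} w_i\,{A^{(k)}}^\top x_i({A^{(k)}}^\top x_i)^\top$ on the projected space when $w_{G_k^c}=0$, then use the change of variables $w_{G_k}=t\,\tilde\pi$ together with the homogeneity hypothesis to decouple the scale $t$ from the direction $\tilde\pi$. The only difference is cosmetic---you justify the reduction via the identity $(ABA^\top)^\dagger=AB^\dagger A^\top$ for orthonormal-column $A$ (valid here since $A^{(k)}{A^{(k)}}^\top$ being a projection forces ${A^{(k)}}^\top A^{(k)}=I_{r_k}$), whereas the paper carries out the same computation explicitly through the SVD of $A^{(k)}$.
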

This result shows that in the absence of free outside information, a rational agent optimizing their utility Eq.~\eqref{eq:single_agent_utility} will follow the optimal design measure over the local design space. Thus, the single agent version of our problem recovers the classical experiment design problem. Note that all the common optimality criterion listed in Section~\ref{sec:experiment_uncertainty} satisfies the condition of this claim. 
When $w_{G_k^c}$ is non-zero, the optimal strategy might depend on the available outside information $\sum_{i \in G_k^c}w_ix_ix_i^\top$. 
In conclusion, the problem in Eq.~\eqref{eq:single_agent_utility} can be seen as a generalization of the problem in Eq.~\eqref{eq:def-criterion} where free outside information is given and $w_{G_k}$ need not to be a probability over $\X_k$.

\section{Principal-Agent Experiment Design}

In this section, we introduce our problem of principal-agent experiment design which brings together game-theoretic considerations and properties of optimal experiment designs in a principal-agent framework~\cite{laffont2009theory}. Our overall framework is summarized in Algorithm~\ref{alg:alg}.

\subsection{Collaborative experiment design mechanism}

We model the interaction between multiple agents and a coordinating principal. Consider $K$ self-interested agents. Each agent $k$ has a local design space $\X_k = \{x_i\}_{i \in G_k}$. The global design space is then $\X = \cup_{k \in [K]}\X_k = \{x_1, \dots, x_n\}$. For ease of presentation, assume that the indices are sorted such that $G_1, \dots, G_K$ form consecutive partitions of $[n]$. Then, $w = (w_{G_1}, \dots, w_{G_K})$ is the global design measure with each agent $k$ controlling $w_{G_k}$. 
The data contribution of agent $k$ can thus be summarized by $w_{G_k}$. 
\paragraph{Mechanism definition.} The principal is given access to the entire data contributions, i.e., the global design measure $w$. Then the principal sets up a \emph{mechanism} to assign a subset of this contribution to each agent $k$. More formally, we introduce the following definition. 
\begin{definition}[Contribution-assigning mechanism]
A mechanism $\mec$ is defined as:
\begin{equation}\label{eq:def-mechanism}
    \mec := \big(\mec^{(k)}: \R_+^{n} \rightarrow \R_+^{n}\big)_{k \in [K]}\ \text{ satisfying }\  \mec^{(k)}(w) \leq w\,.
\end{equation}
The inequality $\mec^{(k)}(w) \leq w$ applies element-wise for all $i \in [n]$.
\end{definition}
Thus, a mechanism represents a re-allocation of the design measure (and hence data) to each of the clients. The utility enjoyed by client $k$ under mechanism $\mec$ can be written as
\begin{equation}\label{eq:utility-under-mechanism}
 \big(u^{(k)} \circ \mec^{(k)}\big)\,(w) := f^{(k)}\left(\big({\M^{(k)}(\mec^{(k)}(w))}\big)^{-1}\right) - c^{(k)} \sum_{i\in G_k} w_i\,.   
\end{equation}
Here, $u^{(k)} \circ \mec^{(k)}$ represents the composition of the agent's utility function $u^{(k)}$, which depends solely on the agent's personal valuation, and the mechanism $\mec^{(k)}$ implemented by the principal.

\begin{remark}[Accuracy shaping]
    The mechanism Eq.~\ref{eq:def-mechanism} can be understood as shaping the accuracy \cite{karimireddy2022mechanisms} of the model that is sent to each agent. For example, the standard federated learning mechanism $\mec^{(k)}_{\fed}$ would distribute the global model to all the agents. This corresponds to setting $\mec^{(k)}_{\fed}(w) = w$ for all agents. To incentivize agents to contribute high-quality data, the mechanism may adjust the accuracy of the model depending on the data quality generated by each agent.
\end{remark}

\begin{algorithm}[!t]
\caption{Principal-Agent Collaborative Experiment Design}
\begin{algorithmic}[1]
\STATE The principal selects and publishes a mechanism $\mec$ satisfying Eq.~\eqref{eq:def-mechanism}.
\STATE Each agent $k$ decides whether to join the collaborative learning depending on Eq.~(\ref{eq:def-IR},\ref{eq:def-baseline-utility}). 
\STATE If joining, agent $k$ chooses a \textit{design} $w_{G_k} \in \R^{|G_K|}_+$ which maximizes her utility Eq.~\eqref{eq:utility-under-mechanism}. If the agent is strategic, then $w_{G_k}$ will correspond to the Nash equilibrium Eq.~\eqref{eq:def-nash-equilibrium}.
\STATE For all $i \in G_k$, she collects $w_i$ independent samples from $x_i$, incurring a cost $c^{(k)}$ per unit.
\STATE The agents commit all the collected data to a collaborative learning procedure coordinated by the principal. Based on this aggregated data, the principal computes the OLS estimator $\hat \theta$.
\STATE Then, to each agent $k$, the principal sends back a possibly degraded $\hat \theta^{(k)}$ in accordance with the published $\mec^{(k)}$.
\end{algorithmic}\label{alg:alg}
\end{algorithm}

\paragraph{Implementing the mechanism.} 
The mechanism needs to return a $\hat\theta^{(k)}$ to agent $k$ using data $\mec^{(k)}(w)$. This is equivalent to requiring that $\hat\theta^{(k)}$ is an unbiased estimator of the ground truth with covariance $\M\big(\mec^{(k)}(w)\big)^{-1}$. 

Another straightforward method of achieving this would be to run $K$ parallel federated learning algorithms. Each of these would train a model $\hat\theta^{(k)}$ for agent $k$ using only a subset of the data points as dictated by $\mec^{(k)}(w)$.

\begin{remark}[Computational burden]
    While implementing the full mechanism may seem computationally burdensome, note that we only incur this burden if $\mec^{(k)}(w) \neq w$. As we will see in the following sections, under equilibrium conditions we will always expect to see $\mec^{(k)}(w) = w$ and so no additional computation is required. The mechanism is merely a deterrent.
\end{remark}

Finally, our framework assumes that an agent only has access to the final output of the mechanism, but not to any intermediaries. This is important since if we are learning $\hat\theta$ using FL, the agents may utilize the intermediary estimates (which may be of better quality), instead of $\hat \theta^{(k)}$. This may be avoided by either assuming that the agents can be trusted to follow the protocol, or by appealing to security and cryptographic solutions.
\begin{remark}[Hiding intermediates]
The entire mechanism can be implemented in an encrypted/obfuscated software~\cite{barak2016hopes}, or in a trusted execution environment (TEE)~\cite{sabt2015trusted}. These solutions ensure that only the final output of the mechanism can be accessed and all intermediary computations remain hidden. Thus, the agents are prevented from cheating and follow our mechanism.
\end{remark} 

\subsection{Strategic behavior of agents}

In principal-agent experiment design problems, it is important to distinguish non-strategic agents and strategic agents. We say an agent $k$ is \emph{non-strategic} if she makes decisions solely depending on her design space $\X_k$. It is clear that to maximize the worst case utility, a non-strategic agent $k$ should simply optimize the single-agent utility Eq.~\eqref{eq:single_agent_utility} without considering contributions from the other agents, i.e., assuming $w_{G^c_k} = 0$.

However, the more theoretically interesting and practically relevant scenario arises when we consider strategic agents.  We say an agent $k$ is \emph{strategic} if she make decisions on the design $w_{G_k}$ depending on the decisions of other agents, knowing the design spaces $\X_{j}$ and the costs $c^{(j)}$ for all $j \in [K]$. 
We characterize the behaviors of strategic agents through the following definition.

\begin{definition}[Strategic responses]
We say the designs $w^\ast = (w_{G_1}^\ast,w_{G_2}^\ast,\dots,w_{G_K}^\ast)$ is a \emph{strategic response} to the mechanism $\mec = (\mec^{(1)},\mec^{(2)},\dots,\mec^{(k)})$ if:
\begin{itemize}[noitemsep, topsep=0pt, leftmargin=*]
\item Individual Rationality: For any $k \in [k]$, if $\sum_{i \in G_k}w^*_{i} > 0$ then
\begin{equation}\label{eq:def-IR}
    \big(u^{(k)} \circ \mec^{(k)}\big)(w^\ast) \geq v^{(k)}_\ast,
\end{equation}
where $v^{(k)}_\ast$ is the maximum possible utility agent $k$ can achieve if she opts out of the collaborated learning and trains a model using her own data:
\begin{equation}\label{eq:def-baseline-utility}
    v^{(k)}_\ast := \max_{w_{G_k}} f^{(k)}\Big( \big(\sum_{i \in G_k} w_i \cdot (A^{(k)})^\top x_ix_i^\top A^{(k)}\big)^{-1}\Big) - c^{(k)} \sum_{i \in G_k} w_i.
\end{equation}
\item Pure Nash Equilibrium: $(w_{G_1}^\ast,w_{G_2}^\ast,\dots,w_{G_K}^\ast)$ is the pure Nash equilibrium of the game defined by concave utilities $(u^{(k)} \circ \mec^{(k)})_{k \in [K]}$ and actions $(w_{G_k})_{k \in [K]}$.  That is, it satisfies 
\begin{equation}\label{eq:def-nash-equilibrium}
    \big(u^{(k)} \circ \mec^{(k)}\big)(w^\ast) \geq 
    \big(u^{(k)} \circ \mec^{(k)}\big)( w_{G_k}, w_{G_{k}^c}^\ast) ,  ~\forall ~w_{G_k} \in \R^{|G_k|}_+\,,
\end{equation}
for all $k \in [K]$.  Here, $( w_{G_k}, w_{G_{k}^c}^\ast)$ denotes concatenation: $(w_{G_1}^\ast,\dots,w_{G_{k-1}}^\ast, w_{G_k}, w_{G_{k+1}}^\ast, \dots,w_{G_K}^\ast)$.
\end{itemize}
\end{definition}

The first condition indicates that an agent will never choose an action that results in a worse outcome than $v^*$, the status quo that agent $k$ can obtain no matter she takes part in the collaboration learning or not. This constraint reflects \textit{ex post} individual rationality that ensures each agent in a collaborative learning setting achieves a minimum level of utility after the learning process is completed. The second condition asserts that an agent can not obtain higher utility by unilaterally changing her action. Thus, the strategic response $w^\ast$ represents a stable fixed point to the game from which no agent has an incentive to deviate from their chosen action. Further, if there is a unique Nash equilibrium, then this represents the only solution rational agents will play. 
The following result confirms the existence of pure Nash equilibrium for a range of optimality criteria and mechanisms.
\begin{proposition}\label{cla:existence_ne}
If for $\forall k \in [K]$, $f^{(k)}$ is differentiable that satisfies $\underset{\lambda \to +\infty}{\lim \sup} f^{(k)}\left((\lambda M)^{-1} \right)/(c \lambda) \leq 0$ for any $c > 0$ and p.s.d. matrix $M$, and $\mec^{(k)}$ is differentiable such that $(u^{(k)} \circ \mec^{(k)})(\cdot,w_{G_k^c})$ has unique maximizer or $\equiv -\infty$ for any fixed $w_{G_k^c}$. Then there exists pure Nash equilibrium of the game defined by utilities $(u^{(k)} \circ \mec^{(k)})_{k \in [K]}$ and actions $(w_{G_k})_{k \in [K]}$.
\end{proposition}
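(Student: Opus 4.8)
The plan is to realize the equilibrium as a fixed point of the joint best-response map and apply Brouwer's theorem. The one genuine difficulty is that each action set $\R_+^{|G_k|}$ is unbounded, so a fixed-point theorem cannot be applied directly; the bulk of the work is to use the coercivity hypothesis to confine every best response to a common compact box, after which continuity of the best response (Berge's maximum theorem together with the uniqueness assumption) and Brouwer finish the argument.

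\medskip
\noindent\textbf{Step 1: coercivity in an agent's own action.} Fix the opponents' profile $w_{G_k^c}$ and view $\phi(w_{G_k}) := (u^{(k)}\circ\mec^{(k)})(w_{G_k},w_{G_k^c})$. The cost $-c^{(k)}\sum_{i\in G_k}w_i$ decays linearly in $\|w_{G_k}\|_1$. For the reward term I would examine a ray $w_{G_k}=\lambda v$ with $v\in\R_+^{|G_k|}$ fixed: since $\mec^{(k)}(w)\le w$ (Eq.~\eqref{eq:def-mechanism}) and $\M(\cdot)$ is monotone and linear, the matrix $\M(\mec^{(k)}(\lambda v, w_{G_k^c}))$ grows at most linearly in $\lambda$, and feeding this into the hypothesis $\limsup_{\lambda\to\infty} f^{(k)}((\lambda M)^{-1})/(c\lambda)\le 0$ shows the reward grows strictly slower than the cost. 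Hence $\phi(w_{G_k})\to-\infty$ as $\|w_{G_k}\|_1\to\infty$, so each best response is attained on a bounded set (or $\phi\equiv-\infty$, the degenerate branch permitted by the hypothesis, in which any point, say the origin, may be designated the response).

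\medskip
\noindent\textbf{Step 2: a common invariant box.} I next argue that a single radius $R$ bounds all best responses. The key comparative-statics fact is that outside information only diminishes the marginal value of an agent's own data: because the reward is concave and monotone in $w$, the norm of the best response $w_{G_k}^\ast(w_{G_k^c})$ is non-increasing in $w_{G_k^c}$. Consequently the largest best response occurs at $w_{G_k^c}=0$, where Step 1 already guarantees a finite maximizer (in the mechanism-free case, Proposition~\ref{cla:connection_to_optimal_design} even identifies it with a scaled optimal design over $\tilde\X_k$). Taking $R$ to exceed this isolated-agent bound for every $k$ yields a compact convex set $\Omega=\prod_{k}[0,R]^{|G_k|}$ that the best-response map sends into itself.

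\medskip
\noindent\textbf{Step 3: continuity and the fixed point.} On $\Omega$ the payoff $(w_{G_k},w_{G_k^c})\mapsto(u^{(k)}\circ\mec^{(k)})(w)$ is continuous, being a composition of the continuous map $\M$, the differentiable mechanism $\mec^{(k)}$ and the differentiable criterion $f^{(k)}$. Berge's maximum theorem then makes the best-response correspondence nonempty, compact-valued and upper hemicontinuous; the assumed uniqueness of the maximizer collapses it to a single-valued continuous function $B^{(k)}$, so that $B=(B^{(1)},\dots,B^{(K)}):\Omega\to\Omega$ is continuous. Brouwer's fixed-point theorem supplies $w^\ast$ with $B(w^\ast)=w^\ast$, a pure Nash equilibrium of the game restricted to $\Omega$. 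Since $R$ was chosen so that the unrestricted best response already lies in $[0,R]^{|G_k|}$, the restricted and unrestricted best responses coincide at $w^\ast$, so $w^\ast$ satisfies Eq.~\eqref{eq:def-nash-equilibrium} over all of $\R_+^{|G_k|}$. The delicate part is Step 2: verifying that the reward's growth under the mechanism $\mec^{(k)}$ is genuinely controlled by the scalar $\limsup$ hypothesis, given that $\mec^{(k)}$ composes nonlinearly with $\M^{(k)}$, and establishing the monotone comparative statics that make the invariant box uniform over opponents; the degenerate $\equiv-\infty$ branch must additionally be handled so that $B^{(k)}$ stays continuous where the finite and infinite regimes meet.
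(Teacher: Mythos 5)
Your overall architecture is the same as the paper's: the paper also restricts $w$ to a compact convex set using the coercivity hypothesis, forms the best-response map (single-valued by the assumed unique maximizer), and invokes a fixed-point theorem --- Kakutani in the paper, which with singleton values is interchangeable with your Berge-plus-Brouwer route. Steps 1 and 3 match the paper's (terse) argument, up to a technicality you partially acknowledge: along a ray $w_{G_k}=\lambda v$ the matrix $\M(\lambda v, w_{G_k^c})$ has the form $\lambda(M + \lambda^{-1}N)$, not $\lambda M$ for a fixed $M$, so the hypothesis $\limsup_{\lambda\to\infty} f^{(k)}\big((\lambda M)^{-1}\big)/(c\lambda)\le 0$ must be transported through a comparison; the paper elides this as well.

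The genuine gap is your Step 2. You justify the common invariant box by asserting that ``the reward is concave and monotone in $w$'' and deducing that the norm of the best response is non-increasing in $w_{G_k^c}$. Neither concavity nor monotonicity of $f^{(k)}$ (or of $u^{(k)}\circ\mec^{(k)}$) is among the hypotheses of this proposition --- those assumptions appear only in the companion Proposition~\ref{prop:existence_ne_2} --- and even granting both, the comparative-statics conclusion (that opponents' contributions weakly shrink one's best response, i.e.\ strategic substitutes) does not follow from concavity and monotonicity alone: it requires a decreasing-differences/submodularity property of the payoff in $(w_{G_k},w_{G_k^c})$, which an arbitrary differentiable mechanism $\mec^{(k)}$ need not satisfy. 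So your bridge to a uniform $R$ is unsupported by the stated assumptions and could fail, e.g.\ for a mechanism under which outside data raises the marginal value of own data. The paper avoids this claim entirely by asserting directly from the coercivity hypothesis that $w$ may be constrained to a compact convex subset; the cleaner repair of your argument is in that spirit: fix a candidate box, note that opponents' profiles then range over a compact set, and obtain a bound on best responses that is uniform over that set (continuity of the best response on a compact set, or a coercivity estimate in which the opponents' total mass enters only additively), choosing the box self-consistently so that restricted and unrestricted best responses coincide at the fixed point --- the consistency check you already perform at the end of Step 3. With Step 2 replaced by such a uniform-coercivity argument, your proof goes through and is, if anything, more explicit than the paper's.
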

We present an additional sufficiency result for the existence of pure Nash equilibrium in Proposition~\ref{prop:existence_ne_2}, wherein the required conditions can be easily verified. Notably, the commonly employed optimality criteria listed in Section~\ref{sec:experiment_uncertainty} and the proposed mechanism outlined in Section~\ref{sec:data_max} satisfy all the required conditions stated in Proposition~\ref{cla:existence_ne} and Proposition~\ref{prop:existence_ne_2}. 
In the rest of the paper, we will analyze and design mechanisms with unique and desirable Nash equilibria.

\section{Efficiency and Free-Riding in Standard Federated Learning}\label{sec:fed_efficient}

The first question of interest in principal-agent experiment design is the efficiency of the mechanism. By classic optimal experiment design, a design measure $w = (w_{G_1},\dots,w_{G_K})$ is efficient for optimality criterion $f$ if $w$ is proportional to the optimal design measure $\pi^\ast = \arg \max f(\M(\pi)^{-1})$, for $\pi \in \Delta(\X)$.

\subsection{Incentive compatible efficiency}

In this section, we explore the conditions under which the standard federated learning mechanism which always sets $\mec^{(k)}_{\fed}(w) = w$ is efficient. We establish that the D-criterion is the only criterion among standard criteria for which the federated learning mechanism is efficient.

\begin{definition}[Incentive-compatibly efficient]
A mechanism $\mec$ is \emph{incentive-compatibly efficient} for a criterion $f$, if for any choice of design spaces $(\X_k)_{k\in [K]}$, all strategic responses $w^\ast$ are efficient designs for criterion $f$ and satisfy $w^\ast \propto \pi^\ast$.    
\end{definition}

\begin{proposition}\label{prop:fed_efficient_allocation}
Suppose $c^{(1)} = \cdots = c^{(K)} = c \in \R_+$. Then, among all optimality criteria, the federated learning mechanism ($\mec^{(k)}_{\fed}(w) = w$) is efficient only for the D-criterion. More precisely, 
\begin{enumerate}
    \item When all agents $k$ use criterion $f^{(k)}_D$, the agent's strategic response is the design given by $(\frac{d}{c} \cdot \pi^\ast_{G_k})_{k \in [K]}$, where $\pi^\ast \in \arg \max_{\pi \in \Delta(\X)} f_D(\M(\pi)^{-1})$. 
    \item For every other standard criteria (E, A, V, or G), there exists a design space $\X$ such that federated learning mechanism is not efficient.
\end{enumerate}
\end{proposition}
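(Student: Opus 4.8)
The plan is to work entirely with the federated mechanism $\mec^{(k)}_{\fed}(w)=w$, under which each agent $k$ observes the full design measure and maximizes $f^{(k)}(\M^{(k)}(w)^{-1}) - c\sum_{i\in G_k}w_i$. Since the local D-criterion $w\mapsto -\log\det({A^{(k)}}^\top\M(w)^{-1}A^{(k)})$ is concave in $w$ (being the composition of the concave subset/local D-optimality functional with the linear map $w\mapsto\M(w)$), and the cost is linear, each agent's objective is concave; hence a design $w^\ast$ is a pure Nash equilibrium iff every coordinate satisfies the stationarity/KKT condition. My strategy for both parts is therefore to reduce ``efficiency'' to comparing the per-coordinate first-order conditions of the game with the global optimality (equivalence-theorem) condition for the criterion in question.

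For part 1 (D-criterion) I would compute, for $i\in G_k$,
$\partial_{w_i}\big[-\log\det({A^{(k)}}^\top\M(w)^{-1}A^{(k)})\big] = x_i^\top\M(w)^{-1}A^{(k)}({A^{(k)}}^\top\M(w)^{-1}A^{(k)})^{-1}{A^{(k)}}^\top\M(w)^{-1}x_i$.
The crux is a \emph{cancellation lemma}: since every $x_i\in\X_k$ lies in $\mathrm{range}(A^{(k)})$, writing $x_i=A^{(k)}z_i$ with $z_i={A^{(k)}}^\top x_i$ collapses the projected quadratic form to $x_i^\top\M(w)^{-1}x_i$. Evaluating at the candidate $w=\frac{d}{c}\pi^\ast$, so that $\M(w)^{-1}=\frac{c}{d}\M(\pi^\ast)^{-1}$, the stationarity condition becomes exactly $x_i^\top\M(\pi^\ast)^{-1}x_i = d$ at support points and $\le d$ elsewhere. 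This is precisely the Kiefer--Wolfowitz equivalence theorem characterizing the global D-optimal $\pi^\ast$, so $\frac{d}{c}\pi^\ast$ is a Nash equilibrium; concavity together with the uniqueness hypothesis of Proposition~\ref{cla:existence_ne} upgrades this to the unique strategic response. I would separately check individual rationality, which holds because the federated mechanism only adds free outside information at no extra cost to any agent.

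For part 2 I would exhibit, for each of E, A, V, G, a design space on which the equilibrium is not proportional to $\pi^\ast$. The clean construction makes every point its own agent with $G_k=\{k\}$ and all $x_k$ of unit norm, so each local space is one-dimensional, $A^{(k)}=x_k$, and $\M^{(k)}(w)^{-1}=q_k:=x_k^\top\M(w)^{-1}x_k$ is a scalar. Each criterion's local value is then a strictly decreasing function $h(q_k)$ of $q_k$ (indeed E, A, V, G all give $h(q_k)=-q_k$, while D gives $-\log q_k$). Using $\partial_{w_k}q_k=-q_k^2$, the first-order condition $-q_k^2\,h'(q_k)=c$ is an equation in $q_k$ alone, forcing $q_k$ to a common constant across contributing agents. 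Writing $w=s\pi$ and using $\sum_i\pi_i\,x_i^\top\M(\pi)^{-1}x_i=\tr(I_d)=d$ pins that constant so that the equilibrium satisfies the D-optimality conditions, i.e.\ $w^\ast\propto\pi_D^\ast$ \emph{regardless of which criterion is used}. Choosing $\X$ so that the $f$-optimal design differs from $\pi_D^\ast$ (standard for E, A, V, and G) then makes the federated equilibrium inefficient for $f$.

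The main obstacle is twofold. In part 1, the projection-cancellation step relies on $x_i\in\mathrm{range}(A^{(k)})$ together with $A^{(k)}{A^{(k)}}^\top$ being the exact projection onto $\mathrm{span}(\X_k)$, so the pseudoinverse must be handled carefully, and one must confirm $\M(\frac{d}{c}\pi^\ast)$ is nonsingular (guaranteed since D-optimality forces full rank). In part 2, the delicate point is not the symmetric single-point computation but verifying that the common-$q_k$ fixed point genuinely coincides with the D-optimality KKT system (the scaling argument that fixes the constant at $d$) and exhibiting concrete design spaces where $\pi_f^\ast\neq\pi_D^\ast$ for each of the four criteria; securing concavity and uniqueness of the local D-criterion, so that the KKT characterization is valid throughout, is the remaining technical ingredient.
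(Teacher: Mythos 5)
Your part~1 is essentially the paper's own argument (the gradient of $-\log\det({A^{(k)}}^\top\M(w)^{-1}A^{(k)})$ collapses to $\langle x_ix_i^\top,\M(w)^{-1}\rangle$ via the projection identity $A^{(k)}{A^{(k)}}^\top x_i=x_i$, and Theorem~\ref{thm:equivalence_kiefer_wolfowitz} certifies stationarity at $\tfrac{d}{c}\pi^\ast$), but your part~2 has a genuine gap in the G-criterion case. In your singleton-agent construction every local criterion collapses to $-q_k$ and, as you correctly argue, any equilibrium normalizes to a design satisfying $x_k^\top\M(\hat\pi)^{-1}x_k=d$ on its support and $\le d$ off it, i.e.\ a D-optimal design. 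But by the very Kiefer--Wolfowitz theorem you invoke, the \emph{global} G-optimal design coincides with the D-optimal design ($\min_{\pi}\max_{x\in\X}x^\top\M(\pi)^{-1}x$ is attained exactly at D-optimal $\pi$, with value $d$), so no choice of $\X$ can make $\pi^\ast_G\neq\pi^\ast_D$; within your construction the federated mechanism is in fact \emph{efficient} for the G-criterion in every instance, and your final step fails. The inefficiency for G has to come from an agent holding \emph{several} points, so that her local max-criterion distorts her internal allocation: the paper takes $\X=\{e_1,e_2,e_3\}$ with $G_1=\{1\}$, $G_2=\{2,3\}$, where agent~2 maximizes $-\max\{w_2^{-1},w_3^{-1}\}-c(w_2+w_3)$, yielding the equilibrium $w_1=c^{-1/2}$, $w_2=w_3=(2c)^{-1/2}$, which is not proportional to the uniform design that is simultaneously D- and G-optimal on this space.

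Two further points need repair. First, your unit-norm restriction (needed so that the E/A local values equal $-q_k$ with a common stationary constant) is itself constraining: for unit vectors in $\R^2$ one has $\tr\,\M(\pi)=1$, hence $\tr\,\M(\pi)^{-1}=1/\det\M(\pi)$ and $\lambda_{\min}(\M(\pi))$ is monotone in $\det\M(\pi)$, so A-, E- and D-optimal designs provably coincide on \emph{every} such planar space; your separating examples for A and E must therefore be exhibited in $d\ge3$ among configurations where $\M=I/d$ is infeasible --- plausible, but not the ``standard'' off-the-shelf step you assert (V does separate already in the plane, and there your unified scheme is arguably cleaner than the paper's measure-zero genericity argument). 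Second, in part~1 your uniqueness step is invalid as stated: Proposition~\ref{cla:existence_ne} is an existence theorem whose \emph{hypothesis} is uniqueness of best responses, and per-agent concavity does not imply a unique Nash equilibrium. The statement requires \emph{every} strategic response to be a scaled D-optimal design, and the correct argument is precisely the one you deploy in part~2: at any equilibrium $\wt w$, first-order conditions force $\langle x_ix_i^\top,\M(\wt w)^{-1}\rangle=c$ on $\supp(\wt w)$ and $\le c$ elsewhere; pairing against $\wt w$ gives $d=c\,\|\wt w\|_1$, and Kiefer--Wolfowitz then shows $\wt w/\|\wt w\|_1$ is D-optimal. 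Finally, your individual-rationality claim (``free outside information at no extra cost'') is exactly the content of the paper's Lemma~\ref{lem:incentive_compatibility}, the Loewner bound ${A}^\top M^{-1}A\preceq({A}^\top M_0A)^{-1}$ for $M\succeq M_0$ proved by a Schur-complement argument; it must be stated and proved, not asserted.
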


When each agent incurs the same marginal cost for sampling data, differences in efficiency can be attributed to the agents' data-generation capacities rather than variations in data acquisition costs. 
This setup allows for a fair comparison among agents and serves as the natural framework for studying efficiency. 
Our result implies that D-criterion is the only criterion that aligns the interest of each agent with the statistical efficiency of the multi-agent system. Therefore, it is the most suitable for experiment design problems involving multiple agents. 

\begin{remark}[Efficiency of D-optimality]
    That D-optimality uniquely satisfies incentive-compatible efficiency is remarkable. Numerous reviews and textbooks compare and contrast the different criteria but fail to identify a single best one~\cite{chaloner1995bayesian,fedorov1997model,pukelsheim2006optimal,atkinson2007optimum,goos2011optimal}. In fact, the popularity of D-optimality stemmed from its perceived equivalence to G-optimality, while being easier to optimize. The multi-agent perspective provides a novel lens with which to distinguish them and recommend the D-criterion over the rest. However, a note of caution is warranted---these results hold with our specific linear cost model. With different cost functions, it is possible that the conclusions differ.
\end{remark}

\begin{remark}[Invariance to linear transformation]
We remark that another desirable property of the D-criterion is the invariance to linear transformation. 
Our framework assumes prior knowledge of $A^{(k)}$ or requires its truthful reporting. 
This flexibility may potentially affect the incentives of strategic agents and raise concerns regarding statistical efficiency. Moreover, in practical collaborative learning scenarios~\cite{xu2023fair}, it is common for agents to deploy the model on a target domain that differs from the source design space $\X_k$. In all such cases, the local information matrix in Eq.~\ref{eq:def-local-information} may undergo a domain shift to
\begin{align}\label{eq:domain_shift}
    \M^{(k)}(w) = \left(({A^{(k)}}T)^\top\M(w)^{\dagger}A^{(k)}T\right)^{-1}
\end{align}
for some linear transformation $T \in \R^{r_k \times r_k}$. 
However, since the D-criterion only changes by a constant additive factor after linear transformation, the incentive of agent $k$ will be unaffected after applying a linear transformation of $A^{(k)}$. This indicates that the strategic responses are independent of the domain shift in the form of Eq.~\eqref{eq:domain_shift}, and thus incentive compatible efficiency is preserved. This property can be viewed as multi-dimension version of scale invariance in the Nash bargaining solution \cite{nash1950bargaining}. 
\end{remark}

The above results leave the question of efficiency under heterogeneous costs. The standard federated learning does not suffice any longer, and we instead require non-trivial mechanisms. We defer a discussion of this issue to Appendix~\ref{sec:efficient_allocation}.

\subsection{Free-riding behavior}

Although the federated learning mechanism $\mec_{\fed}$ achieves efficiency in the multi-agent system, it can lead to unfair Nash equilibria in which some agents contribute many fewer data points than others. This phenomenon, known as free-riding, is highly undesirable in federated learning \cite{baumol2004welfare,choi2019contributors,sarikaya2019motivating,lin2019free,ding2020incentive,sim2020collaborative,xu2021gradient}. We illustrate two possible cases where agents typically gain more utility by free-riding.
\begin{example}[Free-riding due to data diversity]
Consider a principal-agent experiment design problem where one agent possesses a data set with high diversity, such that her design space covers the design space of the other agent. In such cases, it can be demonstrated that the second agent will engage in free-riding behavior at a pure Nash equilibrium. We establish the following result to formalize this scenario.
\begin{proposition}\label{prop:free_riding_diversity}
Suppose agent $k$'s design space $\X_k$ and agent $l$'s design space $\X_l$ satisfy the condition 
$\{x_ix_i^\top: i \in G_l\} \subset \left\{\sum_{i \in G_k}\alpha_i x_ix_i^\top: \alpha \in \R^{|G_k|}_+, \sum_{i \in G_k} \alpha_i < 1, \right\}$.  Then, in any pure Nash equilibrium, $w_{G_l} = 0$.
\end{proposition}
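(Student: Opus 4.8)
The plan is to argue by contradiction, treating agents $k$ (the diverse ``provider'') and $l$ and exploiting the first-order conditions of their two best-response problems simultaneously. Since $\mec_{\fed}$ is the identity, each agent's utility is concave and differentiable in her own block, so at a pure Nash equilibrium $w^\ast$ the KKT conditions read: for the marginal information values $\langle \nabla^{(k)}, x_jx_j^\top\rangle \le c$ ($j\in G_k$) and $\langle \nabla^{(l)}, x_ix_i^\top\rangle \le c$ ($i\in G_l$), with equality whenever the corresponding weight is positive. Here $\nabla^{(k)},\nabla^{(l)}$ are the gradients of the two information objectives with respect to the full matrix $\M$, evaluated at $\M(w^\ast)$, and both are p.s.d. by monotonicity of the criterion. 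I would first record that the covering hypothesis forces $\mathrm{span}(\X_l)\subseteq\mathrm{span}(\X_k)=:V$ (any $v\perp\mathrm{span}(\X_k)$ annihilates each $x_ix_i^\top$, hence each $x_i$), and that at equilibrium $\M(w^\ast)$ is nonsingular on $V$, since agent $k$'s own utility is finite only if her local information matrix is invertible.

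Now suppose for contradiction that $w_{G_l}^\ast\neq 0$, and pick $i_0\in G_l$ with $w_{i_0}^\ast>0$. Agent $l$'s stationarity gives $\langle\nabla^{(l)},x_{i_0}x_{i_0}^\top\rangle=c$. Writing $x_{i_0}x_{i_0}^\top=\sum_{j\in G_k}\alpha_j\, x_jx_j^\top$ with $\alpha\ge 0$ and $\sum_{j}\alpha_j<1$ (the covering hypothesis) and using linearity of $\langle\nabla^{(l)},\cdot\rangle$,
\[
c=\sum_{j\in G_k}\alpha_j\,\langle\nabla^{(l)},x_jx_j^\top\rangle .
\]
The crux is the cross-agent comparison $\langle\nabla^{(l)},x_jx_j^\top\rangle\le\langle\nabla^{(k)},x_jx_j^\top\rangle$ for every $j\in G_k$: informally, under the same criterion an extra sample at $x_j$ is valued at least as much by the agent tracking the whole space $V$ as by the one tracking only the subspace $\mathrm{span}(\X_l)$. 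Granting this, agent $k$'s KKT inequalities give $\langle\nabla^{(k)},x_jx_j^\top\rangle\le c$, hence $\langle\nabla^{(l)},x_jx_j^\top\rangle\le c$ for all $j\in G_k$, so that $c=\sum_{j}\alpha_j\langle\nabla^{(l)},x_jx_j^\top\rangle\le c\sum_{j}\alpha_j<c$, a contradiction. Thus $w_{G_l}^\ast=0$.

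The main obstacle is establishing the comparison lemma, which I would prove by computing both gradients on $V$. Using $\tfrac{d}{dt}\M_V^{-1}=-\M_V^{-1}x_jx_j^\top\M_V^{-1}$ and the restriction map $B:=A^{(l)\top}\M_V^{-1}A^{(l)}$, one obtains $\langle\nabla^{(k)},x_jx_j^\top\rangle=-\tilde x_j^\top\nabla f(\M_V^{-1})\tilde x_j$ and $\langle\nabla^{(l)},x_jx_j^\top\rangle=-(A^{(l)\top}\tilde x_j)^\top\nabla f(B)(A^{(l)\top}\tilde x_j)$, where $\tilde x_j=\M_V^{-1}x_j$. For the D-criterion this reduces to $x_j^\top\M_V^{-1}x_j$ versus $(A^{(l)\top}\tilde x_j)^\top B^{-1}(A^{(l)\top}\tilde x_j)$, and the inequality follows because $\M_V^{-1/2}A^{(l)}(A^{(l)\top}\M_V^{-1}A^{(l)})^{-1}A^{(l)\top}\M_V^{-1/2}$ is an orthogonal projection, hence $\preceq I$; for the A-criterion it reduces to the norm contraction of the orthogonal projection onto $\mathrm{span}(\X_l)$. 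For a general monotone criterion I expect the same conclusion via the monotonicity of $f$ under the restriction map, and this is the step I would spend the most care on.

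Two remarks complete the plan. First, the contradiction $c<c$ uses equal marginal costs $c^{(k)}=c^{(l)}=c$ (as assumed throughout this section); the argument in fact only needs $c^{(l)}\ge c^{(k)}$, which is consistent with the observation that a much cheaper agent $l$ could rationally contribute even when covered. Second, I would invoke the uniqueness of each best response (Proposition~\ref{cla:existence_ne}) to ensure the stationary point is the genuine equilibrium action, so that $w_{G_l}^\ast=0$ is forced rather than merely payoff-equivalent to a positive contribution. The economic content is exactly the free-riding/crowding-out intuition: because the ``$\sum_j\alpha_j<1$'' condition makes each of agent $l$'s samples a strictly diluted version of information agent $k$ already supplies more efficiently, agent $l$'s marginal value at her own conditions is pushed strictly below cost at equilibrium.
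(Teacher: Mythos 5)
Your proposal is correct and follows the same skeleton as the paper's proof: argue by contradiction at a pure Nash equilibrium, combine both agents' first-order conditions with the covering decomposition $x_{i_0}x_{i_0}^\top=\sum_{j\in G_k}\alpha_j x_jx_j^\top$, and use $\sum_j \alpha_j<1$ (with $c>0$) to force the contradiction $c<c$. The one substantive difference is how the crux is packaged. You route through a one-sided cross-agent comparison $\langle\nabla^{(l)},x_jx_j^\top\rangle\le\langle\nabla^{(k)},x_jx_j^\top\rangle$, proved via the contraction $\M(w)^{-1/2}A\bigl(A^\top\M(w)^{-1}A\bigr)^{-1}A^\top\M(w)^{-1/2}\preceq I$. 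The paper's Lemma~\ref{lem:log_det_gradient} instead shows that under the D-criterion an agent's marginal value at her \emph{own} conditions is exactly $x_j^\top\M(w)^{-1}x_j$: the projection computation you describe collapses to equality whenever $x_j\in\mathrm{range}(A^{(k)})$, since $A^{(k)}(A^{(k)})^\top x_jx_j^\top=x_jx_j^\top$. Consequently both agents' KKT systems are expressed against the single matrix $\M(\wt w)^{-1}$ (equal to $c$ at agent $l$'s active points, at most $c$ on all of $G_k$), and the covering sum closes the argument with no comparison lemma at all; your detour is sound but strictly more work than necessary, and your $\M_V$ notation should be replaced by the global $\M(w^\ast)^{-1}$ appearing in the local-information gradient (the inverse of the compression of $\M$ to $V$ is not the compression of $\M(w)^{-1}$, so as written the formula for agent $k$'s marginal value is off, though the equality case of your own projection bound repairs it). Two smaller points: your worry about extending the comparison to general monotone criteria is moot, since the proposition (like the paper's proof) lives in the D-criterion setting of Section~\ref{sec:fed_efficient}; and the appeal to uniqueness of best responses is unnecessary, because concavity in $w_{G_k}$ (also from Lemma~\ref{lem:log_det_gradient}) makes the first-order conditions necessary at any pure Nash equilibrium, which is all the contradiction requires. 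Your observation that only $c^{(l)}\ge c^{(k)}$ is needed is accurate and mildly sharpens the paper's implicit homogeneous-cost assumption.
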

\end{example}

\begin{example}[Free-riding due to cost heterogeneity]
Consider another scenario where strategic agents with higher marginal costs may engage in free-riding behavior. Intuitively, in equilibrium, an agent with a lower marginal cost experiences a higher marginal increase in utility by sampling more data. If another agent possesses equal experimental capacity but at a higher cost, she is expected to engage in free-riding at a pure Nash equilibrium. 
More precisely, we have the following result.
\begin{proposition}\label{prop:free_riding_cost}
Suppose $\X_k = \X_l$ for some $k \neq l$ and $c^{(k)} < c^{(l)}$. Then, in any pure Nash equilibrium, we have $w_{G_l} = 0$.
\end{proposition}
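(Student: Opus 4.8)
The plan is to exploit the symmetry created by $\X_k=\X_l$: both agents draw from the same points and, under the federated mechanism ($\mec^{(k)}(w)=w$), both see the full information matrix $\M(w)$. First I would reduce both agents' information values to a single common function of their \emph{aggregate} contribution. Since $\X_k=\X_l$, there is a bijection between $G_k$ and $G_l$ pairing identical points $x_{p_j}=x_{q_j}=:z_j$; writing $\M_{-kl}:=\sum_{i\notin G_k\cup G_l}w_ix_ix_i^\top$ and $s_j:=(w_{G_k})_j+(w_{G_l})_j$, the two agents' samples enter the global matrix only through the sum,
\[
\M(w)=\M_{-kl}+\textstyle\sum_j s_j\,z_jz_j^\top .
\]
Moreover $A^{(k)}=A^{(l)}=:A$ (projection onto $\mathrm{span}(\X_k)=\mathrm{span}(\X_l)$) and $f^{(k)}=f^{(l)}=:f$, since the criterion depends on the design space only through $\X_k=\X_l$. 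Hence both information values equal one concave function $\phi(s):=f\big(A^\top\M(w)^\dagger A\big)$ of $s=w_{G_k}+w_{G_l}$, and the utilities become $u^{(k)}=\phi(w_{G_k}+w_{G_l})-c^{(k)}\mathbf 1^\top w_{G_k}$ and $u^{(l)}=\phi(w_{G_k}+w_{G_l})-c^{(l)}\mathbf 1^\top w_{G_l}$.

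Next I would argue by contradiction. Suppose a pure Nash equilibrium $w^\ast$ has $w_{G_l}=:b^\ast\neq 0$, write $a^\ast:=w_{G_k}$ and $s^\ast:=a^\ast+b^\ast$, and fix a coordinate $j_0$ with $b^\ast_{j_0}>0$. Both best-response conditions reduce to single-variable concave maximizations in $s_{j_0}$, because increasing $(w_{G_k})_{j_0}$ or $(w_{G_l})_{j_0}$ each moves $s_{j_0}$ at unit rate. For agent $l$ the optimum in $(w_{G_l})_{j_0}$ is interior (its value $b^\ast_{j_0}$ is strictly positive), so the first-order condition gives $\partial_{s_{j_0}}\phi(s^\ast)=c^{(l)}$. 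For agent $k$, whether $(w_{G_k})_{j_0}$ is interior or at the boundary $0$, the KKT condition gives $\partial_{s_{j_0}}\phi(s^\ast)\leq c^{(k)}$. Combining these yields $c^{(l)}=\partial_{s_{j_0}}\phi(s^\ast)\leq c^{(k)}<c^{(l)}$, a contradiction; hence $w_{G_l}=0$.

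The main obstacle is the combination of the reduction in the first step with the differentiability required in the second. I must verify that $\phi$ is well-defined, concave, and differentiable at $s^\ast$: concavity is the standing assumption built into the Nash-equilibrium definition, while differentiability follows from differentiability of $f$ (the same hypothesis invoked for existence in Proposition~\ref{cla:existence_ne}) together with smoothness of $M\mapsto A^\top M^\dagger A$ at a point where $\M^{(k)}(w^\ast)$ is nonsingular, which holds at any equilibrium of finite utility. If one instead allows nondifferentiable criteria such as E or G, the first-order equalities above must be replaced by one-sided directional-derivative inequalities, and a kink of $\phi$ along coordinate $j_0$ at $s^\ast$ (with left derivative $\geq c^{(l)}$ and right derivative $\leq c^{(k)}$) is no longer excluded by concavity alone; this is precisely where differentiability of the criterion is essential, so I would state the result under that assumption.
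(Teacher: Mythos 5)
Your proof is correct and follows essentially the same route as the paper's: the paper likewise derives, from first-order/KKT optimality at a hypothesized equilibrium with $\wt w_{G_l}\neq 0$, that $\langle x_jx_j^\top,\M(\wt w)^{-1}\rangle = c^{(l)}$ for $j\in\supp(\wt w_{G_l})$ while $\langle x_ix_i^\top,\M(\wt w)^{-1}\rangle \le c^{(k)}$ for all $i\in G_k$, yielding the identical contradiction $c^{(l)}\le c^{(k)}<c^{(l)}$, where your $\partial_{s_{j_0}}\phi(s^\ast)$ is exactly this marginal value, computed in the paper via Lemma~\ref{lem:log_det_gradient} for the D-criterion. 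Your aggregate-variable repackaging via $\phi(s)$ and the closing caveat about nondifferentiable criteria are cosmetic refinements consistent with the paper's implicit restriction to the differentiable D-criterion under the federated mechanism.
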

\end{example}
These examples highlight situations where agents have incentives to free-ride due to factors such as data diversity or cost disparities. Such behaviors can undermine the fairness and collaboration within the multi-agent system. In the subsequent sections, we delve into the analysis of free-riding behaviors and propose mechanisms to mitigate these issues.

\section{Information Maximization}\label{sec:data_max}

In this section, we address the second question posed in the introduction. 
Adopting information-theoretic concepts~\cite{kay1993fundamentals,cover1999elements}, the total information is quantified by $\log\det \M(w)$, which is proportional to the negative of differential entropy of $\hat{\theta}$ for standard Gaussian errors. 
Building upon Proposition~\ref{prop:fed_efficient_allocation}, we assume that every agent $k$ uses the D-criterion $f^{(k)}_D$ throughout this section. It is worth noting that this choice is also compatible with our information-theoretic considerations since maximizing the D-criterion is equivalent to minimizing the differential entropy of $\hat{\theta}^{(k)}$.

\subsection{Maximum possible information by rational agents}

To achieve information maximization, we need to first understand what is the maximum information that could be possibly generated by strategic agents. 
We have the following result on the maximum achievable information.
\begin{proposition}[Maximum information]\label{prop:max_possible_information}
Define 
\begin{align}\label{eq:max_object_design_linear}
w_{\max} := \arg \max_{w \in \R^n_+} \log \det \M(w),
    ~\text{s.t.}~ u^{(k)}(w) \geq v^{(k)}_\ast .
\end{align}
Then, for any mechanism $\mec$ and any strategic response $\wt w$ under $\mec$, we have $\log \det \M(\wt w) \leq \log \det \M(w_{\max})$\,.
\end{proposition}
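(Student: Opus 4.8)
The plan is to show that any strategic response $\wt w$ is dominated coordinatewise by a design that is feasible for the program defining $w_{\max}$, and then invoke optimality of $w_{\max}$ together with monotonicity of the objective. Since $\M(w)=\sum_i w_i x_ix_i^\top$ is non-decreasing in $w$ under the Loewner order, $\log\det\M(w)$ is non-decreasing in $w$; thus it suffices to produce, for any $\wt w$, a design $w'\ge\wt w$ satisfying every constraint $u^{(k)}(w')\ge v^{(k)}_\ast$. First I would record that $w_{\max}$ is well defined: the growth assumption (cf.\ Proposition~\ref{cla:existence_ne}) forces $u^{(k)}(w)\to-\infty$ as $\sum_{i\in G_k}w_i\to\infty$, so each constraint caps $\sum_{i\in G_k}w_i$; since the $G_k$ partition $[n]$ the feasible set is bounded, and it is closed because each $u^{(k)}$ is upper semicontinuous, so the maximum is attained.

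The central tool is a monotonicity lemma: if $0\le w_1\le w_2$ coordinatewise, then $f^{(k)}\big(\M^{(k)}(w_1)^{-1}\big)\le f^{(k)}\big(\M^{(k)}(w_2)^{-1}\big)$. I would prove it via the variational identity $v^\top M^\dagger v=\sup_u\big(2u^\top v-u^\top M u\big)$, valid for $v\in\mathrm{range}(M)$, which shows $v^\top\M(w)^\dagger v$ is non-increasing in $w$ for each fixed $v$; taking $v=A^{(k)}z$ gives $(A^{(k)})^\top\M(w_2)^\dagger A^{(k)}\preceq (A^{(k)})^\top\M(w_1)^\dagger A^{(k)}$, hence $\M^{(k)}(w_1)\preceq\M^{(k)}(w_2)$, and the information-monotonicity of $f^{(k)}$ (all standard criteria reward larger information, e.g.\ $\log\det$ for the D-criterion) yields the claim. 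Two corollaries follow. Because $\mec^{(k)}(w)\le w$ and the cost term is identical in $u^{(k)}$ and $u^{(k)}\circ\mec^{(k)}$, we get $u^{(k)}(w)\ge\big(u^{(k)}\circ\mec^{(k)}\big)(w)$ for every $w$. And evaluating $\M^{(k)}$ on a design supported only on $G_k$ recovers the standalone information matrix, so $v^{(k)}_\ast=\max_{w_{G_k}}u^{(k)}(w_{G_k},0)$; let $w^{\mathrm{solo}}_{G_k}$ denote a maximizer.

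With these in hand I would finish as follows. Fix a strategic response $\wt w$. For each agent $k$ with $\sum_{i\in G_k}\wt w_i>0$, individual rationality gives $(u^{(k)}\circ\mec^{(k)})(\wt w)\ge v^{(k)}_\ast$, and the first corollary upgrades this to $u^{(k)}(\wt w)\ge v^{(k)}_\ast$. Define $w'$ by leaving these coordinates unchanged and, for each free-riding agent $k$ (i.e.\ $\wt w_{G_k}=0$), setting $w'_{G_k}=w^{\mathrm{solo}}_{G_k}$, so $w'\ge\wt w$. Because the blocks $G_k$ are disjoint, augmenting a free-rider's block only enlarges $\M$ and never changes another agent's cost, so by the monotonicity lemma no contributing agent's constraint breaks; and for each augmented free-rider, $w'\ge(w^{\mathrm{solo}}_{G_k},0)$ together with the lemma and the identity $v^{(k)}_\ast=u^{(k)}(w^{\mathrm{solo}}_{G_k},0)$ gives $u^{(k)}(w')\ge v^{(k)}_\ast$. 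Hence $w'$ is feasible and
\[
\log\det\M(\wt w)\;\le\;\log\det\M(w')\;\le\;\log\det\M(w_{\max}),
\]
the first inequality from $w'\ge\wt w$ and the second from optimality of $w_{\max}$.

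The step I expect to be the main obstacle is the treatment of free-riders. The definition of a strategic response imposes individual rationality only on agents that contribute, so a free-rider's standalone utility $u^{(k)}(\wt w)$ could in principle fall below $v^{(k)}_\ast$ under an adversarial mechanism, making $\wt w$ itself infeasible for the program. The augmentation argument is exactly what sidesteps this: since we are only bounding the monotone quantity $\log\det\M$, replacing a free-rider's zero block by its standalone-optimal block can only raise the objective while restoring feasibility. The other technical point requiring care is the range/pseudo-inverse bookkeeping inside the monotonicity lemma, ensuring the compared quadratic forms lie in the correct subspace so that the relevant inverses exist.
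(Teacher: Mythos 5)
Your proof is correct, and its core coincides with the paper's: both arguments reduce the claim to showing that a strategic response is feasible for the program defining $w_{\max}$, using the constraint $\mec^{(k)}(w) \leq w$ together with monotonicity of the criterion in the design to conclude $u^{(k)}(\wt w) \geq \big(u^{(k)} \circ \mec^{(k)}\big)(\wt w) \geq v^{(k)}_\ast$, and then invoking optimality of $w_{\max}$. The paper's proof is much terser --- it asserts that every strategic response satisfies $\big(u^{(k)} \circ \mec^{(k)}\big)(\wt w) \geq v^{(k)}_\ast$ for \emph{all} $k$ and leaves the monotonicity step $u^{(k)} \geq u^{(k)} \circ \mec^{(k)}$ implicit --- whereas you add two genuine contributions. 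First, you make the monotonicity lemma explicit and handle the pseudo-inverse correctly via the variational identity $v^\top M^\dagger v = \sup_u \big(2u^\top v - u^\top M u\big)$ for $v \in \mathrm{range}(M)$, which is the right tool since Loewner anti-monotonicity of $M \mapsto M^\dagger$ fails in general when ranges differ. Second, and more substantively, you correctly observe that the paper's definition of a strategic response imposes individual rationality only on agents with $\sum_{i \in G_k} \wt w_i > 0$, so a free-rider's constraint $u^{(k)}(\wt w) \geq v^{(k)}_\ast$ need not hold (indeed $\M^{(k)}(\wt w)$ can even be singular if the other agents' data fail to cover $\mathrm{span}(\X_k)$), meaning $\wt w$ itself may be infeasible for Eq.~\eqref{eq:max_object_design_linear}; the paper's proof silently glosses over this case. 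Your augmentation --- replacing each free-rider's zero block by its standalone-optimal block $w^{\mathrm{solo}}_{G_k}$, which restores that agent's constraint at unchanged cost, preserves all other agents' constraints by monotonicity, and only increases $\log\det\M$ --- patches this cleanly, since the target quantity is monotone and the blocks $G_k$ are disjoint. So your route buys a proof that is valid under the literal definition of strategic response, at the cost of the extra lemma and the augmentation step; the paper's route buys brevity by implicitly assuming the IR constraint binds for all agents.
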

 The definition of $w_{\max}$ above maximizes over all possible contributions $w$ by the agents which satisfy individual rationality assuming full data sharing. This is a superset of strategic responses by agents since agents will additionally require a Nash equilibrium. Thus, $\log \det \M(w_{\max})$ represents the maximum possible information achievable in a setting where the agents are non-strategic and simply follow the principal's prescribed strategy.
 
 However, in reality we need to account for the agency of the agents. We first investigate if the common federated learning mechanism $\mec_{\fed}$ suffices. Unfortunately, the following proposition shows that, in general, $w_{\max}$ cannot be achieved under $\mec_{\fed}$, as there exists at least one agent who can achieve higher utility by contributing fewer samples.

\begin{proposition}[Free-riding under federated learning ]\label{prop:free_riding}
Unless $\sum_{k=1}^K r_k = d$, $w_{\max}$ is not the Nash equilibrium of the utility functions $\big(\big(u^{(k)}\circ \mec_{\fed}^{(k)}\big)\big)_{k \in [K]}$. More precisely, there exists $k \in [K]$ and $\wt w_{G_k}$ such that $\wt w_i \leq w_{\max,i},\forall i \in G_k$; $\wt w_i < w_{\max,i},\exists i \in G_k$, and
\begin{align*}
    \big(u^{(k)}\circ \mec^{(k)}_{\fed}\big)\big((\wt w_{G_k}, w_{\max,G_k^c})\big) > \big(u^{(k)}\circ \mec^{(k)}_{\fed}\big)( w_{\max})
\end{align*}
where $(\wt w_{G_k}, w_{\max,G_k^c})$ denotes the concatenation of $\wt w_{G_k}$ and $w_{\max,G_k^c}$.
\end{proposition}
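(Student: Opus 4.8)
The plan is to use $\mec_{\fed}^{(k)}(w)=w$, so that $u^{(k)}\circ\mec_{\fed}^{(k)}=u^{(k)}$, and to exhibit, whenever $\sum_k r_k>d$, a single agent who strictly profits by \emph{lowering} one of its positive contributions. Write $\M=\M(w_{\max})$, which is nonsingular at the optimum since $\log\det$ would otherwise be $-\infty$. First I would compute the marginal utility of agent $k$ in its own coordinates. Using the Schur-complement identity $\log\det\M^{(k)}(w)=\log\det\M(w)-\log\det\big(A_\perp^\top\M(w)A_\perp\big)$, where the columns of $A_\perp$ span $\mathrm{span}(\X_k)^\perp$, and observing that $A_\perp^\top x_i=0$ for every $i\in G_k$ (because $x_i\in\mathrm{span}(\X_k)$), the second term has zero derivative along $w_i$. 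Hence for $i\in G_k$,
\[
\frac{\partial u^{(k)}}{\partial w_i}(w)=x_i^\top\M(w)^{-1}x_i-c^{(k)} .
\]
This is the linchpin: an agent's own data enters its local information at exactly the global leverage score $x_i^\top\M^{-1}x_i$. For $j\neq k$ and $i\in G_k$ one instead gets $\partial_{w_i}u^{(j)}=x_i^\top P_j x_i\ge0$, where $P_j:=\M^{-1}A^{(j)}({A^{(j)}}^\top\M^{-1}A^{(j)})^{-1}{A^{(j)}}^\top\M^{-1}$ is the $\M^{-1}$-weighted projector onto $\mathrm{span}(\X_j)$.

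\textbf{Reduction via a contrapositive.} I would argue the contrapositive: if no agent has a profitable downward deviation, then $\sum_k r_k=d$. Since $w_{\max}$ maximizes $\log\det\M(w)$ and raising any $w_i$ strictly increases it, feasibility must block every increase; as raising $w_i$ for $i\in G_k$ only relaxes the constraints $u^{(j)}$ ($j\ne k$), the only possible obstruction is agent $k$'s own rationality constraint, which forces $x_i^\top\M^{-1}x_i\le c^{(k)}$ for all $i\in G_k$. If in addition no downward deviation is profitable, then by concavity $x_i^\top\M^{-1}x_i=c^{(k)}$ at every support coordinate, so $w_{\max,G_k}$ is agent $k$'s best response and $w_{\max}$ is a Nash equilibrium. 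Now the two trace (Euler) identities $\sum_i w_i\,x_i^\top\M^{-1}x_i=d$ and $\sum_i w_i\,x_i^\top P_k x_i=r_k$, together with $x_i^\top P_k x_i=x_i^\top\M^{-1}x_i$ for $i\in G_k$ (the same vanishing of $A_\perp^\top x_i$ as above) and the fact that the $G_k$ partition $[n]$, yield
\[
\sum_{k}\ \sum_{i\notin G_k} w_{\max,i}\,x_i^\top P_k x_i \;=\;\sum_k r_k-d .
\]
If $\sum_k r_k>d$ the left side is positive, so some agent $k^\ast$ receives strictly positive ``outside information''.

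\textbf{The crux and the conclusion.} The hard part is to convert this free outside information into strict slack in agent $k^\ast$'s individual-rationality constraint. I would do this with an envelope argument: let $\phi(s)$ be the best-response value of agent $k^\ast$ when the other agents' contributions are scaled by $s\in[0,1]$. Then $\phi(0)=v^{(k^\ast)}_\ast$, $\phi(1)=u^{(k^\ast)}(w_{\max})$, $\phi$ is nondecreasing (more data only improves $\M^{(k^\ast)}$ in the PSD order, and $\M^{(k^\ast)}(w_{G_{k^\ast}},0)$ is exactly the within-subspace matrix used in $v^{(k^\ast)}_\ast$), and by the envelope theorem $\phi'(1)=\sum_{i\notin G_{k^\ast}}w_{\max,i}\,x_i^\top P_{k^\ast}x_i>0$; monotonicity plus a strict increase at $s=1$ give $\phi(1)>\phi(0)$, i.e.\ $u^{(k^\ast)}(w_{\max})>v^{(k^\ast)}_\ast$. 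With this slack I would perturb agent $k^\ast$'s contribution upward (scaling $w_{G_{k^\ast}}$ up, or raising one coordinate from zero if $\sum_{i\in G_{k^\ast}}w_{\max,i}=0$): all other utilities do not decrease, agent $k^\ast$'s constraint still holds for small perturbations by slackness, yet $\log\det\M$ strictly increases, contradicting optimality of $w_{\max}$. This forces $\sum_k r_k=d$, proving the contrapositive. Unwinding it, when $\sum_k r_k>d$ some support coordinate $i_0\in G_k$ must have $x_{i_0}^\top\M^{-1}x_{i_0}<c^{(k)}$, so $\partial u^{(k)}/\partial w_{i_0}<0$ and, by concavity, the $\wt w_{G_k}$ obtained by lowering $w_{i_0}$ alone strictly raises $u^{(k)}$ -- exactly the claimed deviation. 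The main obstacles I anticipate are establishing the strictness in the envelope step (uniqueness of the best response and the boundary case where $k^\ast$ contributes nothing) and making the derivative formula fully rigorous; the equality $\sum_k r_k=d$ is precisely the degenerate regime in which every agent is information-isolated and $w_{\max}$ genuinely coincides with the Nash equilibrium.
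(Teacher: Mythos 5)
Your proposal is essentially correct in architecture but takes a genuinely different middle route from the paper's, and it shares the paper's two anchor points: the derivative formula $\partial u^{(k)}/\partial w_i = x_i^\top\M(w_{\max})^{-1}x_i - c^{(k)}$ for $i\in G_k$ (your Schur-complement derivation agrees with Lemma~\ref{lem:log_det_gradient}), and tightness of individual rationality at $w_{\max}$ via the $(1+\epsilon)$-scaling argument, which is exactly the paper's first step. Where you diverge: the paper converts the all-equalities chain $u^{(k)}(w_{\max}) \geq \bar u_k(w^{\ast}_{G_k}) \geq v_k(w^{\ast}_{G_k}) = v^{(k)}_\ast = u^{(k)}(w_{\max})$ (the middle inequality being Lemma~\ref{lem:incentive_compatibility}) into the statement that $w_{\max,G_k}$ also solves each agent's \emph{isolated} problem, then applies Theorem~\ref{thm:equivalence_kiefer_wolfowitz} to get $\|w_{\max,G_k}\|_1 = r_k/c^{(k)}$ and sums against the global trace identity $\sum_k c^{(k)}\|w_{\max,G_k}\|_1 = d$. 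You instead introduce the weighted projectors $P_k$ and use the per-agent identities $\tr[\M P_k] = r_k$ together with $x_i^\top P_k x_i = x_i^\top \M^{-1}x_i$ for $i \in G_k$ (both of which I verified) to localize the obstruction as $\sum_k\sum_{i\notin G_k} w_{\max,i}\,x_i^\top P_k x_i = \sum_k r_k - d$, so that $\sum_k r_k > d$ means some agent receives strictly positive outside information at the margin. This is a more structural explanation of why $\sum_k r_k = d$ is the degenerate regime, and your unconditional upward-blocking bound $x_i^\top\M^{-1}x_i \le c^{(k)}$ for all $i\in G_k$ (valid at $w_{\max}$ by a clean KKT argument, since raising $w_i$ only relaxes other agents' constraints) is something the paper obtains only implicitly through the Nash hypothesis.

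The one genuine gap --- which you flag but do not close --- is the strictness of the envelope step. The best-response problem is concave but generally \emph{not} strictly concave: by Lemma~\ref{lem:log_det_gradient} the negative Hessian is the Hadamard square $\big((x_i^\top\M^{-1}x_j)^2\big)_{i,j\in G_k}$, which can be singular, so the best-response set need not be a singleton; Danskin's theorem then only gives the left derivative of $\phi$ at $s=1$ as the \emph{minimum} of the partial derivative over the argmax set, and ``nondecreasing plus positive derivative at one maximizer'' does not yield $\phi(1)>\phi(0)$. The fix is available inside the paper: if $\phi(1)=\phi(0)=v^{(k^\ast)}_\ast$, your chain collapses to equalities, and the equality case of Lemma~\ref{lem:incentive_compatibility} forces $(A^{(k^\ast)})^\top\M^{-1}A^{(k^\ast)} = \big((A^{(k^\ast)})^\top M_0 A^{(k^\ast)}\big)^{-1}$ identically in the own-design; a short block computation in that lemma's SVD basis, using $A_{11}=A_{12}A_{22}^{-1}A_{21}$ for $N=\sum_{i\notin G_{k^\ast}}w_{\max,i}x_ix_i^\top$, then gives $\tr[N P_{k^\ast}]=0$, contradicting your strictly positive outside information --- so the envelope theorem can be dispensed with entirely. (Alternatively, adopt the paper's finish through Theorem~\ref{thm:equivalence_kiefer_wolfowitz}.) Finally, in the unwinding, make the Nash reconstruction explicit: since upward blocking gives $x_i^\top\M^{-1}x_i\le c^{(k)}$ for \emph{all} $i\in G_k$, not only support coordinates, equality on every support coordinate plus concavity makes $w_{\max,G_k}$ a best response for every agent, including those with $w_{\max,G_k}=0$, which is what licenses the contrapositive.
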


The condition $\sum_{k=1}^K r_k = d$ requires that the covariates in $\X_k$ for each agent $k$ form independent subspaces, and each agent cannot benefit from the data from other agents. Thus, this condition is unlikely to be encountered in the study of collaborative learning.

\subsection{Information-maximizing mechanism}

Motivated by the Proposition~\ref{prop:free_riding_diversity}, Proposition~\ref{prop:free_riding_cost} and Proposition~\ref{prop:free_riding}, we design mechanisms $(\mec^{(k)}_{\max})_{k \in [K]}$ to incentive agents to contribute $w_{\max}$ amount of data. Let $\mec^{(k)}_{\max}$ simply scale the design by a constant $\gamma_k \leq 1$:
\begin{align}\label{eq:mechanism_max_data}
    \mec^{(k)}_{\max}(w) := \gamma_k w \,,\quad \text{ for } \gamma_k^{-1} := \exp\big(\tfrac{c^{(k)}}{r_k}\cdot\sum_{i \in G_k}(w_{\max,i} - w_i)_+ \big)\,.
\end{align}
In this mechanism, agents are penalized for contributing less data than required for information maximization ($w_{\max}$).
The $k$-th agent's utility $\big(u^{(k)}\circ \mec_{\max}^{(k)}\big)(w)$ is then given by
\begin{align*}
    -\log\det \big((A^{(k)})^\top \M(w)^{\dagger} A^{(k)}\big) - c^{(k)} \sum_{i \in G_k} w_i - c^{(k)}\sum_{i \in G_k}\left(w_{\max,i} - w_i\right)_+.
\end{align*}

We have the following proposition that establishes information maximization as the unique strategic response of the information mechanism $\mec_{\max}$.

\begin{proposition}[Information maximization]\label{prop:data_max}
The information-maximization design $(w_{\max,G_k})_{k \in [K]}$ in Eq.~\eqref{eq:max_object_design_linear} is the unique strategic response of the agents to the information mechanism $\mec_{\max}$ in Eq.~\eqref{eq:mechanism_max_data}.
\end{proposition}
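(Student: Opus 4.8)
The plan is to first put the effective utility under $\mec_{\max}$ in closed form, then verify the two defining conditions of a strategic response (Nash equilibrium and individual rationality), and finally argue uniqueness. Writing $h_k(w) := -\log\det\big((A^{(k)})^\top\M(w)^{\dagger}A^{(k)}\big) = \log\det \M^{(k)}(w)$ for the local D-criterion value, I would use $\M(\gamma_k w) = \gamma_k\M(w)$ and $(\gamma_k M)^{\dagger} = \gamma_k^{-1}M^{\dagger}$ to see that scaling by $\gamma_k$ shifts the D-criterion by $-r_k\log\gamma_k^{-1}$. Since $\log\gamma_k^{-1} = \tfrac{c^{(k)}}{r_k}\sum_{i\in G_k}(w_{\max,i}-w_i)_+$, the dimension factor $r_k$ cancels and the utility reduces exactly to
\[
U_k(w) := h_k(w) - c^{(k)}\!\!\sum_{i\in G_k}\!w_i - c^{(k)}\!\!\sum_{i\in G_k}\!(w_{\max,i}-w_i)_+,
\]
as displayed in the text; the role of the $1/r_k$ factor is precisely to turn the multiplicative penalty into a linear shortfall penalty of slope $c^{(k)}$.

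Next I would record two facts about $h_j$. By Loewner monotonicity of the local D-criterion, adding data never decreases information, so $h_j$ is nondecreasing in every coordinate $w_i$; moreover a direct computation gives $\partial_i h_j(w) = x_i^\top\M^{-1}A^{(j)}\big((A^{(j)})^\top\M^{-1}A^{(j)}\big)^{-1}(A^{(j)})^\top\M^{-1}x_i\ge 0$. The crucial inequality is $\partial_i h_k(w_{\max})\le c^{(k)}$ for $i\in G_k$, which I would establish by a feasible perturbation at the constrained optimum $w_{\max}$. Increasing $w_i$ by a small $\varepsilon>0$ keeps every other agent $j\ne k$ feasible (their cost does not involve $w_i$ and $h_j$ is nondecreasing), and strictly increases the global objective $\Psi(w):=\log\det\M(w)$ because $\partial_i\Psi = x_i^\top\M^{-1}x_i>0$; if $\partial_i h_k(w_{\max})>c^{(k)}$ then $u^{(k)}$ would also increase, so agent $k$ stays feasible too, contradicting optimality of $w_{\max}$. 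This step, and in particular the reliance on information-monotonicity to preserve all individual-rationality constraints under the perturbation, is the main obstacle.

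With these facts in hand, $w_{\max}$ being a Nash equilibrium follows from concavity of $U_k$ in $w_{G_k}$. At $w_{G_k}=w_{\max,G_k}$ the shortfall penalty has superdifferential $[0,c^{(k)}]$ in each coordinate with $w_{\max,i}>0$, so the stationarity condition of the concave program is solvable by choosing the penalty supergradient $c^{(k)}-\partial_i h_k(w_{\max})$, which lies in $[0,c^{(k)}]$ exactly because $0\le \partial_i h_k(w_{\max})\le c^{(k)}$; for coordinates with $w_{\max,i}=0$ the boundary condition reduces to the same inequality. Hence $w_{\max,G_k}$ maximizes $U_k(\cdot,w_{\max,G_k^c})$. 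Individual rationality is immediate: at $w_{\max}$ all penalties vanish, so $U_k(w_{\max})=u^{(k)}(w_{\max})\ge v^{(k)}_\ast$ by the very constraint defining $w_{\max}$ in Eq.~\eqref{eq:max_object_design_linear}.

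For uniqueness I would take an arbitrary Nash equilibrium $w^\ast$ and first show $w^\ast\ge w_{\max}$ componentwise: if $w^\ast_i<w_{\max,i}$ for some $i\in G_k$, the slope of $U_k$ in $w_i$ on $[0,w_{\max,i})$ equals $\partial_i h_k(w^\ast)\ge 0$, which the unique-maximizer hypothesis of Proposition~\ref{cla:existence_ne} makes strictly positive, so agent $k$ could profitably raise $w_i$---a contradiction. Once $w^\ast\ge w_{\max}$, all penalties are zero, $U_k(w^\ast)=u^{(k)}(w^\ast)$, and individual rationality (together with monotonicity of $h_k$ for any non-participating agent) gives $u^{(k)}(w^\ast)\ge v^{(k)}_\ast$ for every $k$; thus $w^\ast$ is feasible for the program defining $w_{\max}$. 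Since $\Psi$ is strictly increasing in each coordinate and $w^\ast\ge w_{\max}$, optimality of $w_{\max}$ forces $\Psi(w^\ast)=\Psi(w_{\max})$ and hence $w^\ast=w_{\max}$. The only delicate point here is the degenerate direction $\partial_i h_k=0$ (a coordinate carrying no local information), which is excluded by the strict-concavity/unique-maximizer assumption already invoked for existence.
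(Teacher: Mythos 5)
Your proof is correct, but it verifies the equilibrium by a genuinely different mechanism than the paper. The paper never computes first-order conditions at $w_{\max}$: it runs a two-case exchange argument. For a shortfall coordinate ($\wt w_i < w_{\max,i}$, $i \in G_k$) it raises $w_i$ all the way to $w_{\max,i}$, observing that below the kink the linear cost and the penalty cancel exactly (the move is cost-free) while the log-det term strictly increases by Lemma~\ref{lem:log_det_gradient}; for an overshoot ($\wt w_{G_k} \geq w_{\max,G_k}$ with strict inequality somewhere) it invokes maximality of $w_{\max}$ in Eq.~\eqref{eq:max_object_design_linear}: the perturbed point has strictly larger $\log\det\M$, so some IR constraint must break, and by monotonicity of the other agents' utilities the violated constraint can only be agent $k$'s own, which pushes her utility below $v^{(k)}_\ast$. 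You instead extract the KKT certificate $0 \le \partial_i h_k(w_{\max}) \le c^{(k)}$ from a feasible perturbation of the constrained program and conclude by concavity plus a supergradient selection at the kink; your derivation of the crucial inequality is sound (feasibility of agents $j \neq k$ is preserved by monotonicity of $h_j$, which is the same monotonicity fact the paper uses in its overshoot case), and this route buys an explicit stationarity characterization of $w_{\max}$ that the paper leaves implicit. Your uniqueness argument also diverges: after establishing $w^\ast \geq w_{\max}$ componentwise (which coincides with the paper's Case 1), you use individual rationality of the strategic response to show $w^\ast$ is feasible for Eq.~\eqref{eq:max_object_design_linear} and conclude by strict monotonicity of $\log\det\M$, whereas the paper exhibits a profitable deviation to $w_{\max,G_j}$ for the violating agent and thereby proves the stronger statement that $w_{\max}$ is the unique pure Nash equilibrium, with no appeal to IR. Your version is weaker but suffices for the proposition as stated, since strategic responses are by definition individually rational for participants, and you correctly patch the non-participating agents via monotonicity.

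One attribution should be fixed. Strict positivity of the own-coordinate slope is not a consequence of the unique-maximizer hypothesis of Proposition~\ref{cla:existence_ne}, which is an existence result playing no role here. For $i \in G_k$ the projection identity $A^{(k)}(A^{(k)})^\top x_i = x_i$ collapses your general formula to $\partial_i h_k(w) = x_i^\top \M(w)^{-1} x_i > 0$ whenever $\M(w) \succ 0$; this is exactly the content of Lemma~\ref{lem:log_det_gradient}. Consequently the ``degenerate direction'' $\partial_i h_k = 0$ that worries you in the last sentence cannot occur on an agent's own coordinates (it can only occur for $i \notin G_k$, where it is harmless), and no strict-concavity assumption is needed anywhere in the uniqueness step.
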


\begin{remark}[Simplicity of the penalization scheme]
Proposition~\ref{prop:data_max} shows that it is indeed possible for a mechanism to achieve information $\log \det \M(w_{\max})$. This is surprising since the definition of $w_{\max}$ assumed that agents would simply follow the principal's prescribed design and not strategize. It is further remarkable that uniformly penalizing by a scalar as in Eq.~\eqref{eq:mechanism_max_data} suffices, rather than a more bespoke mechanism penalizing different direction differently. Intuitively, this simplicity is a result of the D-optimality criterion already satisfying incentive-compatible efficiency.
\end{remark}

Proposition~\ref{prop:data_max} establishes that information-maximization design, represented by the $(w_{\max,G_k})_{k \in [K]}$, is the unique strategic response to the mechanism $\mec_{\max}$. This thus addresses the second question posed in the introduction and provides a way for a federated learning community to maximize data creation from multiple autonomous parties and generate positive societal impact~\cite{graham2015disparities,albain2009racial,hudson2020rights,chediak2020sharing,zhan2021survey,shi2021survey}. 
In Appendix~\ref{sec:further_discussion}, we present a further discussion of this impact, examining notions of fairness to ensure that none of the participating agents will be taken advantage of, and investigating the inefficiency of the system by quantifying the degradation in the collected information caused by the self-interested participants. 

\begin{remark}[Implementation via early stopping]
    The information-maximizing mechanism (Eq.~\ref{eq:mechanism_max_data}) is quite simple: it scales the design measure by a scalar $\gamma_k \in [0,1]$. This corresponds to randomly sub-sampling a $\gamma_k$ fraction of the data to train $\hat\theta^{(k)}$. Instead of implementing this via $K$ parallel sub-samplings and federated learning runs, a more convenient approximation may be achieved using early stopping. Intuitively, early stopping also effectively subsamples data. During training of the global model $\hat\theta$ for a total of $T$ rounds, the model at the $\gamma_k T$ round is returned to agent $k$ as its $\hat\theta^{(k)}$.
\end{remark}

\subsection{Simulation}
\label{sec:simulation}

\begin{figure}[H]\label{fig:simulation}
\centering
\begin{tabular}{cccc} 

\includegraphics[width=0.22\textwidth]{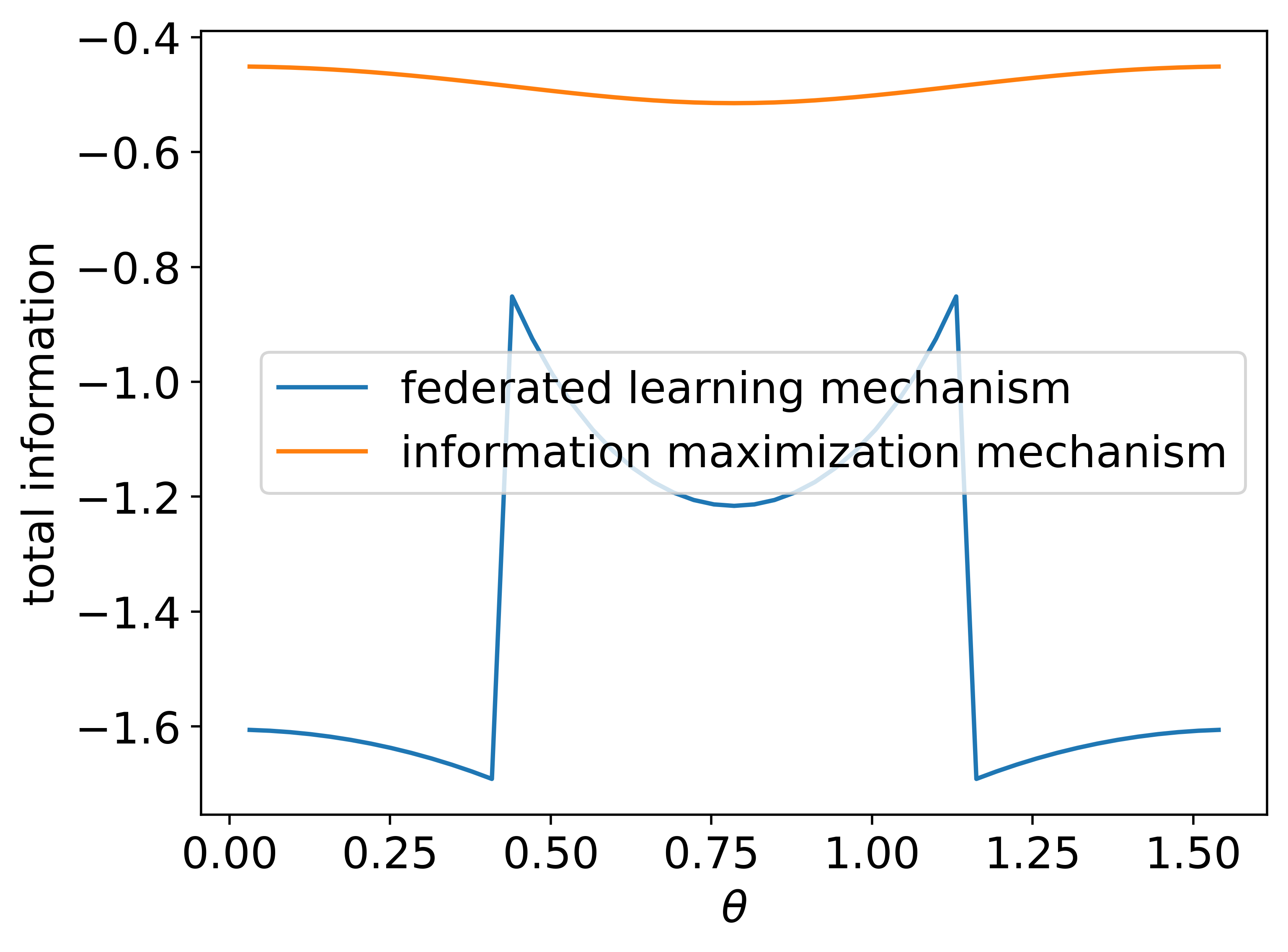} &
\includegraphics[width=0.22\textwidth]{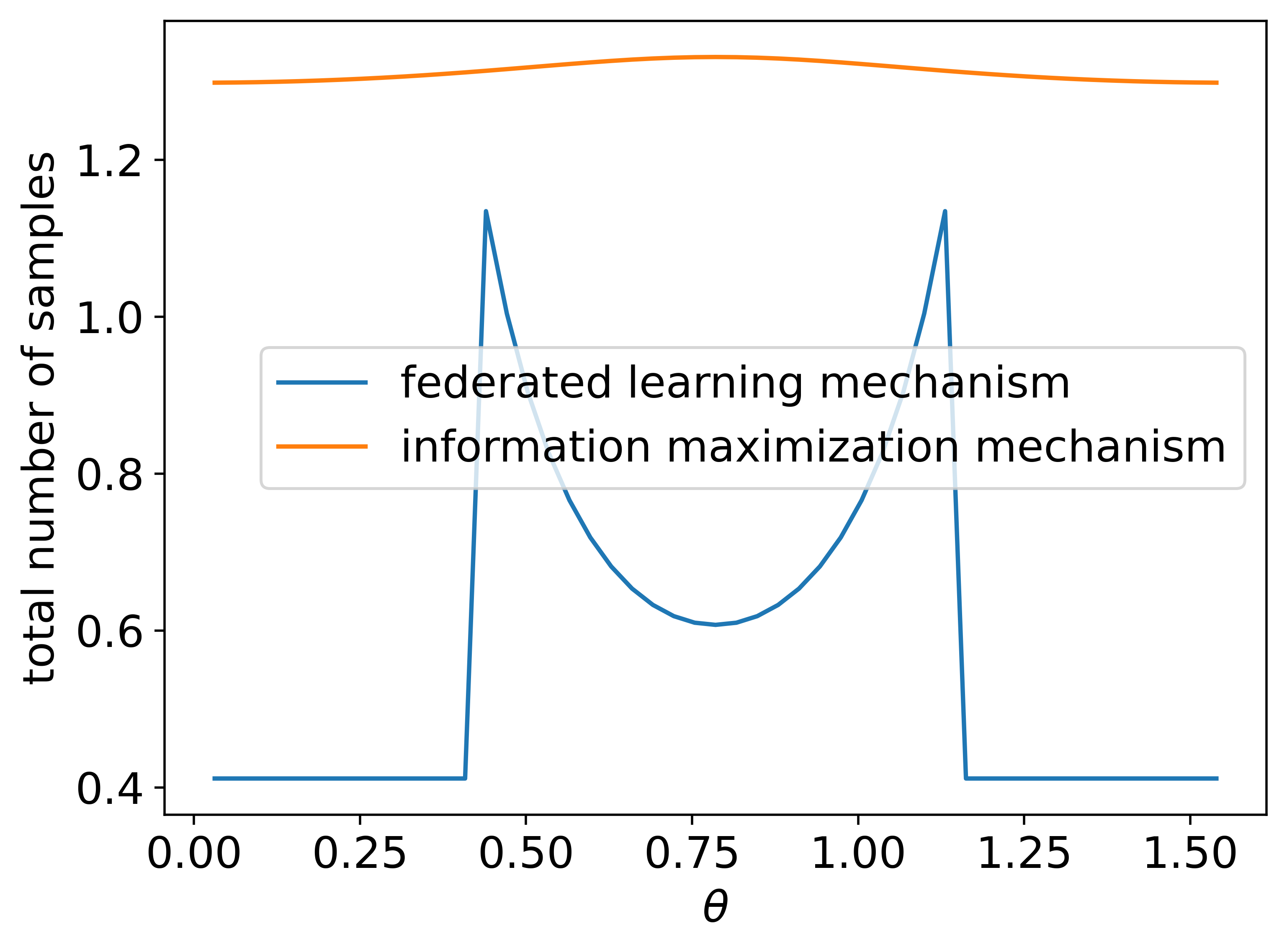} &
\includegraphics[width=0.22\textwidth]{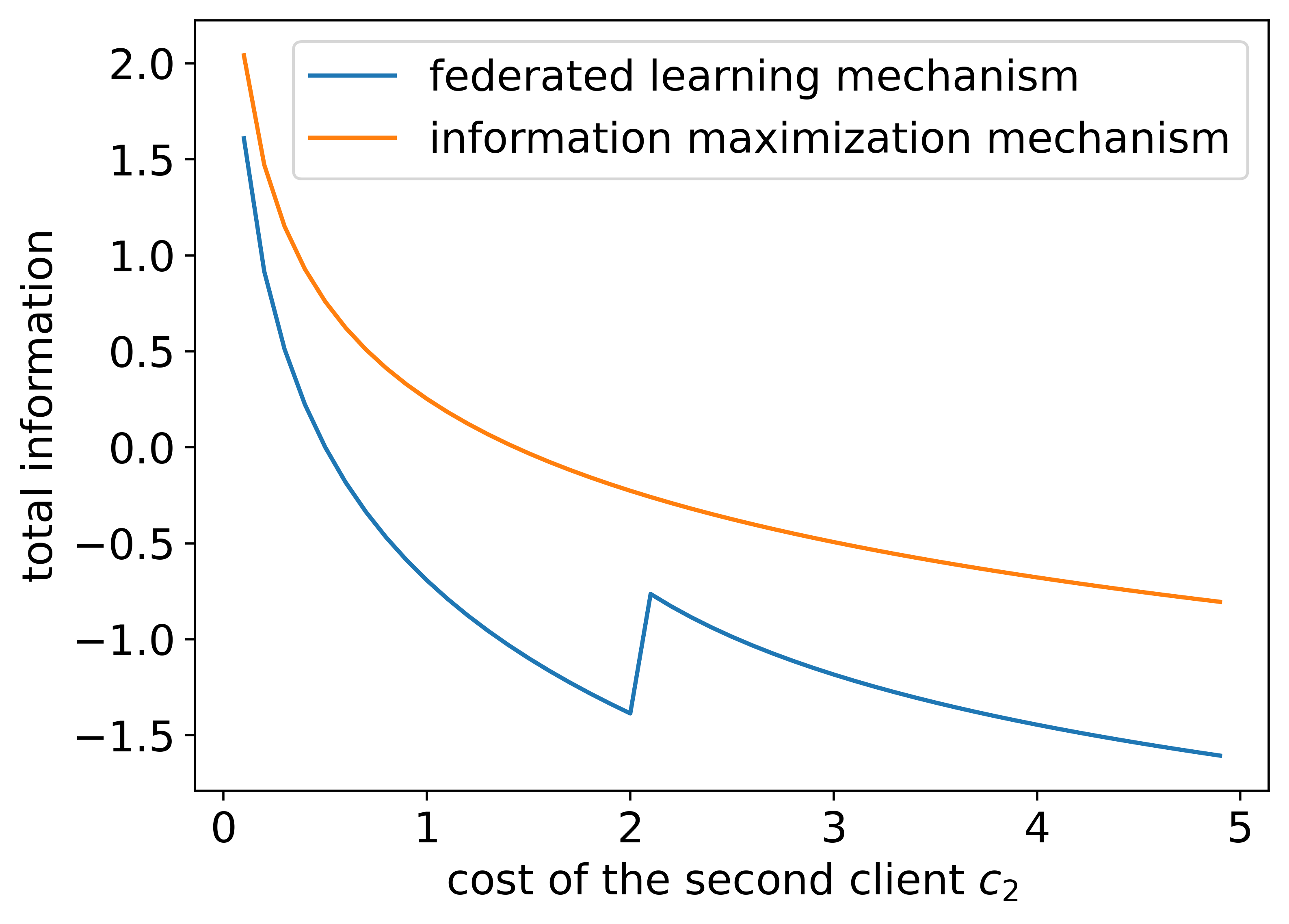}&
\includegraphics[width=0.22\textwidth]{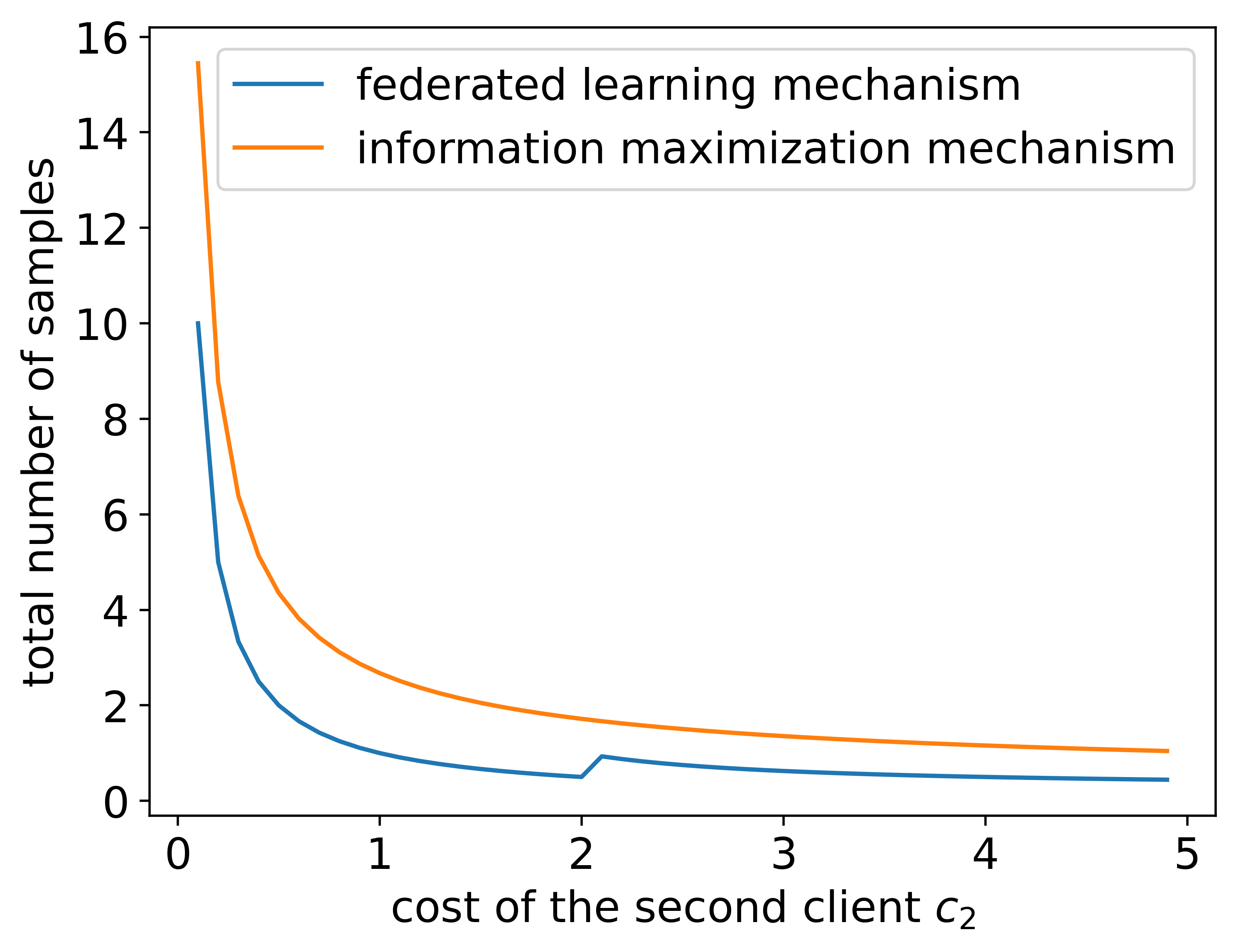}\\

\scriptsize{$(a)$} & \scriptsize{$(b)$}
&\scriptsize{$(c)$} & \scriptsize{$(d)$}
\end{tabular}

\caption{Comparison between federated learning mechanism and information-maximization mechanism for different $\theta$ and $c^{(2)}/c^{(1)}$, where (a) and (b) are plotted with fixed $c^{(1)} = 2, c^{(2)} = 3$, (c) and (d) are plotted with fixed $c^{(1)} = 2, \theta = \pi/4$. (a) total information as function of $\theta$; (b) $w_1+w_2+w_3$ as function of $\theta$; (c) total information as function of $c^{(2)}$; (d) $w_1+w_2+w_3$ as function of $c^{(2)}$. In all cases our mechanism elicits significantly more data contributions.}
\end{figure}

In this section, we conduct experiments using a toy example to illustrate the impact of the information-maximization mechanism. The example involves the following design space: $\X = \{x_1 = (\cos \theta, \sin \theta)^\top, x_2 = (1,0)^\top,x_3 = (0,1)^\top\}$, and groups $G_1 = \{1\}$ and $G_2 = \{2,3\}$. Notably, $x_2$ and $x_3$ are orthogonal, while the parameter $\theta$ governs the degree of complementarity between $x_1$ and $x_2$. 
We investigate the strategic behaviors of the federated learning and information-maximization mechanism by varying $\theta$ and $c^{(2)}/c^{(1)}$. Figure~\ref{fig:simulation} presents the total data contribution $w_1, w_2, w_3$, as well as the total information, for different values of $c^{(2)}$ and $\theta$ fixing $c^{(1)}$. These visualizations demonstrate how cost heterogeneity and data diversity influence the strategic response across various mechanisms. 
Remarkably, the information-maximization mechanism $\mec_{\max}$ yields improved data contribution and information while exhibiting a more stable behavior.

\section{Conclusions}
We have formulated the problem of principal-agent experiment design to capture the game-theoretic tensions between the principal and strategic agents in collaborative learning. 
We showed that under standard federated learning, strategic agents will adopt the optimal design strategy if and only if the D-optimality criterion is used. Additionally, we have highlighted that strategic agents often exhibit free-riding behavior, driven by factors such as data diversity and cost heterogeneity. This observation has motivated us to develop a mechanism that incentivizes strategic agents to maximize the overall information. The proposed mechanism has significant societal implications as it promotes autonomy and equity in domains such as clinical trials, collaborative cancer research, and other medical and scientific studies where optimal experiment design is heavily used. It may also be independent interest in promoting collaboration and fairness in active learning problems.

Our results come with some limitations, while opening new avenues for future research. Firstly, our framework does not analyze concrete algorithms with realistic considerations such as unknown design spaces $\X_k$. Overcoming this is an important direction of future work. Furthermore, it would be intriguing to generalize our results to mixed effect models or nonlinear models, which would broaden the scope of our analysis and uncover additional nuances in the principal-agent experiment design problem. Finally, our theoretical analysis uses a somewhat stylized model for the behavior of agents. Translating the insights gained in our work to the real world is challenging but necessary.

\bibliography{ref}
\ifdefined\isarxiv
\bibliographystyle{IEEE}
\else
\bibliographystyle{IEEE}
\fi

\newpage
\appendix

\onecolumn

\section{Optimality Criteria for Self-interested Agents}\label{sec:local_criterion}

\begin{align*}
    \text{D-criterion:}\quad &~ f^{(k)}_D(M^{-1}) = \log \det M \notag\\
    \text{G-criterion:}\quad &~ f^{(k)}_G(M^{-1}) = - \max_{x \in \X_k} \left[({A^{(k)}}^\top x)^\top M^{-1} ({A^{(k)}}^\top x)\right] \notag\\
    \text{E-criterion:}\quad &~ f^{(k)}_E(M^{-1}) = -\|M^{-1}\|_2 \\
    \text{A-criterion:}\quad &~ f^{(k)}_A(M^{-1}) = -\tr \left[M^{-1}\right] \notag\\
    \mathrm{V}_{p^{(k)}} \text{-criterion:}\quad &~ f^{(k)}_{V_{p^{(k)}}}(M^{-1}) = - \E_{x \sim p^{(k)}}\tr \left[({A^{(k)}}^\top x)^\top M^{-1}({A^{(k)}}^\top x)\right] \notag
\end{align*}
In the $\mathrm{V}_{p^{(k)}}$-criterion, $p^{(k)}$ is a distribution supported on $\X_k$. We take $p^{(k)}$ as uniform distribution over $\X_k$ when we refer to V-criterion without specifying $p^{(k)}$. Indeed, we can simulate the $\mathrm{V}_{p^{(k)}}$-criterion by repeating the elements in $\X_k$ according to $p^{(k)}$ and using V-criterion in the augmented design space. $\mathrm{V}_{p^{(k)}}$-criterion can be used to model self-interested agent $k$ that is interested in minimizing the mean squared error under distribution $p^{(k)}$.

\section{Further Discussion of Information-Maximization Mechanisms}\label{sec:further_discussion}

\subsection{Fairness}

The $\mec_{\max}$ mechanism ensures that the principal obtains the maximum possible information while preventing any agent from free-riding. Meanwhile, it is also crucial to examine notions of fairness to ensure that none of the participating agents are exploited. In order to address this, we analyze the utility of agents under the mechanism $\mec_{\max}$ and present the following result.
\begin{corollary}[Incentive Compatibility]\label{cor:incentive_compatibility}
Under mechanism $\mec_{\max}$, the strategic response $w_{\max}$ satisfies $\left(u^{(k)}\circ \mec_{\max}^{(k)}\right)\left(w_{\max}\right) = v^{(k)}_*$ for all $k \in [K]$.
\end{corollary}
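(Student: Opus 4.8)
The plan is to reduce the claim to two facts: (i) the penalty built into $\mec_{\max}$ vanishes exactly at $w_{\max}$, so that the composed utility collapses to the raw utility $u^{(k)}(w_{\max})$; and (ii) $w_{\max}$, being the maximizer in Eq.~\eqref{eq:max_object_design_linear}, makes every individual-rationality constraint $u^{(k)}(w)\geq v^{(k)}_\ast$ bind. Chaining these gives $\big(u^{(k)}\circ\mec^{(k)}_{\max}\big)(w_{\max}) = u^{(k)}(w_{\max}) = v^{(k)}_\ast$.

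First I would dispatch (i) directly. At $w=w_{\max}$ the exponent defining $\gamma_k$ in Eq.~\eqref{eq:mechanism_max_data} is $\tfrac{c^{(k)}}{r_k}\sum_{i\in G_k}(w_{\max,i}-w_{\max,i})_+ = 0$, so $\gamma_k = 1$ and hence $\mec^{(k)}_{\max}(w_{\max}) = w_{\max}$. Consequently
\[
\big(u^{(k)}\circ\mec^{(k)}_{\max}\big)(w_{\max}) = u^{(k)}(w_{\max}) = -\log\det\big((A^{(k)})^\top\M(w_{\max})^{\dagger}A^{(k)}\big) - c^{(k)}\textstyle\sum_{i\in G_k}w_{\max,i},
\]
so the mechanism is the identity at the equilibrium and all that remains is the tightness of the constraints. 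By Proposition~\ref{prop:data_max} the strategic response of the agents is indeed $w_{\max}$, so no separate equilibrium computation is needed here.

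The core of the argument is (ii), which I would prove by contradiction. Suppose some constraint is slack, say $u^{(l)}(w_{\max}) > v^{(l)}_\ast$. Since $\log\det\M(w_{\max})$ is finite, $\M(w_{\max})$ is nonsingular, and the directional derivative of the objective in the direction $e_{i^\ast}$ for any $i^\ast\in G_l$ with $x_{i^\ast}\neq 0$ is $x_{i^\ast}^\top\M(w_{\max})^{-1}x_{i^\ast} > 0$. Thus perturbing $w_{\max,i^\ast}\mapsto w_{\max,i^\ast}+\varepsilon$ strictly increases $\log\det\M(w)$. I then need to check the perturbation stays feasible: constraint $l$ holds for small $\varepsilon$ by continuity and strict slackness; and for every other agent $j\neq l$ the cost term is unchanged (since $i^\ast\notin G_j$) while increasing $w_{i^\ast}$ enlarges $\M(w)$ in the PSD order, which by monotonicity of $M\mapsto -\log\det\big((A^{(j)})^\top M^{-1}A^{(j)}\big)$ can only raise $u^{(j)}$. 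Hence all constraints remain satisfied while the objective strictly improves, contradicting the optimality of $w_{\max}$. Therefore every constraint binds.

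I expect the monotonicity bookkeeping in step (ii) to be the main obstacle: one must argue carefully that adding data in group $G_l$ never violates the IR constraints of other agents, which relies on the order-reversing property $M_1\preceq M_2 \Rightarrow M_1^{-1}\succeq M_2^{-1}$ together with monotonicity of the determinant on the compressed subspace $(A^{(j)})^\top(\cdot)A^{(j)}$. The only edge cases to flag are the nonsingularity of $\M(w_{\max})$ (guaranteed by finiteness of the objective) and the mild nondegeneracy assumption that each group $G_l$ contains a nonzero design vector, which is needed to produce a strictly improving direction.
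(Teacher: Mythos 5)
Your proposal is correct and follows the same overall strategy as the paper's proof: observe that the penalty in Eq.~\eqref{eq:mechanism_max_data} vanishes at $w_{\max}$ (so $\gamma_k = 1$ and the composed utility reduces to the raw $u^{(k)}(w_{\max})$), then show every IR constraint in Eq.~\eqref{eq:max_object_design_linear} binds by exhibiting a feasible perturbation that strictly increases $\log\det\M(\cdot)$, contradicting optimality of $w_{\max}$. The only difference is the perturbation itself: the paper scales the whole block, setting $w_i' = (1+\epsilon)\,w_{\max,i}$ for $i \in G_k$, which makes feasibility of the other agents' constraints immediate (their information weakly increases, their costs are untouched) but yields a strict increase of $\log\det\M$ only when $w_{\max,G_k} \neq 0$; you instead bump a single coordinate $i^\ast \in G_l$, which requires the slightly finer bookkeeping you carry out (strict positivity $x_{i^\ast}^\top\M(w_{\max})^{-1}x_{i^\ast} > 0$, continuity for agent $l$'s slack constraint, and PSD-order monotonicity of $M \mapsto -\log\det\big((A^{(j)})^\top M^{-1}A^{(j)}\big)$ for the others, the last of which is exactly the nonnegative-derivative content of Lemma~\ref{lem:log_det_gradient}). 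Your variant is in fact marginally more robust: it still produces a contradiction when $w_{\max,G_l} = 0$, a degenerate case where the paper's proportional scaling is vacuous, at the mild price of assuming $G_l$ contains a nonzero design vector, which you correctly flag. Both arguments otherwise have identical structure, and your appeal to Proposition~\ref{prop:data_max} for $w_{\max}$ being the strategic response matches the corollary's framing.
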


The above corollary is straightforward from the optimization problem in Eq.~\eqref{eq:max_object_design_linear}. Nevertheless, it carries two important implications. 
First, this corollary implies that the utility obtained by agent $k$ through strategic participation in the collaborative learning, given by $\left(u^{(k)}\circ \mec_{\max}^{(k)}\right)(w_{\max})$, is equal to the maximum utility $v^{(k)}_*$ that the agent can obtain by training individually. Therefore, all participating agents benefit equally from the collaborative learning process. 
In fact, the surplus generated by agents is directed towards enhancing the value of the statistical model, ultimately benefiting the social welfare. This highlights the equitable distribution of benefits and the collective progress achieved through collaboration. 
Secondly, Corollary~\ref{cor:incentive_compatibility} implies that the utility of agent $k$ under the mechanism depends solely on the resources and capacities of agent $k$ itself, represented by $\X_k, A^{(k)}, f^{(k)}, c^{(k)}$, and is independent of other agents. Consequently, any improvements or innovations made by agent $k$ to enhance experimental conditions or reduce marginal costs will be fully exploited within the mechanism. This incentivizes participating agents to enhance their own capacities and resources, promoting an environment of continuous improvement. Thus, the mechanism $\mec_{\max}$ exhibits incentive compatibility, fostering agents' motivation to optimize their contributions.

The issue of fairness in the principal-agent experiment design problem is particularly relevant in the exchangeable data setting, where all data points have the same value \cite{karimireddy2022mechanisms}. In this scenario, there are no inherent distinctions between the resources and targets of different agents, therefore demanding the mechanism to avoid introducing extrinsic unfairness among the agents. Fortunately, our proposed mechanism satisfies a monotonic notion of fairness.

\begin{proposition}[Fairness under exchangeable data regime]\label{prop:max_fairness}
In the exchangeable data regime (i.e., $X_i$'s are the same), the information maximization mechanism $\mec_{\max}$ is fair in the sense that any strategic response $\bar w = (\bar w_{G_1},\bar w_{G_2},\dots,\bar w_{G_K})$ satisfies that for all $k,k' \in [K]$
\begin{align*}
    \left(u^{(k)}\circ \mec_{\max}^{(k)}\right)\left(\bar w\right) \geq \left(u^{(k')}\circ \mec_{\max}^{(k')}\right)\left(\bar w\right)  \implies \|\bar w_{G_k} \|_1 \geq \|\bar w_{G_k'} \|_1.
\end{align*}
\end{proposition}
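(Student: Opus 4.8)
The plan is to exploit the fact that in the exchangeable regime the problem collapses to a scalar one, after which fairness reduces to a one-variable monotonicity statement. Since every $x_ix_i^\top$ equals a common rank-one matrix (WLOG the common covariate has unit norm), each local information matrix is the scalar $\M^{(k)}(w) = \sum_i w_i$, so the only quantities that matter are the total $S := \sum_i w_i$ and the block totals $S_k := \|w_{G_k}\|_1$. With $f^{(k)}_D$ in this scalar form the raw utility is $u^{(k)}(w) = \log S - c^{(k)} S_k$, and the opt-out baseline is $v^{(k)}_* = \max_{t\ge 0}\big[\log t - c^{(k)} t\big] = -\log c^{(k)} - 1$. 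First I would record these reductions and note that, since the individual $w_i$ inside a block $G_k$ enter only through $S_k$, the block norms $\|\bar w_{G_k}\|_1$ appearing in the statement are well defined at any strategic response; by Proposition~\ref{prop:data_max} that response is $w_{\max}$, so it suffices to analyze $S_{\max,k} := \|w_{\max,G_k}\|_1$.

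Next I would invoke Corollary~\ref{cor:incentive_compatibility}, which gives $\big(u^{(k)}\circ\mec_{\max}^{(k)}\big)(w_{\max}) = v^{(k)}_*$; since the penalty vanishes at $w_{\max}$ this is exactly the statement that every individual-rationality constraint is tight. Concretely, $\log S_{\max} - c^{(k)} S_{\max,k} = v^{(k)}_* = -\log c^{(k)} - 1$, which rearranges to
\[
c^{(k)} S_{\max,k} \;=\; \log\!\big(c^{(k)} S_{\max}\big) + 1, \qquad\text{so}\qquad S_{\max,k} \;=\; \frac{\log\!\big(c^{(k)} S_{\max}\big) + 1}{c^{(k)}}.
\]
At the same time $\big(u^{(k)}\circ\mec_{\max}^{(k)}\big)(w_{\max}) = v^{(k)}_* = -\log c^{(k)} - 1$ is strictly decreasing in $c^{(k)}$, so the hypothesis $\big(u^{(k)}\circ\mec_{\max}^{(k)}\big)(\bar w)\ge \big(u^{(k')}\circ\mec_{\max}^{(k')}\big)(\bar w)$ is equivalent to $c^{(k)} \le c^{(k')}$. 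The remaining task is therefore purely to show that the displayed expression for $S_{\max,k}$, viewed as $h(c) := \big(\log(c S_{\max})+1\big)/c$, is nonincreasing in the cost $c$.

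The key step — and the one I expect to be the main obstacle — is pinning down the location of $S_{\max}$. A direct computation gives $h'(c) = -\log(c S_{\max})/c^2$, which is negative precisely when $c S_{\max} > 1$; in fact $h$ is \emph{increasing} for $c S_{\max} < 1$, so a naive argument can fail, and one must verify that every agent sits on the decreasing branch. To this end I would exhibit the feasible point with $S_k = 1/c^{(k)}$ (each agent's standalone optimum): its total $S = \sum_j 1/c^{(j)}$ satisfies $c^{(k)} S \ge 1$ for every $k$, hence $\log S - c^{(k)}\cdot\tfrac{1}{c^{(k)}} = \log S - 1 \ge -\log c^{(k)} - 1 = v^{(k)}_*$, so it is admissible in the program defining $w_{\max}$. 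Consequently $S_{\max} \ge \sum_j 1/c^{(j)} > 1/c^{(k)}$ for every $k$ (strict as $K\ge 2$; the case $K=1$ is trivial), whence $c^{(k)} S_{\max} > 1$ for all agents. This places every $c^{(k)}$ in the range where $h$ is strictly decreasing, so $c^{(k)} \le c^{(k')}$ yields $S_{\max,k} = h(c^{(k)}) \ge h(c^{(k')}) = S_{\max,k'}$, i.e. $\|\bar w_{G_k}\|_1 \ge \|\bar w_{G_{k'}}\|_1$, completing the argument.
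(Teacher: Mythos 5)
Your proof is correct and follows essentially the same route as the paper's: reduce the exchangeable regime to the scalars $S=\|\bar w\|_1$ and $S_k=\|\bar w_{G_k}\|_1$, use Corollary~\ref{cor:incentive_compatibility} to make every individual-rationality constraint tight so that $S_k=\bigl(\log(c^{(k)}S)+1\bigr)/c^{(k)}$, translate the utility hypothesis into $c^{(k)}\le c^{(k')}$ (both sides equal $-\log c-1$), and conclude via monotonicity of $c\mapsto\bigl(\log(cS)+1\bigr)/c$. The one point where you go beyond the paper deserves mention: the paper asserts that $f'(c)=-\log\bigl(\|\bar w\|_1\,c\bigr)/c^2<0$ for all $c\ge\min_{l}c^{(l)}$ without verifying the implicit condition $c\,\|\bar w\|_1>1$, whereas you correctly identify this as the crux and close it by exhibiting the feasible point with $S_k=1/c^{(k)}$, checking it satisfies every IR constraint, and deducing $S_{\max}\ge\sum_j 1/c^{(j)}>1/c^{(k)}$ so that every agent sits on the strictly decreasing branch. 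Your write-up is therefore, if anything, more complete than the original proof.
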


This proposition states that in the exchangeable data regime, an agent must contribute more data in order to achieve a higher utility, which aligns with existing notions of fairness in the federated learning literature \cite{yu2020fairness, donahue2021optimality, donahue2023fairness}. When the data points are not exchangeable, fairness becomes more challenging to define due to the inherent heterogeneity of learning targets and resources. We leave the discussions regarding fairness in such scenarios the subject of future research. 

\subsection{Price of anarchy}

In this section we discuss price of anarchy \cite{koutsoupias1999worst} of the information maximization mechanism $\mec_{\max}$.

\begin{definition}[Price of Anarchy]
We define the social good as
\begin{align*}
    \sg(w) = \sum_{k=1}^K \left(u^{(k)}\circ \mec_{\max}^{(k)}\right)(w).
\end{align*}
and price of anarchy by the ratio between the maximal social good and the social good at strategic response, i.e.
\begin{align*}
    \poa := \max_{w} \frac{\sg(w)}{\sg(w_{\max})}.
\end{align*}
\end{definition}

Price of anarchy measures the inefficiency and suboptimality resulting from strategic behaviors in principal-agent experiment design. The numerator is the optimal `centralized' social good that can be achieved from the strategy spaces, and the denominator captures the social welfare obtained under selfish behaviors of each agent. To characterize price of anarchy of $\mec_{\max}$, we introduce the following concept.

\begin{definition}[Benefit from collaboration]
Define the Benefit from Collaboration of client $k$ as
\begin{align*}
    \Delta^{(k)} = &~ \max_{\pi \in \Delta([n])} - \log \det \left((A^{(k)})^\top \left(\sum_{i =1}^n \pi_ix_ix_i^\top \right)^{-1} A^{(k)}\right)\\
    &~ - \max_{\pi \in \Delta(G_k)} \log \det \left((A^{(k)})^\top \left(\sum_{i \in G_k} \pi_ix_ix_i^\top \right) A^{(k)}\right).
\end{align*}
\end{definition}

Intuitively, $\Delta^{(k)}$ describes the maximum achievable increase of information for agent $k$ by joining the collaborative learning. 
We will show that the price of anarchy is bounded by the benefit from collaboration.

\begin{proposition}\label{prop:poa_max}
Define $k_0 = \arg \min_{k\in [K]}c^{(k)}$.
then $\poa$ can be upper bounded by 
\begin{align*}
    \frac{\sum_{k=1}^K  \Delta^{(k)}}{\sum_{k=1}^K \left( \theta^{(k)} + r_k\log \frac{r_k}{c^{(k)}}  - r_k\right)} + \frac{\sum_{k=1}^K \left(r_k\log \frac{c^{(k)}\sum_{k=1}^K r_k}{r_kc^{(k_0)}} - (c^{(k)} - c^{(k_0)}) \cdot \| w_{\max, G_k} \|_1\right)}{\sum_{k=1}^K \left( \theta^{(k)} + r_k\log \frac{r_k}{c^{(k)}}  - r_k\right)}+1.
\end{align*}
\end{proposition}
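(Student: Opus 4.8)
The plan is to evaluate $\sg(w_{\max})$ exactly, upper bound $\max_w \sg(w)$, and then take the ratio. For the denominator, observe that at $w=w_{\max}$ the penalty term $\sum_{i\in G_k}(w_{\max,i}-w_i)_+$ in the mechanism Eq.~\eqref{eq:mechanism_max_data} vanishes, so $\big(u^{(k)}\circ\mec_{\max}^{(k)}\big)(w_{\max})$ reduces to $u^{(k)}(w_{\max})$, which by Corollary~\ref{cor:incentive_compatibility} equals $v^{(k)}_\ast$. Hence $\sg(w_{\max})=\sum_k v^{(k)}_\ast$. To make this explicit I would write $w_{G_k}=s\,\pi$ with $\pi\in\Delta(G_k)$ in Eq.~\eqref{eq:def-baseline-utility}; under the D-criterion the objective becomes $r_k\log s+\log\det\big((A^{(k)})^\top(\sum_{i\in G_k}\pi_i x_i x_i^\top)A^{(k)}\big)-c^{(k)}s$, whose scalar maximizer is $s^\ast=r_k/c^{(k)}$. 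This yields $v^{(k)}_\ast=\theta^{(k)}+r_k\log(r_k/c^{(k)})-r_k$, where $\theta^{(k)}$ is the local log-determinant appearing as the subtracted term of the benefit-from-collaboration $\Delta^{(k)}$. Therefore the denominator is exactly $D:=\sum_k\big(\theta^{(k)}+r_k\log\tfrac{r_k}{c^{(k)}}-r_k\big)$.

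\textbf{Numerator.} For the numerator, abbreviate $I^{(k)}(w):=-\log\det\big((A^{(k)})^\top\M(w)^\dagger A^{(k)}\big)$, so that $\sg(w)=\sum_k I^{(k)}(w)-\sum_k c^{(k)}\|w_{G_k}\|_1-\sum_k c^{(k)}\sum_{i\in G_k}(w_{\max,i}-w_i)_+$. Let $k_0=\argmin_{k} c^{(k)}$ and split every cost coefficient as $c^{(k)}=c^{(k_0)}+(c^{(k)}-c^{(k_0)})$, applying the same split inside the penalty. The crucial step is to combine the excess cost with the excess penalty: since $(x)_+\ge x$ and $w_i+(w_{\max,i}-w_i)_+=\max(w_i,w_{\max,i})\ge w_{\max,i}$, one gets
\[
-(c^{(k)}-c^{(k_0)})\Big(\|w_{G_k}\|_1+\sum_{i\in G_k}(w_{\max,i}-w_i)_+\Big)\ \le\ -(c^{(k)}-c^{(k_0)})\,\|w_{\max,G_k}\|_1 .
\]
Dropping the leftover nonpositive penalty $-c^{(k_0)}\sum_{i\in G_k}(w_{\max,i}-w_i)_+\le 0$ and using $\sum_k c^{(k_0)}\|w_{G_k}\|_1=c^{(k_0)}\|w\|_1$, I obtain
\[
\sg(w)\ \le\ \sum_k I^{(k)}(w)-c^{(k_0)}\|w\|_1-\sum_k (c^{(k)}-c^{(k_0)})\,\|w_{\max,G_k}\|_1 .
\]

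\textbf{Combining and the obstacle.} It remains to bound $\sum_k I^{(k)}(w)$. Because $\M(\lambda w)^\dagger=\lambda^{-1}\M(w)^\dagger$, the information terms are scale-covariant: $I^{(k)}(\lambda w)=r_k\log\lambda+I^{(k)}(w)$. Writing $s=\|w\|_1$ and $\bar w=w/s\in\Delta([n])$ gives $\sum_k I^{(k)}(w)\le R\log s+\sum_k M^{(k)}$, where $R:=\sum_k r_k$ and $M^{(k)}:=\max_{\pi\in\Delta([n])}I^{(k)}(\pi)$ is precisely the first term of $\Delta^{(k)}$, so $M^{(k)}=\Delta^{(k)}+\theta^{(k)}$. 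Substituting and maximizing the resulting one-dimensional concave map $R\log s-c^{(k_0)}s$ over $s>0$ (maximizer $s^\ast=R/c^{(k_0)}$, value $R\log(R/c^{(k_0)})-R$) yields
\[
\max_w \sg(w)\ \le\ \sum_k\Delta^{(k)}+\sum_k\theta^{(k)}+R\log\tfrac{R}{c^{(k_0)}}-R-\sum_k (c^{(k)}-c^{(k_0)})\,\|w_{\max,G_k}\|_1 .
\]
Dividing by $D$ and regrouping the logarithms—using $\sum_k r_k\log\tfrac{c^{(k)}R}{r_k c^{(k_0)}}+\sum_k r_k\log\tfrac{r_k}{c^{(k)}}=R\log\tfrac{R}{c^{(k_0)}}$—reproduces the three summands of the stated bound (the $+1$ absorbing $D/D$). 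The main obstacle is the second paragraph: the naive route of discarding the whole penalty and lower-bounding every cost by $c^{(k_0)}\|w\|_1$ is too lossy by exactly $\sum_k(c^{(k)}-c^{(k_0)})\|w_{\max,G_k}\|_1$, and recovering this term requires the precise pairing of the excess cost with the excess penalty through $(x)_+\ge x$. The remaining pieces—scale covariance of $\log\det$ and the scalar optimization—are routine.
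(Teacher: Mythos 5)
Your proposal is correct and follows essentially the same route as the paper: the denominator is evaluated exactly as $\sum_k v^{(k)}_\ast = \sum_k\big(\theta^{(k)} + r_k\log\tfrac{r_k}{c^{(k)}} - r_k\big)$ via Corollary~\ref{cor:incentive_compatibility}, and the numerator is bounded by $\theta^{(k)}_\ast + r_k\log\|w\|_1$ (your scale-covariance step) followed by control of the cost-plus-penalty terms, yielding the identical intermediate bound $\sum_k\theta_\ast^{(k)} + R\log\tfrac{R}{c^{(k_0)}} - R - \sum_k(c^{(k)}-c^{(k_0)})\|w_{\max,G_k}\|_1$. The only difference is in execution: the paper asserts the maximizer of the relaxed problem directly in terms of the group norms $\|w_{G_k}\|_1$, whereas your cost-splitting with the pairing $w_i + (w_{\max,i}-w_i)_+ = \max(w_i,w_{\max,i}) \geq w_{\max,i}$ plus the one-dimensional optimization in $s=\|w\|_1$ derives the same inequality more explicitly.
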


To interpret the bound, we notice that the first term, $\frac{\sum_{k=1}^K  \Delta^{(k)}}{\sum_{k=1}^K \left( \theta^{(k)} + r_k\log \frac{r_k}{c^{(k)}}  - r_k\right)}$, represents the price of anarchy resulting from data diversity. It captures the extent to which each agent, denoted by $k$, benefits from a more diverse collection of data points contributed by other agents, which has the potential to improve the low-rank model of agent $k$. The second term, $\frac{\sum_{k=1}^K \left(r_k\log \frac{c^{(k)}\sum_{k=1}^K r_k}{r_kc^{(k_0)}} - (c^{(k)} - c^{(k_0)}) \cdot \| w_{\max, G_k} \|_1\right)}{\sum_{k=1}^K \left( \theta^{(k)} + r_k\log \frac{r_k}{c^{(k)}}  - r_k\right)}$, captures the price of anarchy resulting from cost heterogeneity and shared representation. It accounts for the potential exploitation of lower costs by the system in a centralized setting and the benefits of utilizing data collected from design spaces of rank $r_k$ to improve the model across all rank $r_{k'}$ spaces for $k \in [K]$. Notice that the $\frac{\sum_{k=1}^K \left(- (c^{(k)} - c^{(k_0)}) \cdot \| w_{\max, G_k} \|_1\right)}{\sum_{k=1}^K \left( \theta^{(k)} + r_k\log \frac{r_k}{c^{(k)}}  - r_k\right)}$ is a negative term that demonstrates the cost heterogeneity mitigated by the information maximization mechanism $\mec_{\max}$.

\clearpage
\section{Examples}\label{sec:examples}
In this section, we present several illustrative examples that highlight the strategic behaviors of self-interested agents in different scenarios.

\begin{example}[Free riding]
Consider $\X=\{x_1 = (1,0,0)^\top, x_2 = (0,1,0)^\top, x_3 = (0,0,1)^\top, x_4 = (0,1,1)^\top\}$ and the index sets given by $G_i = \{i\}$ for $i = 1,2,3,4$, i.e. four agents each holding a rank-$1$ set of experiment condition. 
Now, let's assume that the cost for the agents are $c^{(1)} = c^{(2)} = c^{(3)} = c \leq 0.5 c^{(4)}$. 

In this setup, we can observe that the Nash equilibrium for the standard federated learning mechanism is achieved when $w_1 = w_2 = w_3 = \frac{1}{c}$ and $w_4 = 0$. However, in this Nash equilibrium, agent $x_4$ contributes nothing to the collaborative learning process while benefiting from the information provided by the second and the third agents. This behavior, where agents exploit the contributions of others without contributing themselves, is known as free riding in federated learning.

\end{example}

\begin{example}[Selfish allocation]
Consider $u,v > 0$ and $\X = \{x_{2i+1} = u \cdot e_{i+1} \in \R^n, x_{2i+2} = e_1 \in \R^n ~ (i = 1,2,\dots,n-1), x_{2n+1} = v \cdot e_1 \in \R^n\}$ and the index sets given by $G_i = \{2i+1,2i+2\} ~(i = 1,2,\dots,n-1), G_n = \{2n+1\}$. That is, there are $n$ agents; the first $n-1$ agents each holds a rank-$2$ set $\{u \cdot e_{i+1}, e_1\}$ where $e_1$ can be seen as a shared feature and $e_{i+1}$ can be seen as the unique feature; the $n$-th agent holds $\{v \cdot e_1\}$. 

In this setup, each agent has a distinct feature and a shared feature. The first $n-1$ agents may selfishly conduct experiments only on their unique feature ($u \cdot e_{i+1}$) while hoping that other agents would experiment on the shared feature ($e_1$). This results in a selfish allocation of experiments, which can be highly inefficient.

For example, when $c^{(1)} = \cdots = c^{(n-1)}  \geq c^{(n)}/v^2$, it is clear that $w_{2i+1} = \frac{1}{c^{(i)}}~(i = 1,2,\dots,n), w_{2i+2} = 0 ~(i = 1,2,\dots,n-1)$ is a Nash equilibrium for the standard federated learning mechanism. In this Nash equilibrium, the first $n-1$ agents only experiment on $u \cdot e_{i+1}$ and in the end only the $n$-th agent samples from $x_{2n+1} = v \cdot e_1$. However, when $v \ll 1$, $x_{2i+2}$ gives more information and lies in the support of optimal experiment design instead of $x_{2n+1}$. Therefore, the presented strategic response is highly efficient. This is an example of selfish allocation in federated learning.
\end{example}

\begin{example}[Case study of substitutable, orthogonal, and complimentary data]
We continue the discussion of Section~\ref{sec:simulation}. 
We study the Nash equilibrium under federated learning mechanism with varying $\theta$ and $c^{(2)}/c^{(1)}$. 
By direct computation, the strategic response is given by
\begin{align*}
    w_1 = &~ \begin{cases}0, &~ c^{(2)} < c^{(1)}\\
    \frac{1}{c^{(1)}}, &~ c^{(1)} + c^{(2)}(\sin^2 \theta - \cos^2 \theta) < 0\\
    \frac{1}{c^{(1)}}, &~ c^{(1)} - c^{(2)}(\sin^2 \theta - \cos^2 \theta) < 0\\
    \frac{c^{(2)} - c^{(1)}}{c^{(2)}c^{(1)} - (c^{(1)})^2 - (c^{(2)})^2 (\sin^2 \theta - \cos^2 \theta)^2} , &~ \text{else}
    \end{cases}\\
    w_2 = &~ \begin{cases}\frac{1}{c^{(2)}}, &~ c^{(2)} < c^{(1)}\\
    0, &~ c^{(1)} + c^{(2)}(\sin^2 \theta - \cos^2 \theta) < 0\\
    \frac{1}{c^{(2)}}, &~ c^{(1)} - c^{(2)}(\sin^2 \theta - \cos^2 \theta) < 0\\
    \frac{c^{(1)} + c^{(2)}(\sin^2 \theta - \cos^2 \theta)}{c^{(2)}c^{(1)} - (c^{(1)})^2 - (c^{(2)})^2 (\sin^2 \theta - \cos^2 \theta)^2} , &~ \text{else}.
    \end{cases}\\
    w_3 = &~ \begin{cases}\frac{1}{c^{(2)}}, &~ c^{(2)} < c^{(1)}\\
    \frac{1}{c^{(2)}}, &~ c^{(1)} + c^{(2)}(\sin^2 \theta - \cos^2 \theta) < 0\\
    0, &~ c^{(1)} - c^{(2)}(\sin^2 \theta - \cos^2 \theta) < 0\\
    \frac{c^{(1)} - c^{(2)}(\sin^2 \theta - \cos^2 \theta)}{c^{(2)}c^{(1)} - (c^{(1)})^2 - (c^{(2)})^2 (\sin^2 \theta - \cos^2 \theta)^2} , &~ \text{else}.
    \end{cases}.
\end{align*}
This leads to sub-optimal amount of total information compared to the information maximization regime. We show the data contribution $w_1,w_2,w_3$ and the total information for varying $c^{(2)}$ and $\theta$ in Figure~3 and Figure~4.

\begin{figure}[H]\label{fig:vary_c2}
\centering
\begin{tabular}{ccc} 
\includegraphics[width=0.4\textwidth]{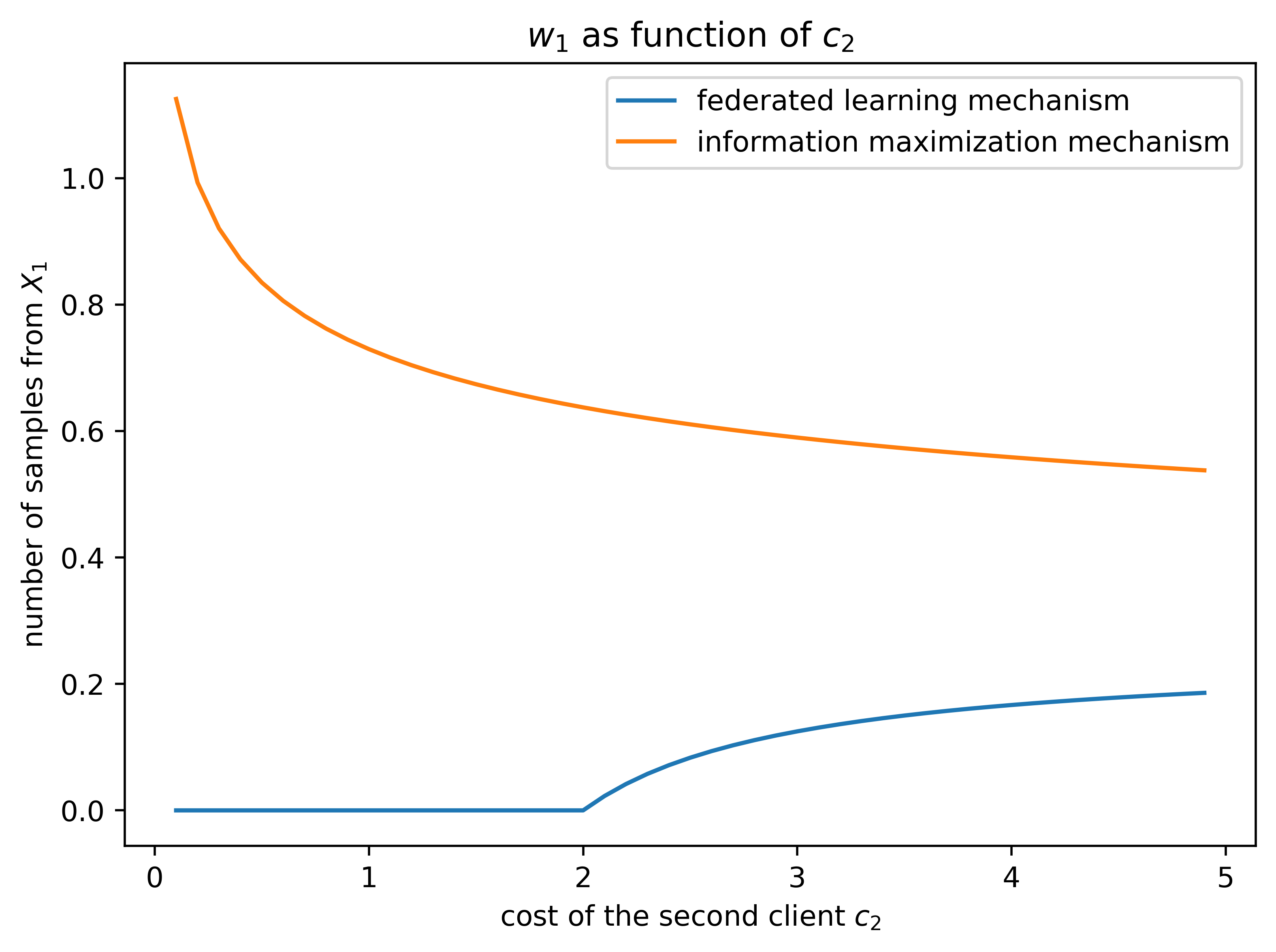} &
\includegraphics[width=0.4\textwidth]{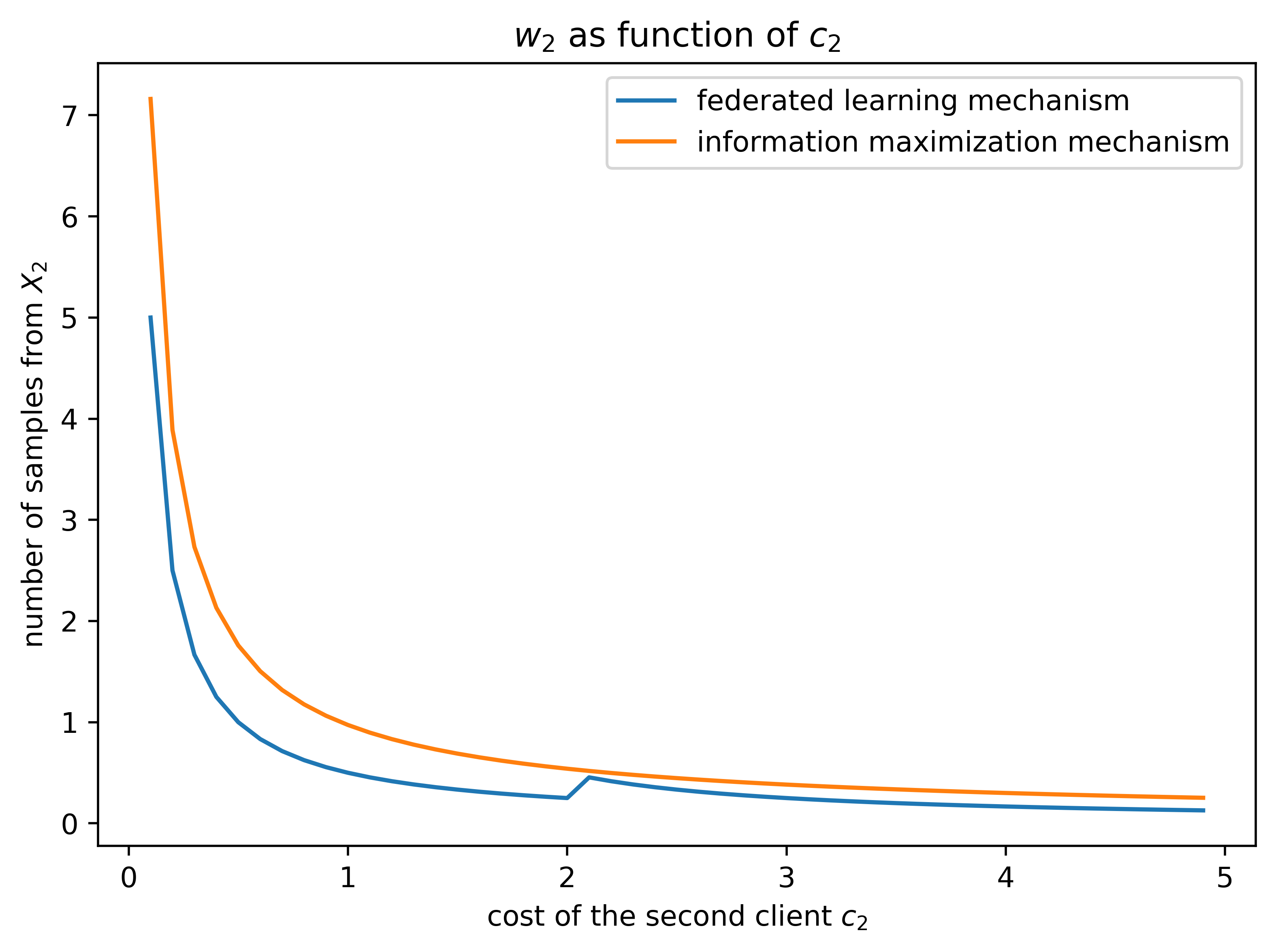}\\\vspace{-0.5em}
\scriptsize{$(a)$} & \scriptsize{$(b)$}\\
\\
\includegraphics[width=0.4\textwidth]{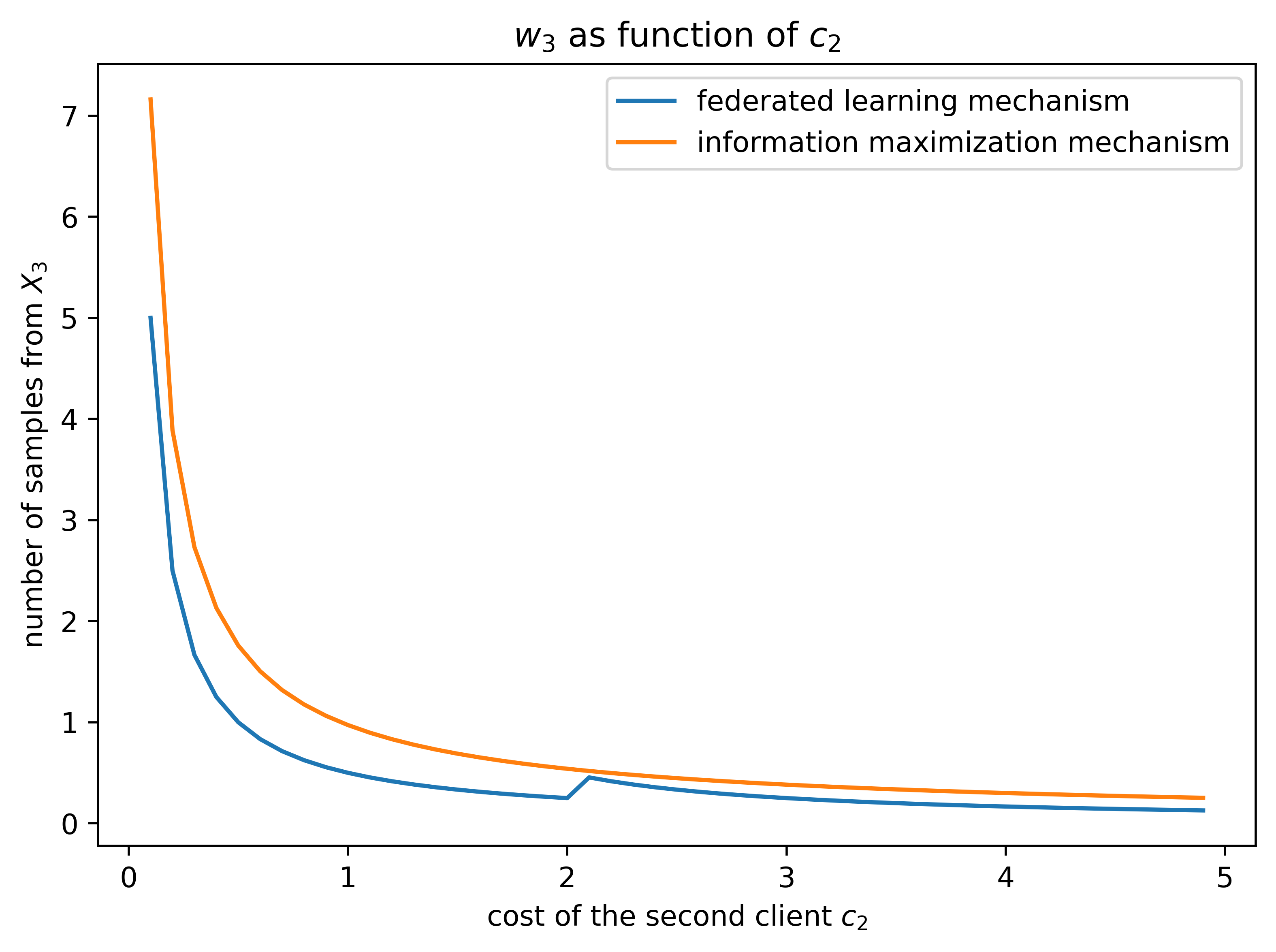}&
\includegraphics[width=0.4\textwidth]{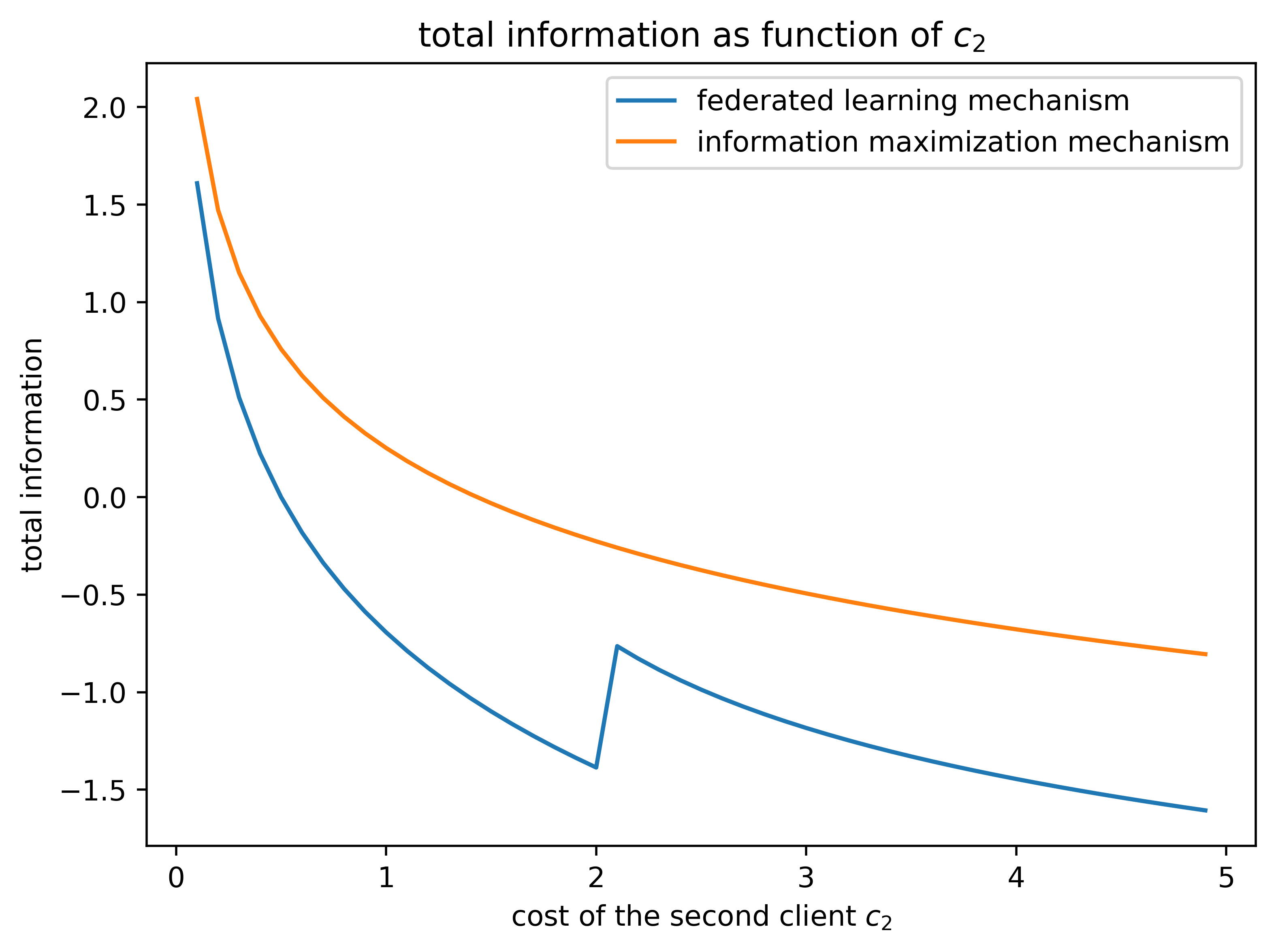}\\\vspace{-0.5em}
\scriptsize{$(c)$} & \scriptsize{$(d)$}
\end{tabular}
\caption{Comparison between federated learning mechanism and information maximization mechanism for different $c^{(2)}$ with fixed $c^{(1)} = 2, \theta = \pi/4$.}
\end{figure}

\begin{figure}[H]\label{fig:vary_theta}
\centering
\begin{tabular}{ccc} 
\includegraphics[width=0.4\textwidth]{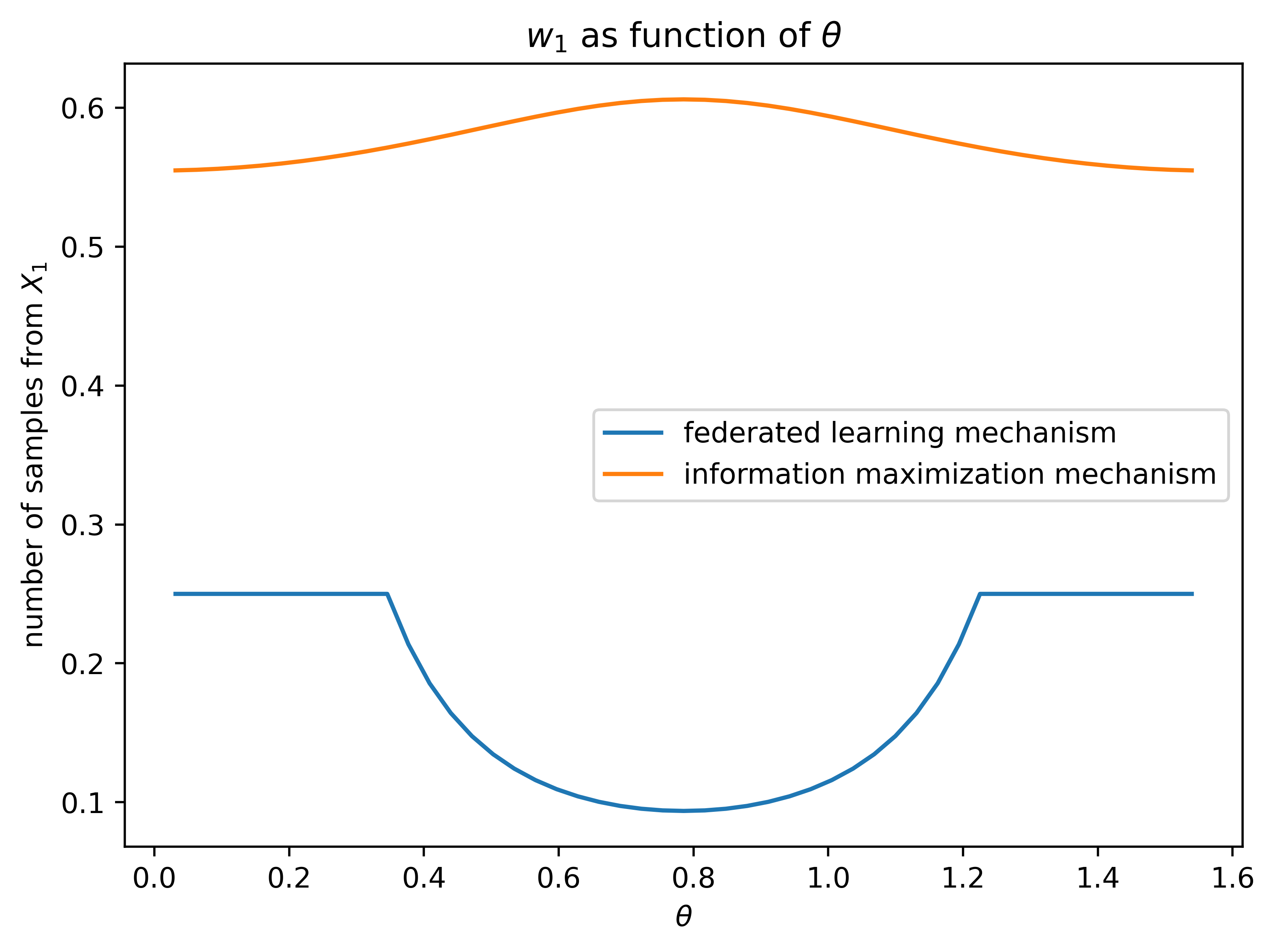} &
\includegraphics[width=0.4\textwidth]{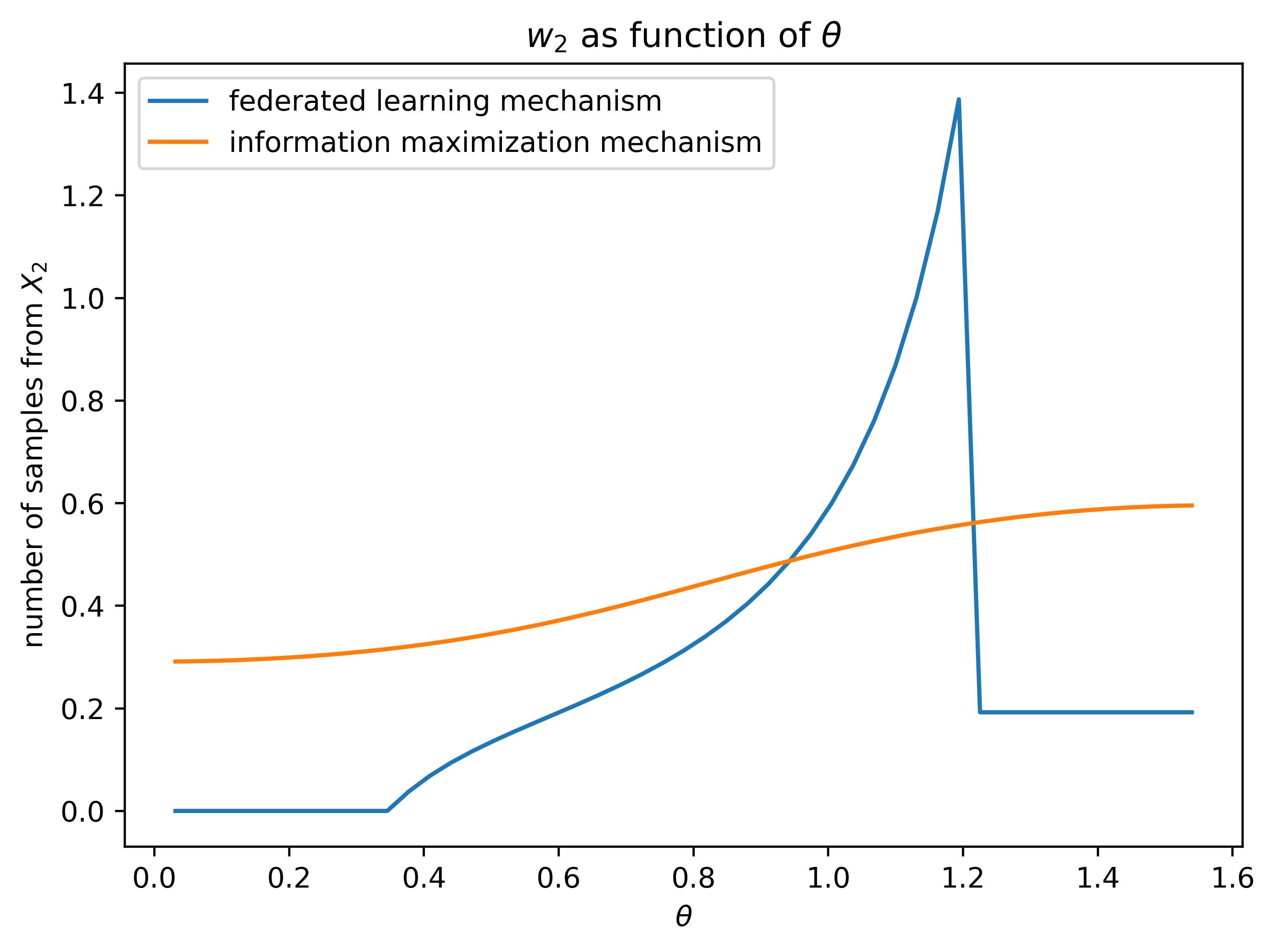}\\\vspace{-0.5em}
\scriptsize{$(a)$} & \scriptsize{$(b)$}\\
\\
\includegraphics[width=0.4\textwidth]{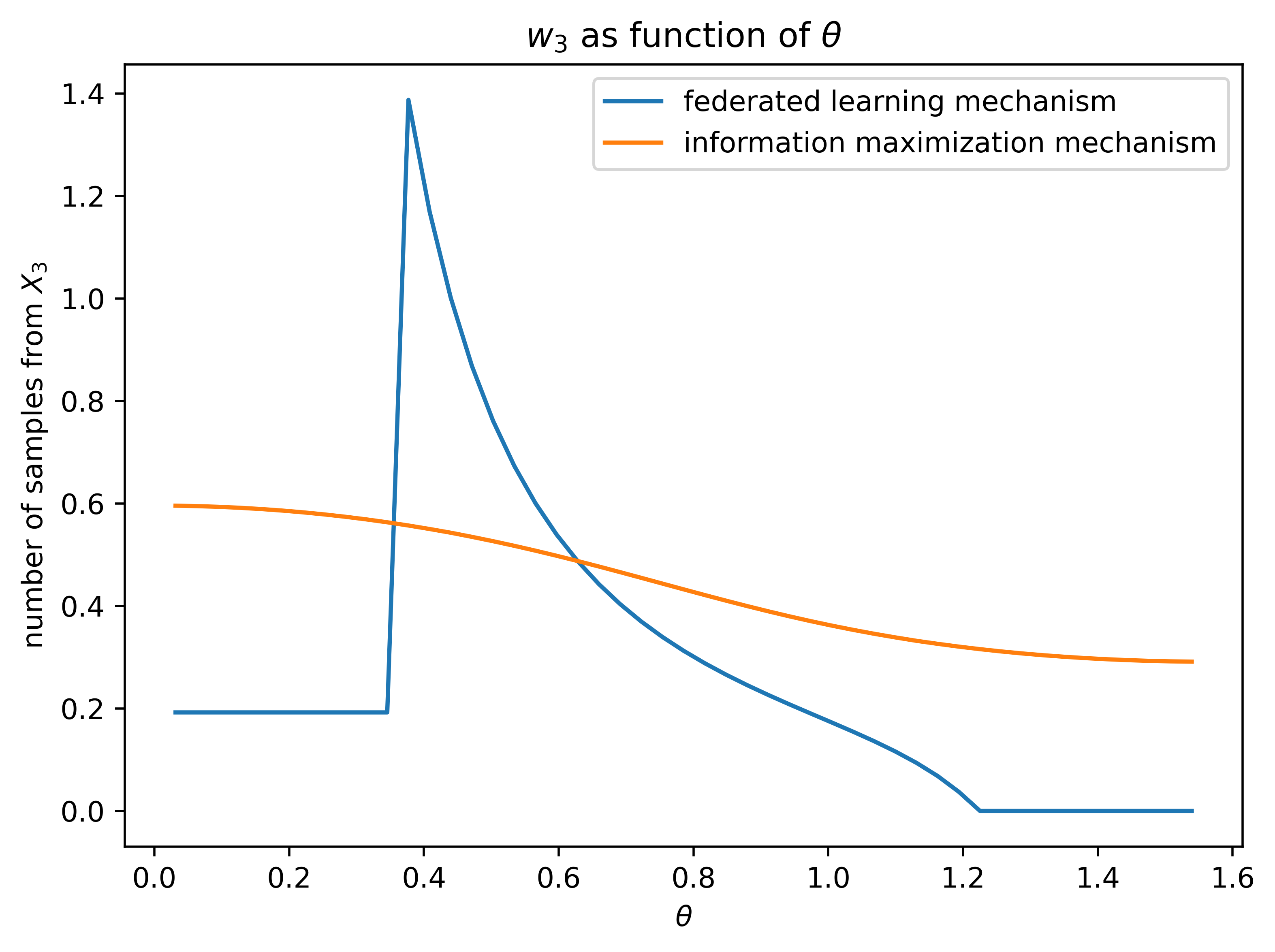}&
\includegraphics[width=0.4\textwidth]{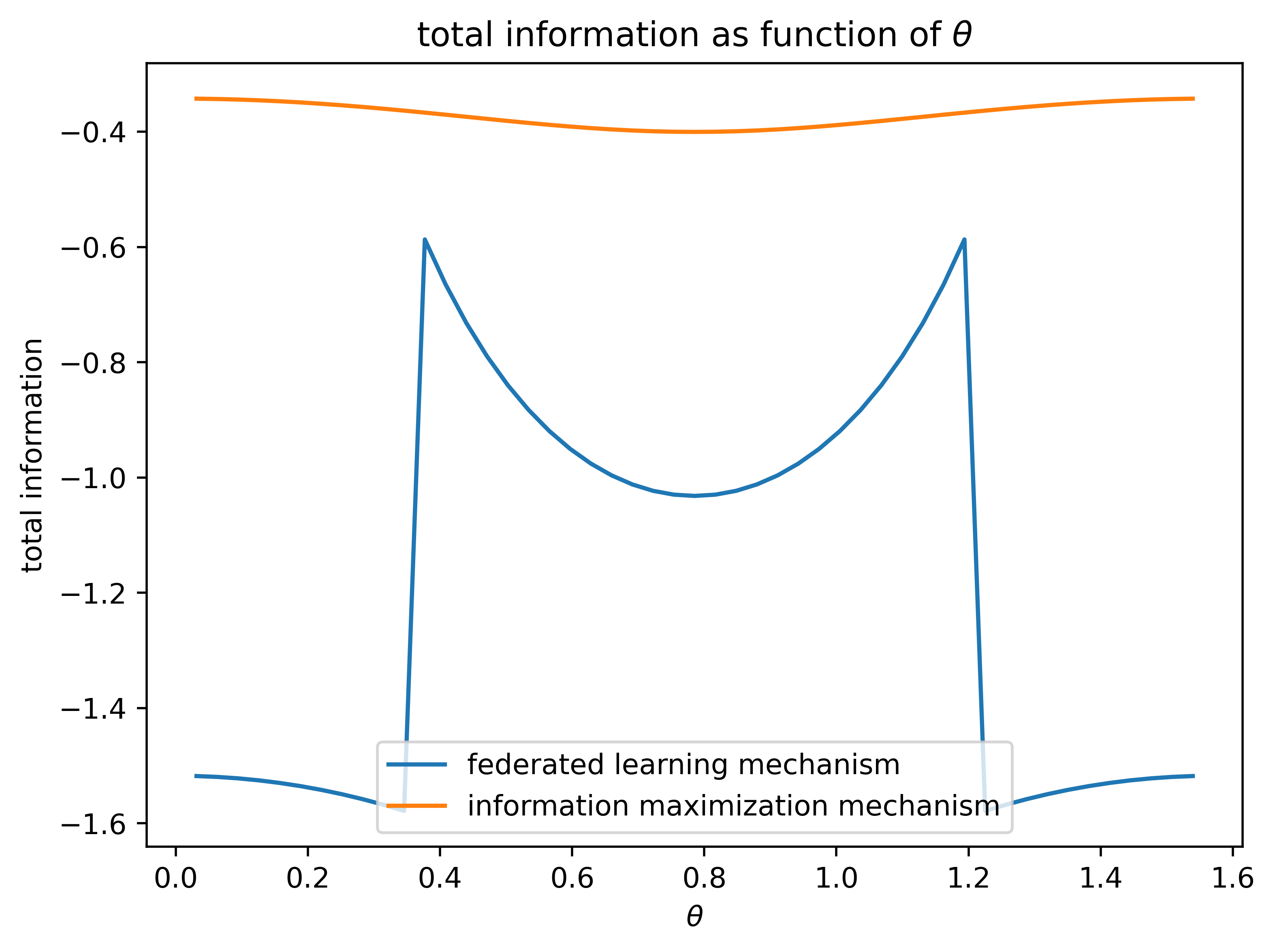}\\\vspace{-0.5em}
\scriptsize{$(c)$} & \scriptsize{$(d)$}
\end{tabular}
\caption{Comparison between federated learning mechanism and information maximization mechanism for different $\theta$ with fixed $c^{(1)} = 2, c^{(2)} = 2.5$.}
\end{figure}
\end{example}

\section{Omitted Proofs}

In this and the following sections, we will use $\supp(\cdot)$ to denote support of a distribution or vector, i.e., the set consisting of all indices corresponding to nonzero entries. Let $\mathbb{S}^d_+$ denote the set of symmetric positive semidefinite matrices in $\R^{d \times d}$, matrix Loewner order $\preceq$ is a partial ordering on $\mathbb{S}^{d}_+$, such that $A \preceq B$ iff $B-A \in \mathbb{S}^{d}_+$. Furthermore, $A \prec B$ if $B - A$ is positive definite. We overload this notation and say $x \preceq y$ for two vectors $x, y \in \R^d$ iff $x_i \leq y_i$ for all $i = 1,\dots,d$. Let $e_i$ denote the one-hot vector (whose dimension will be specified in the context) with the $i$-th coordinate being $1$ and the rest coordinate being zero. We also define $\mathbbm{1}(A) = \begin{cases}
    1, &~ A\\0, &~ \neg A
\end{cases}$.

\subsection{Proof of Proposition~\ref{cla:connection_to_optimal_design}}
\begin{proof}
Notice that when $w_{G_k^c} = 0$,
\begin{align}\label{eq:M_k-M}
    \M^{(k)}(w) = \left({A^{(k)}}^\top\left(\sum_{i \in G_k} w_ix_ix_i^\top\right)^{\dagger}A^{(k)}\right)^{-1} = \sum_{i \in G_k} w_i {A^{(k)}}^\top x_i ({A^{(k)}}^\top x_i)^\top.
\end{align}
(Notice that let $A^{(k)} = U \Lambda V^\top$ denote the Singular Value Decomposition (SVD) of $A^{(k)}$, $\sum_{i \in G_k} w_ix_ix_i^\top$ can be written as $U \begin{pmatrix}
    D & 0\\0 &0
\end{pmatrix}$ where $D \in \mathbb{S}^{r_k}_+$. Then
\begin{align*}
    \left({A^{(k)}}^\top\left(\sum_{i \in G_k} w_ix_ix_i^\top\right)^{\dagger}A^{(k)}\right)^{-1} = &~ \left(V \Lambda^\top U^\top U \begin{pmatrix}     D^{-1} &\mathbf{0}_{r_k \times (d-r_k)}\\\mathbf{0}_{(d-r_k) \times r_k}&\mathbf{0}_{(d-r_k) \times (d-r_k)} \end{pmatrix} U^\top U \Lambda V^\top\right)^{-1}\\
    = &~ \left(V  D^{-1} V^\top\right)^{-1}\\
    = &~ V  D V^\top\\
    = &~ {A^{(k)}}^\top\left(\sum_{i \in G_k} w_ix_ix_i^\top\right)A^{(k)}.
\end{align*}
This confirms Eq.~\eqref{eq:M_k-M}.)

It follows that
\begin{align*}
    u^{(k)}(w) = &~ f^{(k)}\left({\M^{(k)}(w)}^{-1}\right) - c \sum_{i\in G_k} w_i\\
    = &~ f^{(k)}\left(\left(\sum_{i \in G_k} w_i {A^{(k)}}^\top x_i ({A^{(k)}}^\top x_i)^\top\right)^{\dagger}\right) - c \sum_{i\in G_k} w_i.
\end{align*}
Now by changing of variable $w_i = \lambda \cdot \pi_i ~(\forall i \in G_k)$ where $\pi \in \Delta(G_k)$, we notice that the problem in Eq.~\eqref{eq:single_agent_utility} is equivalent to 
\begin{align*}
    \max &~ \lambda^{p}f^{(k)}\left(\left(\sum_{i \in G_k} \pi_i {A^{(k)}}^\top x_i ({A^{(k)}}^\top x_i)^\top\right)^{\dagger}\right) +g(\lambda) - c \lambda\\
    \text{s.t.} &~ \lambda > 0, \pi \in \Delta(G_k)
\end{align*}
The optimal $\pi_{G_k}^*$ is therefore the optimal design measure over the design space $\X_k$. As a result, the optimal design $w_{G_k}^*$ that maximizes $u^{(k)}$ is multiple of the optimal design measure over the design space $\left\{{A^{(k)}}^\top x: x \in \X_k \right\}$.
\end{proof}

\subsection{Proof of Proposition~\ref{cla:existence_ne}}
\begin{proof}
Define $\mathcal{W} = \left\{w \in \R^n_+: \M^{(k)}(w)\in \mathbb{S}^{r_k}_+\right\} $. Due to $\underset{\lambda \to +\infty}{\lim \sup} f^{(k)}\left((\lambda M)^{-1} \right)/(c \lambda) \leq 0$ we may constrain $w$ to a compact (and convex) subset of $\mathcal{W}$. Consider the best response mapping $B: \mathcal{W} \to \mathcal{W}$ where $B(w)_{G_k}$ is the unique unique maximizer of $(u^{(k)} \circ \mec^{(k)})(\cdot,w_{G_k^c})$. It follows that $B$ is a continuous, convex-valued and closed set-valued function from $\mathcal{W}$ to $2^{\mathcal{W}}$. Applying Kakutani's fixed-point theorem (Claim~\ref{cla:fixed_point_thm}), there exists fixed point $\wt w = B(\wt w)$. This $\wt w$ is a pure Nash equilibrium of the game.
\end{proof}

\begin{proposition}\label{prop:existence_ne_2}
    Suppose for all $k \in [K]$ and p.s.d. matrices $M$, $f^{(k)}(M^{-1})$ is a continuous non-decreasing concave function of $M$, $\underset{\lambda \to +\infty}{\lim \sup} f^{(k)}\left((\lambda M)^{-1} \right)/(c \lambda) \leq 0$ for any $c > 0$, and there exists $p_k \in \R$ and function $g_k: \R \to \R$ such that $f^{(k)}(\lambda M) = \lambda^{p_k} \cdot f^{(k)}(M) + g_k(\lambda)$. If the mechanism $\mec^{(k)}$ takes the form of $\mec^{(k)}(w) = \gamma_k(w) \cdot w$ where $\gamma_k : \R^n \to \R$ is a continuous positive function such that $\gamma_k(w)^{p_k}$ is non-increasing convex and $g_k(\gamma_k(w))$ is concave. Then there exists pure Nash equilibrium of the game defined by utilities $(u^{(k)} \circ \mec^{(k)})_{k \in [K]}$ and actions $(w_{G_k})_{k \in [K]}$, if any of the condition holds for all $k \in [K]$: (1) $f^{(k)}$ is negative; (2) $p_k = 0$.
\end{proposition}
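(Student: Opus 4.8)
The plan is to establish existence through the standard theory of concave games, following the same Kakutani fixed-point route as in the proof of \cref{cla:existence_ne}, but replacing the differentiability/unique-maximizer hypothesis there with concavity of each agent's composed payoff. Concretely, I would show that (i) the joint strategy space can be restricted to a nonempty compact convex set, (ii) each composed utility $U_k := u^{(k)}\circ\mec^{(k)}$ is jointly continuous, and (iii) for every fixed $w_{G_k^c}$ the map $w_{G_k}\mapsto U_k(w_{G_k},w_{G_k^c})$ is concave. Granting these, the best-response correspondence $w\mapsto \big(\arg\max_{w_{G_k}}U_k(\cdot,w_{G_k^c})\big)_{k\in[K]}$ is nonempty, compact- and convex-valued with closed graph, so Kakutani's theorem (\cref{cla:fixed_point_thm}) furnishes a fixed point, i.e.\ a pure Nash equilibrium. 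Step (i) is imported verbatim from \cref{cla:existence_ne}: the hypothesis $\limsup_{\lambda\to\infty}f^{(k)}((\lambda M)^{-1})/(c\lambda)\le 0$ forces each $U_k$ to decrease along rays in $w_{G_k}$, confining maximizers to a common compact box, while step (ii) follows from continuity of $\gamma_k$, of $f^{(k)}$, and of $\M^{(k)}(\cdot)$ on the relevant domain.

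The heart of the argument is the own-concavity in (iii). I would first record two mechanism-independent structural facts. The local-information map $\M\mapsto (A^\top\M^{\dagger}A)^{-1}$ is operator concave on the p.s.d.\ cone (it is a generalized Schur complement), and since $w\mapsto\M(w)=\sum_i w_i x_ix_i^\top$ is linear, $w\mapsto\M^{(k)}(w)$ is concave in the Loewner order; moreover it is Loewner non-decreasing in $w$, because matrix inversion is operator antitone. Composing with $f^{(k)}(M^{-1})$, which is assumed concave and non-decreasing in $M$, shows that $h_k(w):=f^{(k)}\big((\M^{(k)}(w))^{-1}\big)$ is concave (and, by the same monotonicity, non-decreasing in $w$). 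I would then use the scaling identity $\M^{(k)}(\gamma w)=\gamma\,\M^{(k)}(w)$ (derived exactly as in \eqref{eq:M_k-M}) together with the homogeneity property $f^{(k)}(\lambda M)=\lambda^{p_k}f^{(k)}(M)+g_k(\lambda)$ to write the composed payoff in closed form:
\[
U_k(w)=\gamma_k(w)^{-p_k}\,h_k(w)+g_k\big(\gamma_k(w)^{-1}\big)-c^{(k)}\sum_{i\in G_k}w_i .
\]
The cost term is linear and $g_k(\gamma_k(w)^{-1})$ is concave by hypothesis, so concavity of $U_k$ reduces to concavity of the single product $\gamma_k(w)^{-p_k}h_k(w)$.

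Handling this product is where the two cases enter and is the step I expect to be the main obstacle. When $p_k=0$ the scaling factor is identically one, the product is just $h_k$, already concave, and $U_k$ is concave with no further work. When instead $f^{(k)}$ is negative, $h_k\le 0$ is concave and non-decreasing, so $|h_k|=-h_k$ is nonnegative, convex, and non-increasing; writing the product as $-\,\gamma_k(w)^{-p_k}\,|h_k(w)|$, the target becomes to prove that the scaling factor $\gamma_k(w)^{-p_k}$ (which the convexity and coordinatewise monotonicity hypotheses on $\gamma_k$ render nonnegative, convex, and non-increasing) times $|h_k|$ is convex. The delicate point is that a product of two nonnegative convex functions need not be convex in several variables, even though it is in one dimension, where the aligned signs of the gradients make every Hessian term of a product of nonnegative convex non-increasing functions nonnegative. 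I would close this gap by exploiting precisely the monotonicity built into the hypotheses to control the cross term $\nabla(\gamma_k^{-p_k})\,\nabla|h_k|^\top+\nabla|h_k|\,\nabla(\gamma_k^{-p_k})^\top$, or by reducing to the one-dimensional rule using the effectively scalar dependence of $\gamma_k$ on the aggregate contribution over $G_k$. Once own-concavity is secured, Kakutani concludes the existence of a pure Nash equilibrium as described above.
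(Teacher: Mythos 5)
Your proposal tracks the paper's own proof almost step for step: the paper likewise restricts to a compact convex strategy set using the $\limsup$ hypothesis, establishes own-concavity of $u^{(k)}\circ\mec^{(k)}$ in $w_{G_k}$ via the decomposition into a scaling factor times $f^{(k)}\big((\M^{(k)}(w))^{-1}\big)$ plus a $g_k$-term minus the linear cost, and then concludes via an existence theorem for quasi-concave games (the paper cites Theorem~1 of Dasgupta--Maskin rather than applying Kakutani to the best-response correspondence as you and Proposition~\ref{cla:existence_ne} do; this difference is immaterial). Two local contrasts are worth recording. First, for concavity of $h_k(w)=f^{(k)}\big((\M^{(k)}(w))^{-1}\big)$ you invoke operator concavity and operator monotonicity of the generalized Schur complement $M\mapsto (A^\top M^{\dagger}A)^{-1}$; the paper instead proves the sharper fact that $\M^{(k)}(w)$ is \emph{affine} in $w_{G_k}$ for fixed $w_{G_k^c}$, by computing $\partial \M^{(k)}(w)/\partial w_i = {A^{(k)}}^\top x_i x_i^\top A^{(k)}$ for $i\in G_k$ (using ${A^{(k)}}^\top A^{(k)} = I$ and $x_i\in\mathrm{range}(A^{(k)})$), after which concavity of $h_k$ follows from concavity of $f^{(k)}(M^{-1})$ alone, with no appeal to operator concavity or monotonicity. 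Your route is valid, but the affine fact is cleaner and worth adopting. Second---and this is the substantive point---the product step you single out as the main obstacle is exactly the step the paper does not elaborate: it asserts in one line that the non-increasing convex positive scaling factor times the negative concave information term is concave under condition (1) or (2). Your caution is well placed: in several variables the product of two nonnegative, coordinatewise non-increasing convex functions need not be convex (e.g.\ $(1-w_1)_+\,(1-w_2)_+$ fails midpoint convexity at $(1,0)$ and $(0,1)$), since the ``similarly ordered'' condition $\big(\phi(x)-\phi(y)\big)\big(\psi(x)-\psi(y)\big)\ge 0$ can fail for non-comparable points; closing it genuinely requires extra structure, such as $\gamma_k$ depending on $w_{G_k}$ only through the aggregate $\sum_{i\in G_k}w_i$---which holds for every mechanism the paper actually deploys ($\mec_{\max}$, $\mec_{\peff}$, $\mec_{\eff}$), and for the D-criterion one has $p_k=0$ so case (2) applies and the issue evaporates. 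Since you flag this gap explicitly while the paper asserts the step without proof, your proposal is no less complete than the paper's argument at this point. Finally, a benign convention mismatch: the paper writes the scaling as $\gamma_k(w)^{p_k}$ and $g_k(\gamma_k(w))$, matching the stated hypotheses, whereas your literal application of $f^{(k)}(\lambda M)=\lambda^{p_k}f^{(k)}(M)+g_k(\lambda)$ with $\lambda=\gamma_k(w)^{-1}$ yields $\gamma_k(w)^{-p_k}$ and $g_k\big(\gamma_k(w)^{-1}\big)$; the two agree after the reparametrization $p_k\mapsto -p_k$, $g_k(\cdot)\mapsto g_k(1/\cdot)$, but you should state that reparametrization so your form of the hypotheses matches the proposition as written.
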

\begin{proof}
Due to $\underset{\lambda \to +\infty}{\lim \sup} f^{(k)}\left((\lambda M)^{-1} \right)/(c \lambda) \leq 0$ we may constrain $w$ to a compact (and convex) set in $\R^n_+$. In what follows, we show that $u^{(k)} \circ \mec^{(k)}$ is quasi-concave in $w_{G_k}$ for all $k\in [K]$. First, notice that
\begin{align*}
    (u^{(k)} \circ \mec^{(k)})(w) = \gamma_k(w)^{p_k} \cdot f^{(k)}\left(\big({\M^{(k)}(w)}\big)^{-1}\right) + g_k(\gamma_k(w)) - c^{(k)}\sum_{i\in G_k}w_i.
\end{align*}
Define $\mathcal{W} = \left\{w \in \R^n_+: \M^{(k)}(w)\in \mathbb{S}^{r_k}_+\right\} $. For any $\lambda \in (0,1)$ and any $w_1,w_2 \in \mathcal{W}$ such that $w_{1,i} = w_{2,i}$ for all $i \notin G_k$, we have
\begin{align*}
    f^{(k)}\left(\big({\M^{(k)}(\lambda w_1 + (1-\lambda) w_2)}\big)^{-1}\right) = &~ f^{(k)}\left(\big(\lambda\M^{(k)}( w_1) + (1-\lambda)\M^{(k)} (w_2)\big)^{-1}\right)\\
    \geq &~ \lambda f^{(k)}\left(\big({\M^{(k)}(w_1)}\big)^{-1}\right) + (1-\lambda) f^{(k)}\left(\big({\M^{(k)}(w_2)}\big)^{-1}\right)
\end{align*}
where the first step comes from 
\begin{align*}
    \frac{\partial \M^{(k)}(w)}{\partial w_i} = &~ ({A^{(k)}}^\top\M(w)^{-1}A^{(k)})^{-1} {A^{(k)}}^\top\M(w)^{-1}x_ix_i^\top \M(w)^{-1}A^{(k)}({A^{(k)}}^\top\M(w)^{-1}A^{(k)})^{-1}\\
    = &~ {A^{(k)}}^\top x_ix_i^\top A^{(k)}, ~\forall i \in G_k
\end{align*}
and the second step comes from the concavity of $f^{(k)}$. Now combining this and the fact that $\gamma_k(w)^{p_k}$ is non-increasing convex, $\gamma_k(w)^{p_k} \cdot f^{(k)}\left(\big({\M^{(k)}(w)}\big)^{-1}\right)$ is concave in $w_{G_k}$ over $\mathcal{W}$ if any of the condition holds for all $k \in [K]$: (1) $f^{(k)}$ is negative; (2) $p_k = 0$. Since $g_k(\gamma_k(w))$ is concave and $- c^{(k)}\sum_{i\in G_k}w_i$ is linear, we conclude that $(u^{(k)} \circ \mec^{(k)})$ is concave in $w_{G_k}$ over $\mathcal{W}$. As $f^{(k)} = -\infty$ for singular input matrices, it follows that $u^{(k)} \circ \mec^{(k)}$ is quasi-concave in $w_{G_k}$.

Finally, applying Theorem 1 in \cite{dasgupta1986existence} establishes the existence of pure Nash equilibrium.
\end{proof}

\subsection{Proof of Proposition~\ref{prop:fed_efficient_allocation}}
\label{sec:proof_incentive_compatibility}

\begin{proof}
We first show that when $c^{(1)} = \cdots c^{(K)} = c$, $\wt w = \frac{d}{c} \cdot \pi^\ast$ is a pure Nash equilibrium. Consider the following function of $w_{G_k}$
\begin{align*}
    \bar u_k(w_{G_k}) = - \log\det \left((A^{(k)})^\top \M\left((w_{G_k},(\wt w_{G_k^c})\right)^{-1} A^{(k)}\right) - c \cdot  \sum_{i \in G_k} w_i
\end{align*}
Indeed, applying Lemma~\ref{lem:log_det_gradient} and Theorem~\ref{thm:equivalence_kiefer_wolfowitz},
\begin{align*}
    d \bar u_k(\wt w_{G_k}, \Delta w_{G_k})
    = &~ \sum_{i \in G_k} \Delta w_i \langle x_ix_i^\top, \M(\wt w)^{-1} \rangle - c \cdot \sum_{i \in G_k} \Delta w_i\\
    \leq &~ \left(\frac{d}{\|\wt w\|_1} - c \right)\cdot \sum_{i \in G_k} \Delta w_i\\
    = &~ 0,  ~\forall \Delta w_{G_k}.
\end{align*}
By concavity of $\bar u_k(\wt w_{G_k})$, $\wt w_{G_k}$ is the maximizer of $\bar u_k$.

To show IR, define
\begin{align*}
    w_{G_k}^\ast = \arg \max_{w_{G_k}} f^{(k)}_D\left( \left(\sum_{i \in G_k} w_i \cdot (A^{(k)})^\top x_ix_i^\top A^{(k)}\right)^{-1}\right) - c \sum_{i \in G_k} w_i,
\end{align*}
we have
\begin{align*}
    \left(u^{(k)}\circ \mec_{\fed}^{(k)}\right)\left(\wt w\right) \geq &~ \bar u_k(w_{G_k}^\ast)\\
    = &~ - \log\det \left((A^{(k)})^\top \M\left((w_{G_k}^\ast,\wt w_{G_k^c})\right)^{-1} A^{(k)}\right)   - c \cdot \sum_{i \in G_k} w_i^\ast\\
    \geq &~ - \log \det \left((A^{(k)})^\top (\sum_{i \in G_k} w_i^\ast x_ix_i^\top ) A^{(k)} \right)^{-1} - c \cdot \sum_{i \in G_k} w_i^\ast\\
    = &~ v^{(k)}_\ast
\end{align*}
where the first inequality follows from $\wt w_{G_k} \in \arg \max \bar u_k$; the second inequality comes from Lemma~\ref{lem:incentive_compatibility}.

Next, we show the uniqueness. Suppose $\wt w $ is a Nash equilibrium of the tuple of utility functions $\left(\left(u^{(k)}\circ \mec_{\fed}\right)\right)_{k \in [K]}$. 

It follows from first-order optimality that for any $\Delta w_{G_k}$ such that $\supp (\Delta w_{G_k}) \subset \supp (\wt w_{G_k})$
\begin{align*}
    0 = &~ d \bar u_k(\wt w_{G_k}, \Delta w_{G_k})\\
    = &~ \sum_{i \in G_k} \Delta w_i \langle x_ix_i^\top, \M(\wt w)^{-1} \rangle - c^{(k)} \cdot \sum_{i \in G_k} \Delta w_i\\
    = &~ \sum_{i \in G_k}  \left( \langle x_ix_i^\top, \M(\wt w)^{-1} \rangle - c \right)\cdot \Delta w_i.
\end{align*}
Therefore $  \langle x_ix_i^\top, \M(\wt w)^{-1} \rangle = c$ holds for all $i \in \supp(\wt w)$. Notice that
\begin{align*}
    d = &~ \left\langle \sum_{i=1}^n \wt w_i x_ix_i^\top, \M(\wt w)^{-1} \right\rangle \\
    = &~ c \sum_{i=1}^n \wt w_i.
\end{align*}
We thus have $\sum_{i=1}^n \wt w_i = \frac{d}{c}$, and as a result, $\langle x_ix_i^\top, \M(\wt w/\|\wt w\|_1)^{-1} \rangle = d$ holds for all $i \in \supp(\wt w)$. Furthermore, for any $i \notin \supp(\wt w)$ first-order optimality implies $\langle x_ix_i^\top, \M(\wt w/\|\wt w\|_1)^{-1} \rangle \leq d$. 
Applying Theorem~\ref{thm:equivalence_kiefer_wolfowitz}, we know that $\wt w/\|\wt w\|_1$ is a D-optimal design. This confirms that any strategic response follows the D-optimal design measure.

Finally, we show that federated learning is not efficient in any other criteria.

\paragraph{\textbf{V-criterion.}}
Consider principal-agent experiment design with the accuracy function given by 
\begin{align*}
    f^{(k)}(w) = - \E_{x \sim p^{(k)}} \left[x^\top \M(w)^{-1}x\right].
\end{align*}
where $p^{(k)}$ represents the distribution of client $k$'s data and is supported on $G_k$.

In this case, the utility function under federated learning mechanism is given by 
\begin{align*}
    \left(u^{(k)}\circ \mec_{\fed}^{(k)}\right)(w) = - \E_{x \sim p^{(k)}} \left[x^\top \M(w)^{-1}x\right] - c \sum_{i\in G_k}w_i.
\end{align*}
The Nash equilibrium $w^\ast \in \R^n_+$ thus gives the following system:
\begin{align}\label{eq:ne_a_criterion}
    \E_{x \sim p^{(k)}} \left[\langle xx^\top, \M(w^\ast)^{-1} x_l x_l \M(w^\ast)^{-1}\rangle\right] \begin{cases}= c, &~  l \in \supp(w^\ast)\\ \leq c, &~ l \notin \supp(w^\ast)\end{cases} , ~\forall l \in G_k, k \in [K].
\end{align}
Consider the efficient allocation over the population $p$ ($\supp(p) \subset \X$)
\begin{align*}
    \pi^\ast = \arg \min_{\pi \in \Delta([n])} \E_{x \sim p} \left[x^\top \M(\pi)^{-1}x\right].
\end{align*}
The optimal design measure requires the following system:
\begin{align}\label{eq:a_optimal_design}
    \E_{x \sim p} \left[\langle xx^\top, \M(\pi^\ast)^{-1} x_l x_l \M(\pi^\ast)^{-1}\rangle\right] \begin{cases}= \langle \E_{x \sim p} \left[xx^\top\right], \M(\pi^\ast)^{-1}\rangle, &~  l \in \supp(\pi^\ast)\\ \leq \langle \E_{x \sim p} \left[xx^\top\right], \M(\pi^\ast)^{-1}\rangle, &~ l \notin \supp(\pi^\ast)\end{cases} , ~\forall l \in [n].
\end{align}
Therefore, if the unique pure Nash equilibrium follows the optimal design measure, then $p,p^{(1)},\dots,p^{(K)}$ must satisfy the linear system given by Eq.~\eqref{eq:ne_a_criterion} and Eq.~\eqref{eq:a_optimal_design}. The solution of this linear system is generally a subspace of $\Delta([n])$ that has zero measure. 

As a result, there exists a design measure such that federated learning mechanism is not efficient.

\paragraph{\textbf{G-criterion and E-criterion.}}

Consider the following design space $\X = \{(1,0,0)^\top, (0,1,0)^\top, (0,0,1)^\top\}$ and let there be two agents with index sets $G_1 = \{1\},G_2 = \{2,3\}$. It is not to see that in this case,
\begin{align*}
    f^{(k)}_G(w) = f^{(k)}_E(w) = - \max_{i \in G_k} \left[x_i^\top \M(w)^{-1}x_i\right] = &~ - \|(A^{(k)})^\top \M(w)^{-1}A^{(k)}\|_2\\
    = &~ \begin{cases}
        -w_1^{-1}, &~ k = 1\\ \min\{-w_2^{-1},-w_3^{-1}\}, &~ k = 2.
    \end{cases}
\end{align*}
Therefore the unique pure Nash equilibrium is given by $w_1 = c^{-1/2}, w_2 = w_3 = (2c)^{-1/2}$. This is clearly not proportional to the optimal design measure which is uniform over $\X$.

\paragraph{\textbf{A-criterion.}}

Consider the following design space $\X = \{(1,1)^\top, (1,0)^\top, (0,1)^\top\}$ and let there be two agents with index sets $G_1 = \{1,2\},G_2 = \{3\}$. It is not to see that in this case,
\begin{align*}
    -\tr \left[(A^{(k)})^\top \M(w)^{-1}A^{(k)}\right] =  \begin{cases}
        -\frac{2w_1+w_2+w_3}{w_1w_3+w_2w_3+w_1w_3}, &~ k = 1\\ -\frac{w_1+w_2}{w_1w_3+w_2w_3+w_1w_3}, &~ k = 2.
    \end{cases}
\end{align*}
The pure Nash equilibrium $(w_1,w_2,w_3)$ follows the system:
\begin{align*}
    \frac{(w_1+w_2)^2}{w_1w_3+w_2w_3+w_1w_3} - c = &~ 0\\
    \frac{w_2^2+w_3^2}{w_1w_3+w_2w_3+w_1w_3} - c = &~ 0\\
    \frac{2w_1^2+w_3^2+2w_1w_3}{w_1w_3+w_2w_3+w_1w_3} - c = &~ 0.
\end{align*}
The optimal design measure $(\pi_1,\pi_2,\pi_3)$ follows the system:
\begin{align*}
    \frac{2\pi_1^2+\pi_2^2+2\pi_2\pi_1}{\pi_1\pi_3+\pi_2\pi_3+\pi_1\pi_3} - \lambda = &~ 0\\
    \frac{\pi_2^2+\pi_3^2}{\pi_1\pi_3+\pi_2\pi_3+\pi_1\pi_3} - \lambda = &~ 0\\
    \frac{2\pi_1^2+\pi_3^2+2\pi_1\pi_3}{\pi_1\pi_3+\pi_2\pi_3+\pi_1\pi_3} - \lambda = &~ 0
\end{align*}
where $\lambda$ is the Lagrangian multiplier.

If$(w_1,w_2,w_3)$ is proportional to $(\pi_1,\pi_2,\pi_3)$, then comparing these two systems yields $\pi_1 = 0$. This is a contradiction.

\end{proof}

\subsection{Proof of Proposition~\ref{prop:free_riding_diversity}}
\begin{proof}
Suppose there exists a pure Nash equilibrium $\wt w$ such that $\wt w_j \neq 0$ and $j \in G_l$. Then from the first order optimality, for any $\Delta w_{G_l}$ such that $\supp (\Delta w_{G_l}) \subset \supp (\wt w_{G_l})$
\begin{align*}
    0 = &~ d \bar u_l(\wt w_{G_l}, \Delta w_{G_l})\\
    = &~ \sum_{i \in G_l} \Delta w_i \langle x_ix_i^\top, \M(\wt w)^{-1} \rangle - c \cdot \sum_{i \in G_l} \Delta w_i\\
    = &~ \sum_{i \in G_l}  \left( \langle x_ix_i^\top, \M(\wt w)^{-1} \rangle - c \right)\cdot \Delta w_i.
\end{align*}
It follows that $\langle x_jx_j^\top, \M(\wt w)^{-1} \rangle = c$. Similarly, $\langle x_ix_i^\top, \M(\wt w)^{-1} \rangle \leq c$ for all $i \in G_k$. From the condition, there exists $\alpha_i > 0$ such that $x_jx_j^\top = \sum_{i \in G_k} \alpha_i x_ix_i^\top$ and $\sum_{i \in G_k} \alpha_i < 1$. It follows that
\begin{align*}
    \langle x_jx_j^\top, \M(\wt w)^{-1} \rangle = \sum_{i \in G_k} \alpha_i \langle  x_ix_i^\top , \M(\wt w)^{-1} \rangle
    < c.
\end{align*}
This is a contradiction.
\end{proof}

\subsection{Proof of Proposition~\ref{prop:free_riding_cost}}
\begin{proof}
Suppose there exists a pure Nash equilibrium $\wt w$ such that $\wt w_j \neq 0$ and $j \in G_l$. Then first order optimality yields for any $\Delta w_{G_l}$ such that $\supp (\Delta w_{G_l}) \subset \supp (\wt w_{G_l})$
\begin{align*}
    0 = \sum_{i \in G_l}  \left( \langle x_ix_i^\top, \M(\wt w)^{-1} \rangle - c^{(l)} \right)\cdot \Delta w_i.
\end{align*}
It follows that $\langle x_jx_j^\top, \M(\wt w)^{-1} \rangle = c^{(l)}$. Similarly, $\langle x_ix_i^\top, \M(\wt w)^{-1} \rangle \leq c^{(k)}$ for all $i \in G_k$. This yields $c^{(k)} \geq \langle x_ix_i^\top, \M(\wt w)^{-1} \rangle= c^{(l)}$ for $i \in \supp(\wt w_{G_l})$, which contradicts the condition.
\end{proof}

\subsection{Proof of Proposition~\ref{prop:max_possible_information}}

\begin{proof}
We first notice that the problem in Eq.~\eqref{eq:max_object_design_linear} has compact feasible set and concave objective. Therefore, it has unique maximizer $w_{\max}$. 
Consider any mechanism $\mec$. The maximum possible information that can be achieved under $\mec$ is given by
\begin{align*}
    \max_{w \in \R^n_+} \log \det \M(w),
    ~\text{s.t.} \left(u^{(k)}\circ \mec^{(k)}\right)(w) \geq v^{(k)}_\ast.
\end{align*}
Let $\wt w$ be the maximizer of the above program, then due to $\mec^{(k)}(w)_i \leq w_i ,~\forall i$ we have
\begin{align*}
    u^{(k)}(w) \geq &~ \left(u^{(k)}\circ \mec^{(k)}\right)(w) \\
    \geq &~ v^{(k)}_\ast.
\end{align*}
Therefore $\wt w$ is in the feasible set of the optimization problem in Eq.~\eqref{eq:max_object_design_linear}. It follows from definition of $w_{\max}$ that $\log \det \M(\wt w) \leq \log \det \M(w_{\max})$.
\end{proof}

\subsection{Proof of Proposition~\ref{prop:free_riding}}
\label{sec:free_riding_proof}

\begin{proof}
First, notice that
\begin{align*}
    u^{(k)}\left(w_{\max}\right) = v^{(k)}_\ast, ~\forall k \in [K].
\end{align*}
Indeed, if there exist $k \in [K]$ such that $u^{(k)}\left(w_{\max}\right) > v^{(k)}_\ast$, then by setting $w_{\max,i}' = \begin{cases}
    (1+\epsilon) \cdot w_{\max,i}, &\text{if}~ i \in G_k\\
    w_{\max,i}, &\text{if}~ i \notin G_k
\end{cases}$ of sufficiently small $\epsilon > 0$, the constraints in Eq.~\eqref{eq:max_object_design_linear} is still satisfied, but $\log \det \M(w_{\max}') > \log \det \M(w_{\max})$. This contradicts to the fact that $w_{\max}$ is the maximizer.

Suppose $w_{\max}$ is the Nash equilibrium of $\left(\left(u^{(k)}\circ \mec_{\fed}^{(k)}\right)\right)_{k \in [K]}$, we will show that $\sum_{i=1}^K r_k = d$.

Indeed, by defining
\begin{align*}
    \bar u_k(w_{G_k}) := &~ - \log\det \left((A^{(k)})^\top \M\left((w_{G_k},w_{\max,G_k^c})\right)^{-1} A^{(k)}\right)   - c^{(k)} \cdot \sum_{i \in G_k} w_i,
\end{align*}
it follows that $w_{\max,G_k}$ is the maximizer of $\bar u_k$. 
First-order optimality condition and Lemma~\ref{lem:log_det_gradient} yields that for any $k$ and $l \in \supp(w_{\max,G_k})$
\begin{align*}
    0 = d\bar u_k(w_{\max,G_k},e_l) = \langle x_lx_l^\top, \M(w_{\max})^{-1}\rangle - c^{(k)}.
\end{align*}
As a result,
\begin{align}\label{eq:weighted_sum_w}
    \sum_{k=1}^K c^{(k)}\cdot \sum_{i\in G_k} w_{\max,i} = &~ \sum_{k=1}^K \sum_{i\in G_k} w_{\max,i}  \langle x_ix_i^\top, \M(w_{\max})^{-1}\rangle \notag\\
    = &~   \left\langle \sum_{i=1}^n w_{\max,i} x_ix_i^\top, \M(w_{\max})^{-1} \right\rangle \notag\\
    = &~ d.
\end{align}
Define
\begin{align*}
    v_k(w_{G_k}) = &~ u^{(k)}\left(\0,\dots,\0,w_{G_k},\0,\dots,\0\right)\\
    = &~ - \log \det \left((A^{(k)})^\top (\sum_{i \in G_k} w_i x_ix_i^\top ) A^{(k)} \right)^{-1} - c^{(k)} \cdot \sum_{i \in G_k} w_i.
\end{align*}
Let $w_{G_k}^\ast \in \arg \max_{w_{G_k}} v_k(w_{G_k})$, 
We have,
\begin{align*}
    \left(u^{(k)}\circ \mec_{\fed}^{(k)}\right)\left(w_{\max}\right) \geq &~ \bar u_k(w_{G_k}^\ast)\\
    = &~ - \log\det \left((A^{(k)})^\top \M\left((w_{G_k}^\ast,w_{\max,G_k^c})\right)^{-1} A^{(k)}\right)   - c^{(k)} \cdot \sum_{i \in G_k} w_i^\ast\\
    \geq &~ - \log \det \left((A^{(k)})^\top (\sum_{i \in G_k} w_i^\ast x_ix_i^\top ) A^{(k)} \right)^{-1} - c^{(k)} \cdot \sum_{i \in G_k} w_i^\ast\\
    = &~ v^{(k)}_\ast\\
    = &~ \left(u^{(k)}\circ \mec_{\fed}^{(k)}\right)\left(w_{\max}\right)
\end{align*}
where the second inequality is due to Lemma~\ref{lem:incentive_compatibility}. Therefore, the above inequalities are all equalities, which implies $w_{G_k}^\ast \in \arg \max \bar u_k(w_{G_k})$ and 
\begin{align*}
    - \log\det \left((A^{(k)})^\top \M\left(w_{\max}\right)^{-1} A^{(k)}\right) = - \log \det \left((A^{(k)})^\top (\sum_{i \in G_k} w_{\max,i} x_ix_i^\top ) A^{(k)} \right)^{-1} .
\end{align*}
It follows that $w_{\max,G_k} \in \arg \max v_k(w_{G_k})$ and thus $\|w_{\max,G_k}\|_1 = \|w_{G_k}^\ast\|_1$. 

First-order optimality condition and Theorem~\ref{thm:equivalence_kiefer_wolfowitz} yields that for any $k$ and $l \in G_k$
\begin{align*}
    0 = d\bar v_k(w_{G_k}^\ast,e_l) = \frac{r_k}{\|w_{G_k}^\ast\|_1} - c^{(k)}.
\end{align*}
As a result, $\|w_{\max,G_k}\|_1 = \|w_{G_k}^\ast\|_1 = \frac{r_k}{c^{(k)}}$. 
Combining this and Eq.~\eqref{eq:weighted_sum_w}, we have
\begin{align*}
    d = \sum_{k=1}^K c^{(k)}\cdot \sum_{i\in G_k} w_{\max,i} = \sum_{k=1}^K c^{(k)}\cdot \frac{r_k}{c^{(k)}} =  \sum_{k=1}^K r_k.
\end{align*}
This establishes the first statement. 

If $\sum_{k=1}^K r_k > d$, then the above arguments imply that there exist $k \in [K]$ and $i \in \supp(w_{\max,G_k})$ such that $d \bar u_k(w_{\max,G_k},e_i)<0$. It follows that by letting $\wt w_{G_k} = w_{\max,G_k} - \epsilon e_i$ for sufficiently small $\epsilon > 0$, we have
\begin{align*}
    \left(u^{(k)}\circ \mec^{(k)}_{\fed}\right)\left((\wt w_{G_k}, w_{\max,G_k^c})\right) > \left(u^{(k)}\circ \mec^{(k)}_{\fed}\right)\left( w_{\max}\right).
\end{align*}
This completes the proof.
\end{proof}

\subsection{Proof of Proposition~\ref{prop:data_max}}
\label{sec:max_proof}

\begin{proof}
Fix $k \in [K]$. Define 
\begin{align*}
    \bar u_k(w_{G_k}) := &~ - \log\det \left((A^{(k)})^\top \M\left((w_{G_k},w_{\max,G_k^c})\right)^{-1} A^{(k)}\right)  \\
    &~ - c^{(k)} \cdot \sum_{i \in G_k} w_i - c^{(k)} \cdot \sum_{i \in G_k}\left(w_{\max,i} - w_i\right)_+.
\end{align*}
To see that $w_{\max}$ is a pure NE, it suffices to show that $w_{\max,G_k} = \arg \max \bar u_k(w_{G_k})$. Indeed, if $\wt w_{G_k} = \arg \max \bar u_k(w_{G_k})$ and $\wt w_{G_k} \neq w_{\max,G_k}$. Consider the following two cases.

\textbf{Case 1: There exists $i \in G_k$ such that $\wt w_i < w_{\max,i}$.} 

Let $\wt w_{j}' = \begin{cases}
    w_{\max,i}, &~ \text{if}~ j = i\\
    \wt w_{j}, &~ \text{otherwise}
\end{cases} $. Then
\begin{align*}
    \bar u_k(\wt w_{G_k}') := &~ - \log\det \left((A^{(k)})^\top \M\left((\wt w_{G_k}',w_{\max,G_k^c})\right)^{-1} A^{(k)}\right)  \\
    &~ - c^{(k)} \cdot \left(w_{\max,i} + \sum_{j \in G_k/\{i\}} \wt w_j\right) - c^{(k)} \cdot \sum_{j \in G_k/\{i\}}\left(w_{\max,j} - \wt w_j\right)_+\\
    > &~ - \log\det \left((A^{(k)})^\top \M\left((\wt w_{G_k},w_{\max,G_k^c})\right)^{-1} A^{(k)}\right)  \\
    &~ - c^{(k)} \cdot \sum_{j \in G_k} \wt w_j - c^{(k)} \cdot \sum_{j \in G_k}\left(w_{\max,j} - \wt w_j\right)_+\\
    = &~ \bar u_k(\wt w_{G_k}),
\end{align*}
where the first step is due to $w_{\max,i} - \wt w_i' = 0$ and the second step comes from Lemma~\ref{lem:log_det_gradient} and $w_{\max,i} = \wt w_i + \left(w_{\max,i} - \wt w_i\right)_+$. 
This contradicts with $\wt w_{G_k} = \arg \max \bar u_k(w_{G_k})$.

\textbf{Case 2: $\wt w_j \geq w_{\max, j}, ~\forall j \in G_k$ and there exists $i \in G_k$ such that $\wt w_i > w_{\max,i}$.}

Notice that in this case $\log \det \M\left((\wt w_{G_k},w_{\max, G_k^c})\right) > \log \det \M\left(w_{\max}\right)$. Therefore there exists $j \in [K]$ such that $\left(u^{(j)}\circ \mec^{(j)}_{\max}\right)\left((\wt w_{G_j},w_{\max, G_j^c})\right) < v^{(j)}_\ast$, and it is obvious that such $j$'s must include $k$. As a result,
\begin{align*}
    \bar u_k(\wt w_{G_k}) = \left(u^{(k)}\circ \mec^{(k)}_{\max}\right)\left((\wt w_{G_k},w_{\max, G_k^c})\right) < v^{(k)}_\ast \leq \left(u^{(k)}\circ \mec^{(k)}_{\max}\right)\left(w_{\max}\right) = \bar u_k(w_{\max}).
\end{align*}
This is a contradiction. Therefore, we have shown that $(w_{\max,G_k})_{k \in [K]}$ is a pure Nash equilibrium. Since $w_{\max}$ is the solution of Eq.~\eqref{eq:max_object_design_linear}, Individual Rationality is satisfied. As a result, $w_{\max}$ is a strategic response of mechanism $\mec_{\max}$.

Next, we display uniqueness. 
Suppose for the sake of contradiction that there exists a Nash equilibrium $(\wt w_{G_k})_{k \in [K]} \neq (w_{\max,G_k})_{k \in [K]}$. We follow the above line of arguments and consider the following two cases.

\textbf{Case 1: There exists $k \in [K]$ and $i \in G_k$ such that $\wt w_i < w_{\max,i}$.} 

Define $\wt w_{j}'$ as follows:
\begin{align*}
\wt w_{j}' = \begin{cases}
    w_{\max,i}, &~ \text{if}~ j = i\\
    \wt w_{j}, &~ \text{otherwise}.
\end{cases}
\end{align*}
Then
\begin{align*}
    \left(u^{(k)}\circ \mec^{(k)}_{\max}\right)\left((\wt w_{G_k}',\wt w_{G_k^c})\right) := &~ - \log\det \left((A^{(k)})^\top \M\left((\wt w_{G_k}',\wt w_{G_k^c})\right)^{-1} A^{(k)}\right)  \\
    &~ - c^{(k)} \cdot \left(w_{\max,i} + \sum_{j \in G_k/\{i\}} \wt w_j\right) - c^{(k)} \cdot \sum_{j \in G_k/\{i\}}\left(w_{\max,j} - \wt w_j\right)_+\\
    > &~ - \log\det \left((A^{(k)})^\top \M\left(\wt w\right)^{-1} A^{(k)}\right)  \\
    &~ - c^{(k)} \cdot \sum_{j \in G_k} \wt w_j - c^{(k)} \cdot \sum_{j \in G_k}\left(w_{\max,j} - \wt w_j\right)_+\\
    = &~ \left(u^{(k)}\circ \mec^{(k)}_{\max}\right)\left(\wt w\right).
\end{align*}
This contradicts $\wt w_{G_k} = \arg \max_{w_{G_k}} \left(u^{(k)}\circ \mec^{(k)}_{\max}\right)\left((w_{G_k},\wt w_{G_k^c})\right)$.

\textbf{Case 2: $\wt w_j \geq w_{\max, j}, ~\forall j \in [n]$ and there exists $k \in [K]$ and $i \in G_k$ such that $\wt w_i > w_{\max,i}$.}

Since $\log \det \M\left(\wt w\right) > \log \det \M\left(w_{\max}\right)$, there exists $j \in [K]$ such that $\left(u^{(j)}\circ \mec^{(j)}_{\max}\right)\left(\wt w\right) < v^{(j)}_\ast$. Obviously, there exists $i \in G_j$ such that $\wt w_i > w_{\max,i}$. As a result,
\begin{align*}
    \left(u^{(j)}\circ \mec^{(j)}_{\max}\right)(\wt w) < v^{(j)}_\ast \leq \left(u^{(j)}\circ \mec^{(j)}_{\max}\right)\left(w_{\max}\right) \leq \left(u^{(j)}\circ \mec^{(j)}_{\max}\right)\left((w_{\max,G_j},\wt w_{G_j^c})\right).
\end{align*}
This means 
\begin{align*}
    \wt w_{G_j} \notin \arg \max_{w_{G_j} \in \R^{|G_j|}_+} \left(u^{(j)}\circ \mec^{(j)}_{\max}\right)\left(( w_{G_j},\wt w_{G_j^c})\right),
\end{align*} 
which is a contradiction.
\end{proof}

\subsection{Proof of Corollary~\ref{cor:incentive_compatibility}}\label{sec:proof_ic}

\begin{proof}
Suppose there exist $k \in [K]$ such that $\left(u^{(k)}\circ \mec_{\max}^{(k)}\right)\left(w_{\max}\right) > v^{(k)}_\ast$, then the definition of $\mec_{\max}$ yields $u^{(k)}(w_{\max}) = \left(u^{(k)}\circ \mec_{\max}^{(k)}\right)\left(w_{\max}\right) > v^{(k)}_\ast$. By setting $w_{i}' = \begin{cases}
    (1+\epsilon) \cdot w_{\max,i}, &\text{if}~ i \in G_k\\
    w_{\max,i}, &\text{if}~ i \notin G_k
\end{cases}$ of sufficiently small $\epsilon > 0$, we have for any $l \in [K]$,
\begin{align*}
    u^{(l)}(w') \geq v^{(l)}_*.
\end{align*}
Thus the constraints in Eq.~\eqref{eq:max_object_design_linear} is still satisfied, but $\log \det \M(w_{\max}') > \log \det \M(w_{\max})$. This contradicts the fact that $w_{\max}$ is the optimizer in Eq.~\ref{eq:max_object_design_linear}.
    
\end{proof}

\subsection{Proof of Proposition~\ref{prop:max_fairness}}

\begin{proof}

Fix $k \neq k'$ and assume $\left(u^{(k)}\circ \mec_{\max}^{(k)}\right)\left(\bar w\right) \geq \left(u^{(k')}\circ \mec_{\max}^{(k')}\right)\left(\bar w\right)$. By exchangeability and Corollary~\ref{cor:incentive_compatibility} we have, modulo a constant term $2\log \|x\|_2$, that
\begin{align*}
     \log \left(\|\bar w\|_1\right)  - c^{(k)} \cdot \|\bar w_{G_k}\|_1
    = &~ \left(u^{(k)}\circ \mec_{\max}^{(k)}\right)(\bar w)\\
    = &~ v^{(k)}_*\\
    = &~  -\log {c^{(k)}}  - 1.
\end{align*}
Therefore $c^{(k)}
 \leq c^{(k')}$ and we have
\begin{align*}
    \|\bar w_{G_k}\|_1 = \frac{ \log \left(\|\bar w\|_1\right) + \log {c^{(k)}}  + 1}{c^{(k)}}.
\end{align*}
Now we define
\begin{align*}
    f(c) = \frac{ \log \left(\|\bar w\|_1\right) + \log {c}  + 1}{c}.
\end{align*}
Notice that $f'(c) = -\frac{\log \left(\|\bar w\|_1\right) + \log {c}}{c^2} < 0$ for any $c \geq \min_{l \in [K]} c^{(l)}$, thus 
\begin{align*}
    \|\bar w_{G_k}\|_1 = &~ \frac{ \log \left(\|\bar w\|_1\right) + \log {c^{(k)}}  + 1}{c^{(k)}}\\
    \geq &~ \frac{ \log \left(\|\bar w\|_1\right) + \log {c^{(k')}}  + 1}{c^{(k')}}\\
    = &~ \|\bar w_{G_k'}\|_1.
\end{align*}
This confirms that $\|\bar w_{G_k}\|_1 \geq \|\bar w_{G_k'} \|_1$.
\end{proof}

\subsection{Proof of Proposition~\ref{prop:poa_max}}

\begin{proof}
Define $k_0 = \arg \min_{k\in [K]}c^{(k)}$ and
\begin{align*}
    \theta^{(k)}_* := &~ \max_{\pi \in \Delta([n])} - \log \det \left((A^{(k)})^\top \left(\sum_{i =1}^n \pi_ix_ix_i^\top \right)^{-1} A^{(k)}\right)\\
    \theta^{(k)} := &~ \max_{\pi \in \Delta(G_k)} \log \det \left((A^{(k)})^\top \left(\sum_{i \in G_k} \pi_ix_ix_i^\top \right) A^{(k)}\right).
\end{align*}
We have
\begin{align*}
    &~ \sg(w)\\
    = &~ \sum_{k=1}^K \left(u^{(k)}\circ \mec_{\max}^{(k)}\right)(w)\\
    = &~ \sum_{k=1}^K \left(- \log\det \left((A^{(k)})^\top \M(w)^{-1} A^{(k)}\right)  - c^{(k)} \cdot \sum_{i \in G_k} w_i - c^{(k)}\cdot\sum_{i \in G_k}\left(w_{\max,i} - w_i\right)_+\right)\\
    \leq &~ \sum_{k=1}^K \left( \theta_\ast^{(k)} + r_k\log \|w\|_1  - c^{(k)} \cdot \sum_{i \in G_k} w_i - c^{(k)}\cdot\sum_{i \in G_k}\left(w_{\max,i} - w_i\right)_+\right)\\
    \leq &~ \sum_{k=1}^K \left( \theta_\ast^{(k)} + r_k\log \frac{\sum_{k=1}^K r_k}{c^{(k_0)}}  - r_k - (c^{(k)} - c^{(k_0)}) \cdot \| w_{\max, G_k} \|_1 \right),
\end{align*}
where the maximizer in the last inequality is given by 
\begin{align*}
    \|w_{G_k}\|_1 = \begin{cases}
    \|w_{\max,G_k}\|_1 , &~ k \neq k_0\\
    \|w_{\max,G_k}\|_1 + \frac{\sum_{k=1}^K r_k}{c^{(k_0)}} - \| w_{\max} \|_1, &~ k = k_0.
\end{cases}
\end{align*}
Further, notice that
\begin{align*}
    \sg(w_{\max})
    = &~ \sum_{k=1}^K \left(u^{(k)}\circ \mec_{\max}^{(k)}\right)(w_{\max})\\
    = &~ \sum_{k=1}^K v^{(k)}_*\\
    = &~ \sum_{k=1}^K \left( \theta^{(k)} + r_k\log \frac{r_k}{c^{(k)}}  - r_k\right),
\end{align*}
where the second inequality uses Corollary~\ref{cor:incentive_compatibility}.

It follows that
\begin{align*}
    &~ \frac{\sg(w)}{\sg (w_{\max})}\\
    \leq &~ \frac{\sum_{k=1}^K  \left( \theta_\ast^{(k)} - \theta^{(k)}\right)}{\sum_{k=1}^K \left( \theta^{(k)} + r_k\log \frac{r_k}{c^{(k)}}  - r_k\right)} + \frac{\sum_{k=1}^K \left(r_k\log \frac{c^{(k)}\sum_{k=1}^K r_k}{r_kc^{(k_0)}} - (c^{(k)} - c^{(k_0)}) \cdot \| w_{\max, G_k} \|_1\right)}{\sum_{k=1}^K \left( \theta^{(k)} + r_k\log \frac{r_k}{c^{(k)}}  - r_k\right)}+1\\
    = &~ \frac{\sum_{k=1}^K  \Delta^{(k)}}{\sum_{k=1}^K \left( \theta^{(k)} + r_k\log \frac{r_k}{c^{(k)}}  - r_k\right)} + \frac{\sum_{k=1}^K \left(r_k\log \frac{c^{(k)}\sum_{k=1}^K r_k}{r_kc^{(k_0)}} - (c^{(k)} - c^{(k_0)}) \cdot \| w_{\max, G_k} \|_1\right)}{\sum_{k=1}^K \left( \theta^{(k)} + r_k\log \frac{r_k}{c^{(k)}}  - r_k\right)}+1.
\end{align*}
\end{proof}

\section{Efficiency under Heterogeneous Costs}\label{sec:efficient_allocation}

In Section~\ref{sec:fed_efficient}, we investigated the efficiency of federated learning with homogeneous costs. However, the proof of Proposition~\ref{prop:fed_efficient_allocation} demonstrates that federated learning is not efficient when costs are heterogeneous. Therefore, in this section, we focus on mechanism designs to incentivize efficient allocation. Specifically, we consider the objective $w_{\eff} = n_{\max} \cdot \pi^\ast$, where $\pi^\ast$ represents an optimal design measure under the D-criterion, and $n_{\max}$ is defined as 
\begin{align}\label{eq:efficient_allocation_object}
    &~ n_{\max} = \max_{n \in \R_+} ~n,\\
    \text{s.t.} &~ u^{(k)}(n \cdot \pi^\ast) \geq v^{(k)}_\ast, \forall k \in [K] \notag.
\end{align}
The objective $w_{\eff}$ aims to maximize the total number of data while preserving efficient allocation of experiments. However, the feasibility of the program defined by Eq.~\eqref{eq:efficient_allocation_object} is not guaranteed in general. We provide a result that establishes a condition under which $w_{\eff}$ is well-defined and lower bounded by $w^\ast_{G_k}$, where
\begin{align*}
    w^\ast_{G_k} =\arg \max - \log \det \left((A^{(k)})^\top (\sum_{i \in G_k} w_i x_ix_i^\top ) A^{(k)} \right)^{-1} - c^{(k)} \cdot \sum_{i \in G_k} w_i.
\end{align*}

\begin{assumption}[Data compatibility]\label{asp:data_compatibility}

We assume for any $k, k' \in [K]$, 
\begin{align*}
    u^{(k')}\left(\frac{\|w^\ast_{G_k}\|_1}{\|\pi^\ast_{G_k}\|_1} \cdot \pi^\ast\right) \geq v^{(k')}_\ast.
\end{align*}
\end{assumption}
This assumption implies that if we scale up the D-optimal design according to $w^\ast_{G_k}$, the utility for any other agent $k'$ is still no less than the maximum utility that agent $k'$ can achieve if she opts out of the collaborative learning and trains a model using her own data. Therefore, $\pi^\ast$ is compatible in the sense that no agent has an incentive to leave the collaborative learning program if they follow $\pi^\ast$ and each agent $k$ contribute at least $\|w^\ast_{G_k}\|_1$ data points. Under this condition, we can derive the following result:

\begin{proposition}[Feasibility and incentivized more contribution]\label{prop:feasibility}
Suppose Assumption~\ref{asp:data_compatibility} holds. Then the problem in Eq.~\eqref{eq:efficient_allocation_object} is feasible. 
Furthermore, For all $k \in [K]$ we have $n_{\max} \cdot \sum_{i \in G_k}\pi_i^\ast \geq \sum_{i \in G_k} w_i^\ast$.
\end{proposition}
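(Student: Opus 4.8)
The plan is to recognize that Assumption~\ref{asp:data_compatibility} is, read one index at a time, nothing more than a feasibility certificate for the program in Eq.~\eqref{eq:efficient_allocation_object}. Fix any $j \in [K]$ and set $n_j := \|w^\ast_{G_j}\|_1 / \|\pi^\ast_{G_j}\|_1$ (this is well-defined since the assumption implicitly requires $\|\pi^\ast_{G_k}\|_1 = \sum_{i \in G_k}\pi^\ast_i > 0$ for every $k$). Then $n_j \cdot \pi^\ast$ is exactly the point $\tfrac{\|w^\ast_{G_j}\|_1}{\|\pi^\ast_{G_j}\|_1}\pi^\ast$ appearing in the assumption with $k = j$, so the assumption asserts $u^{(k')}(n_j \pi^\ast) \geq v^{(k')}_\ast$ for all $k' \in [K]$. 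These are precisely the constraints of Eq.~\eqref{eq:efficient_allocation_object} evaluated at $n = n_j$; hence every $n_j$ is a feasible point and the program is feasible.

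To upgrade feasibility to an attained maximum, I would first record the closed form of the constraint functions using the homogeneity $\M(n\pi^\ast) = n\,\M(\pi^\ast)$ and $(n\,\M(\pi^\ast))^\dagger = \tfrac1n \M(\pi^\ast)^\dagger$. For the D-criterion this gives
\[
u^{(k)}(n\pi^\ast) = r_k \log n - \log\det\!\big((A^{(k)})^\top \M(\pi^\ast)^\dagger A^{(k)}\big) - c^{(k)}\, n\, \|\pi^\ast_{G_k}\|_1 .
\]
This is strictly concave in $n$ on $(0,\infty)$ and diverges to $-\infty$ both as $n \to 0^+$ (through $r_k\log n$) and as $n \to \infty$ (through $-c^{(k)} n\|\pi^\ast_{G_k}\|_1$, using $c^{(k)} > 0$ and $\|\pi^\ast_{G_k}\|_1 > 0$). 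Consequently each superlevel set $\{n : u^{(k)}(n\pi^\ast) \geq v^{(k)}_\ast\}$ is a compact interval, the feasible set is their finite intersection and hence compact, and it is nonempty by the previous paragraph. Therefore $n_{\max}$ is finite and attained.

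Finally, for the lower bound, since every $n_j$ is feasible and $n_{\max}$ is the largest feasible value, we have $n_{\max} \geq n_j = \|w^\ast_{G_j}\|_1 / \|\pi^\ast_{G_j}\|_1$ for each $j \in [K]$. Multiplying by $\|\pi^\ast_{G_j}\|_1 = \sum_{i \in G_j}\pi^\ast_i$ and using $\|w^\ast_{G_j}\|_1 = \sum_{i \in G_j} w^\ast_i$ (the entries of $w^\ast_{G_j}$ are nonnegative) yields $n_{\max}\sum_{i \in G_j}\pi^\ast_i \geq \sum_{i \in G_j} w^\ast_i$, which is the claimed inequality. The main---and quite mild---obstacle is the bookkeeping needed to guarantee the supremum is attained rather than merely approached, handled by the concavity-plus-coercivity argument above; the genuine content sits entirely in the first step, namely noticing that the data-compatibility assumption is stated so that each per-agent scaling $n_j$ of the shared optimal design $\pi^\ast$ is automatically feasible.
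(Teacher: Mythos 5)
Your proof is correct, and its skeleton matches the paper's: you exhibit the per-agent scalings $n_j = \|w^\ast_{G_j}\|_1/\|\pi^\ast_{G_j}\|_1$ as feasible points of Eq.~\eqref{eq:efficient_allocation_object} and conclude $n_{\max} \geq n_j$, which is the claimed inequality after multiplying through by $\|\pi^\ast_{G_j}\|_1$. Two differences are worth noting. First, you read Assumption~\ref{asp:data_compatibility} literally (quantified over all pairs $k,k'$, including $k'=k$) and thereby get feasibility of every $n_j$ for free; the paper instead proves the diagonal case $u^{(k)}(n_k \cdot \pi^\ast)\geq v^{(k)}_\ast$ unconditionally, using concavity and monotonicity of $w_{G_k}\mapsto -\log\det\left((A^{(k)})^\top\M(w)^{-1}A^{(k)}\right)$ (Lemma~\ref{lem:log_det_gradient}) together with the comparison Lemma~\ref{lem:incentive_compatibility}, and invokes the assumption only for the cross cases $k'\neq k$. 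Your shortcut is legitimate against the assumption as formally stated, but the paper's route additionally shows the diagonal instances are automatic, i.e., the real content of the assumption is the off-diagonal compatibility between agents. Second, your closed-form computation $u^{(k)}(n\pi^\ast)=r_k\log n-\log\det\left((A^{(k)})^\top\M(\pi^\ast)^{-1}A^{(k)}\right)-c^{(k)}n\|\pi^\ast_{G_k}\|_1$, with strict concavity and coercivity in $n$, makes each superlevel set a compact interval and hence the maximum attained; the paper glosses over boundedness when it writes $I_k=[a_k,b_k]$, so on this point your version is the more careful one. A minor remark: for compactness of the intersection it suffices that \emph{some} agent has $\|\pi^\ast_{G_k}\|_1>0$, which always holds since $\pi^\ast\in\Delta(\X)$, though your stronger observation that the assumption is only well-posed when $\|\pi^\ast_{G_k}\|_1>0$ for every $k$ is also sound.
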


\begin{proof}

Define $I_k = \left\{n \in \R_+: u^{(k)}(n \cdot \pi^\ast) \geq v^{(k)}_\ast\right\}$. Notice $u^{(k)}$ is concave and 
\begin{align*}
    &~ u^{(k)}\left(\frac{\|w^\ast_{G_k}\|_1}{\|\pi^\ast_{G_k}\|_1} \cdot \pi^\ast\right)\\
    = &~ - \log\det \left((A^{(k)})^\top \M\left(\frac{\|w^\ast_{G_k}\|_1}{\|\pi^\ast_{G_k}\|_1} \cdot \pi^\ast\right)^{-1} A^{(k)}\right)   - c^{(k)} \cdot  \|w^\ast_{G_k}\|_1\\
    \geq &~ - \log\det \left((A^{(k)})^\top \M\left(\left(w^\ast_{G_k},\left(\frac{\|w^\ast_{G_k}\|_1}{\|\pi^\ast_{G_k}\|_1} \cdot \pi^\ast_{j}\right)_{j \notin G_k}\right)\right)^{-1} A^{(k)}\right)   - c^{(k)} \cdot  \|w^\ast_{G_k}\|_1\\
    \geq &~ - \log\det \left((A^{(k)})^\top \left(\sum_{i \in G_k} w_i^\ast x_ix_i^\top \right) A^{(k)} \right)^{-1}   - c^{(k)} \cdot  \|w^\ast_{G_k}\|_1\\
    = &~ v_\ast^{(k)},
\end{align*}
where the second step comes from Applying Lemma~\ref{lem:log_det_gradient} and the fact that $u_k$ is concave wrt $w_{G_k}$; the third step comes from Lemma~\ref{lem:incentive_compatibility}. 
As a result, $I_k$ is a closed interval and $\frac{\|w^\ast_{G_k}\|_1}{\|\pi^\ast_{G_k}\|_1} \in I_k$ for any $k \in [K]$. We rewrite $I_k = [a_k,b_k]$ where $a_k \leq \frac{\|w^\ast_{G_k}\|_1}{\|\pi^\ast_{G_k}\|_1} \leq b$.

Assumption~\ref{asp:data_compatibility} implies that $\frac{\|w^\ast_{G_k}\|_1}{\|\pi^\ast_{G_k}\|_1} \in I_{k'}$ for any $k' \in [K]$. Therefore, $\cap_{k \in [k]} I_k \neq \emptyset$ and $n_{\max} = \min_{k \in [k]} b_k$. This establishes feasibility and $n_{\max} \geq \frac{\|w^\ast_{G_k}\|_1}{\|\pi^\ast_{G_k}\|_1}$.

\end{proof}

\subsection{Mechanism design for pure efficient allocation}
We begin by considering pure efficient allocation, which best illustrates the nature of the problem. In this subsection, we omit the cost functions and assume $c^{(1)} = \cdots = c^{(K)} = 0$. The goal in this section is to design mechanisms $\mec^{(k)}$ such that all Nash equilibrium wrt the utility functions $\left(\left(u^{(k)}\circ \mec^{(k)}\right)\right)_{k \in [K]}$ takes the form of $(\lambda \cdot \pi^\ast_{G_k})_{k \in [K]}$ where $\lambda > 0$, i.e. proportional to the optimal design measure.

We define the following mechanism based on scaling the design by a constant $\eta_k \leq 1$:
\begin{align}\label{eq:mechanism_pure_eff}
    \mec_{\peff}^{(k)}(w) = \eta_k {w} \text{ where } \eta_k^{-1} = \exp\left(\frac{d}{r_k} \cdot \left(\frac{(\sum_{i \notin G_k}\pi_i^\ast)(\sum_{i \in G_k}w_i)}{\sum_{i \notin G_k}w_i} -\sum_{i \in G_k}\pi_i^\ast \right)_+ \right).
\end{align}

The intuition behind $\eta_k$ is to introduce competition among agents, penalizing those who contribute proportionally less data than others. In fact, any strategic agent $k$ under this mechanism is incentivized to contribute no less than $\frac{\sum_{i \in G_k}\pi_i^\ast}{\sum_{i \notin G_k}\pi_i^\ast}$ times the total amount of data collected by the other agents. 
Therefore, the mechanism in Eq.~\eqref{eq:mechanism_pure_eff} ensures that the marginal probability of the aggregated design measure on each agent $k$, i.e., $(\sum_{i \in G_k}w_i)/(\sum_{i =1}^n w_i)$, aligns with the marginal probability of the optimal design measure, i.e. $(\sum_{i \in G_k}\pi^*_i)/(\sum_{i =1}^n \pi^*_i)$. By leveraging the properties of D-optimal design, we can demonstrate that it further ensures alignment between $w$ and $\pi^\ast$ for each coordinate. 
Besides subsampling, this mechanism can also be efficiently implemented by letting
\begin{align*}
    \hat \theta^{(k)} = \hat \theta + \zeta^{(k)},~\text{where}~
    \zeta^{(k)} \sim \mathcal{N}\left(0,(\eta_k^{-1}-1) \cdot \M(w)^{-1} \right).
\end{align*}

It follows that agent $k$'s utility is given by
\begin{align}\label{eq:utility_incentivized}
     \left(u^{(k)}\circ \mec^{(k)}_{\peff}\right)(w) = - \log\det \left((A^{(k)})^\top \M(w)^{-1} A^{(k)}\right)  - d \cdot \left(\frac{(\sum_{i \notin G_k}\pi_i^\ast)(\sum_{i \in G_k}w_i)}{\sum_{i \notin G_k}w_i} -\sum_{i \in G_k}\pi_i^\ast \right)_+ .
\end{align}

\begin{proposition}[Pure Efficient allocation]\label{prop:efficient_allocation}
For any $\lambda > 0$, $(\lambda \cdot \pi^\ast_{G_k})_{k \in [K]}$ is a pure Nash equilibrium of the tuple of utility functions $\left(\left(u^{(k)}\circ \mec^{(k)}_{\peff}\right)\right)_{k \in [K]}$. Furthermore, any pure Nash equilibrium takes the form of $(\lambda \cdot \pi^\ast_{G_k})_{k \in [K]}$.
\end{proposition}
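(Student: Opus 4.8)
The plan is to establish both directions from the first-order (subgradient) characterization of each agent's best response together with the Kiefer--Wolfowitz equivalence theorem (Theorem~\ref{thm:equivalence_kiefer_wolfowitz}). Write $S_k = \sum_{i\in G_k} w_i$, $T_k = \sum_{i\notin G_k} w_i$, $p_k = \sum_{i\in G_k}\pi^\ast_i$, and $q_k = 1-p_k = \sum_{i\notin G_k}\pi^\ast_i$, so that by Eq.~\eqref{eq:utility_incentivized} agent $k$ maximizes $-\log\det((A^{(k)})^\top\M(w)^{-1}A^{(k)}) - d\,(\tfrac{q_k S_k}{T_k}-p_k)_+$ over $w_{G_k}$. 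The key structural facts I would record first are that this objective is concave in $w_{G_k}$ (the reward term is concave because $\M^{(k)}(w)$ is affine in $w_{G_k}$, as in the proof of Proposition~\ref{prop:existence_ne_2}, and the penalty is convex), and that Lemma~\ref{lem:log_det_gradient} gives $\partial_{w_i}\!\big(-\log\det((A^{(k)})^\top\M(w)^{-1}A^{(k)})\big) = \langle x_ix_i^\top,\M(w)^{-1}\rangle$ for $i\in G_k$. Hence a point is a best response exactly when it satisfies the subgradient optimality condition for concave maximization over $\R^{|G_k|}_+$.

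To verify that $w=\lambda\pi^\ast$ is an equilibrium, fix the others at $\lambda\pi^\ast_{G_k^c}$. At $w_{G_k}=\lambda\pi^\ast_{G_k}$ the penalty argument equals $\tfrac{q_k\lambda p_k}{\lambda q_k}-p_k = 0$, so we sit exactly at the kink of $(\cdot)_+$. The reward gradient is $\langle x_ix_i^\top,\M(\lambda\pi^\ast)^{-1}\rangle = \tfrac1\lambda\langle x_ix_i^\top,\M(\pi^\ast)^{-1}\rangle$, which by Theorem~\ref{thm:equivalence_kiefer_wolfowitz} equals $\tfrac d\lambda$ on $\supp(\pi^\ast)$ and is $\le\tfrac d\lambda$ off it. Taking the subgradient of the kink at its upper endpoint (slope $\tfrac d\lambda$ in every coordinate $i\in G_k$) cancels the reward gradient on the support and dominates it elsewhere; this is precisely the optimality condition, so $\lambda\pi^\ast_{G_k}$ is a best response for each $k$.

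For the converse, let $\wt w$ be any pure Nash equilibrium; since each agent can unilaterally guarantee finite reward on its own subspace, $\M(\wt w)\succ 0$ (assuming $\X$ spans $\R^d$). I would first show the penalty argument $\rho_k := \tfrac{q_k S_k}{T_k}-p_k$ is nonnegative: if $\rho_k<0$ the penalty is locally inactive while the reward strictly increases in each $w_i$ (its gradient $\langle x_ix_i^\top,\M(\wt w)^{-1}\rangle$ is positive), so the agent would profitably raise its contribution. Thus either $\rho_k>0$ (smooth penalty, slope $\tfrac{dq_k}{T_k}$) or $\rho_k=0$ (kink), and in both cases the subgradient condition produces a common threshold $\tau_k$ with $\langle x_ix_i^\top,\M(\wt w)^{-1}\rangle=\tau_k$ on $\supp(\wt w_{G_k})$ and $\le\tau_k$ on $G_k$. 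Passing to $\bar w=\wt w/\|\wt w\|_1$, $\bar S_k=\sum_{i\in G_k}\bar w_i$, and $\sigma_k:=\|\wt w\|_1\tau_k$, and using $T_k=\|\wt w\|_1(1-\bar S_k)$, a short computation shows $\rho_k>0$ forces $\bar S_k>p_k$ and $\sigma_k>d$, whereas $\rho_k=0$ forces $\bar S_k=p_k$ and $\sigma_k=t_k d\le d$ for some $t_k\in[0,1]$.

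Two global identities then close the argument. Summing $\bar S_k=p_k$ (kink agents) and $\bar S_k>p_k$ (active-penalty agents $\mathcal A$) against $\sum_k\bar S_k=1=\sum_k p_k$ yields $\sum_{k\in\mathcal A}\bar S_k=\sum_{k\in\mathcal A}p_k$, impossible unless $\mathcal A=\emptyset$; hence every agent is at the kink with $\bar S_k=p_k$. Next, the trace identity $\sum_i\bar w_i\langle x_ix_i^\top,\M(\bar w)^{-1}\rangle=\tr(I_d)=d$ becomes $\sum_k\sigma_k\bar S_k=\sum_k t_k d\,p_k=d$, and since $\sum_k p_k=1$ with $t_k\le1$ this forces $t_k=1$, i.e.\ $\sigma_k=d$, for every $k$ with $p_k>0$; the boundary condition for agents with $p_k=0$ gives $\langle x_ix_i^\top,\M(\bar w)^{-1}\rangle\le d$ as well. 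Therefore $\langle x_ix_i^\top,\M(\bar w)^{-1}\rangle\le d$ for all $i$ with equality on $\supp(\bar w)$, so Theorem~\ref{thm:equivalence_kiefer_wolfowitz} gives that $\bar w$ is D-optimal; combined with the matched marginals $\bar S_k=p_k$ and the uniqueness of the D-optimal information matrix, this yields $\wt w=\lambda\pi^\ast$ with $\lambda=\|\wt w\|_1$.

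I expect the main obstacle to be the converse direction, specifically handling the kink of the penalty: at $\rho_k=0$ the subgradient is an entire interval, so one cannot merely set gradients to zero, and the crux is to use the two aggregate identities ($\sum_k\bar S_k=1$ and $\langle\M(\bar w),\M(\bar w)^{-1}\rangle=d$) to pin all kink-multipliers $t_k$ to $1$ and to rule out any over-contributing agent. A secondary subtlety is the final identification $\wt w=\lambda\pi^\ast$, which strictly relies on uniqueness of the D-optimal design: all D-optimal measures share the information matrix $\M(\pi^\ast)$, and the matched group marginals $\bar S_k=p_k$ remove the remaining freedom.
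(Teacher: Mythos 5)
Your proof is correct and, at its core, runs on the same machinery as the paper's: Lemma~\ref{lem:log_det_gradient} for the reward gradient, concavity of the agent objective in $w_{G_k}$, Theorem~\ref{thm:equivalence_kiefer_wolfowitz}, and the summation trick $\sum_k \bar S_k = 1 = \sum_k p_k$ to force the marginal alignment $\bar S_k = p_k$ (the paper's Eq.~\eqref{eq:marginal_prob_alignment}); your forward direction (subgradient certificate at the kink with the upper endpoint $t_k=1$) is just the subdifferential restatement of the paper's one-sided Gateaux computation. Where you genuinely diverge is the endgame of the uniqueness direction. The paper argues by contradiction: once marginals are aligned, if $\bar w := \wt w/\|\wt w\|_1$ were not D-optimal, Theorem~\ref{thm:equivalence_kiefer_wolfowitz} supplies an $l$ with $\langle x_l x_l^\top, \M(\bar w)^{-1}\rangle > d$, and since the one-sided penalty slope at the kink is exactly $dq_k/T_k = d/\|\wt w\|_1$, the coordinate direction $e_l$ is strictly improving---one line and done. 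You instead run a full KKT analysis, extracting a per-agent kink multiplier $t_k \in [0,1]$ (correctly common across the coordinates of $G_k$, since the penalty depends on $w_{G_k}$ only through $S_k$) and pinning all multipliers to $1$ via the trace identity $\sum_k t_k p_k = 1$, which yields the Kiefer--Wolfowitz equality and inequality conditions constructively rather than by contradiction. Your route is longer but exhibits the equilibrium structure explicitly, and it surfaces something the paper elides: the final identification $\bar w = \pi^\ast$ needs uniqueness of the D-optimal design \emph{measure}, not merely of the information matrix---the paper's ``contradiction'' from ``$\wt w$ not of the form $\lambda\pi^\ast$'' silently makes the same assumption. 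One caution: your closing claim that the matched marginals $\bar S_k = p_k$ ``remove the remaining freedom'' is not right in general (with duplicated design points inside a single group, distinct D-optimal measures share both the information matrix and all group marginals), so uniqueness of $\pi^\ast$ must simply be assumed, as the paper implicitly does. Both proofs also tacitly require $q_k > 0$ for every $k$ (otherwise agent $k$'s penalty never activates and, with $c^{(k)}=0$, no best response exists, and the mechanism's formula divides by zero at $\lambda\pi^\ast$) and $\M(\wt w) \succ 0$ at equilibrium; your one-line justification of the latter is adequate, and the paper omits it entirely.
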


\begin{proof}
Fix $k \in \mathbb{Z}_+,\lambda \in \R^+$ and $\bar w_i = \lambda \pi^\ast_i$ for all $i \notin G_k$. 

For the sake of brevity, we define the following function of $w_{G_k}$
\begin{align*}
    \bar u_k(w_{G_k}) = - \log\det \left((A^{(k)})^\top \M((w_{G_k},(\bar w_j)_{j \notin G_k}))^{-1} A^{(k)}\right) - {d} \cdot \left(\frac{(\sum_{i \notin G_k}\pi_i^\ast)(\sum_{i \in G_k}w_i)}{\sum_{i \notin G_k}\bar w_i} -\sum_{i \in G_k}\pi_i^\ast \right)_+
\end{align*}
where $(w_{G_k},(\bar w_j)_{j \notin G_k}) $ denotes the concatenation of $w_{G_k}$ and $(\bar w_j)_{j \notin G_k}$ such that 
\begin{align*}
    (w_{G_k},(\bar  w_j)_{j \notin G_k})_i = \begin{cases}
        w_i,&~ \text{if} ~i \in G_k\\
        \bar w_i,&~ \text{otherwise.}
    \end{cases}
\end{align*}
It suffices to show that $\bar w_{G_k}:= \lambda \pi^\ast_{G_k}$ is the unique maximizer of $\bar u_k(w_{G_k})$.

For any $\Delta w_{G_k}$, Lemma~\ref{lem:log_det_gradient} gives
\begin{align*}
    &~ d \bar u_k\left(\bar w_{G_k}, \Delta w_{G_k}\right) \\
    = &~  \left\langle \sum_{i \in G_k}\Delta w_i x_ix_i^\top , \M(\bar w)^{-1} \right\rangle - \frac{d(\sum_{i \notin G_k}\pi^\ast_i)}{\sum_{i \notin G_k} \bar w_i}\left(\sum_{i \in G_k}\Delta w_i\right)_+ \\
    = &~  \left\langle \sum_{i \in G_k}\Delta w_i x_ix_i^\top , \M(\pi^\ast)^{-1} \right\rangle \cdot \frac{\sum_{i \notin G_k} \pi^\ast_i}{\sum_{i \notin G_k} \bar w_i} - \frac{d(\sum_{i \notin G_k}\pi^\ast_i)}{\sum_{i \notin G_k} w_i}\left(\sum_{i \in G_k}\Delta w_i\right)_+ \\
    \leq &~ \sum_{i \in G_k}\Delta w_i  \left(\langle x_ix_i^\top , \M(\pi^\ast)^{-1} \rangle - d\right) \cdot \frac{\sum_{i \notin G_k} \pi^\ast_i}{\sum_{i \notin G_k} \bar w_i}\\
    \leq &~ 0,
\end{align*}
where the second step is due to $\M(\bar w)^{-1} = \M(\pi^\ast)^{-1}\cdot \frac{1}{\sum_{i=1}^n \bar w_i} = \M(\pi^\ast)^{-1}\cdot \frac{\sum_{i \notin G_k} \pi^\ast_i}{\sum_{i \notin G_k} \bar w_i}$; the last step uses $\langle x_ix_i^\top , \M(\pi^\ast)^{-1} \rangle \begin{cases} = d ,~& i \in \mathrm{supp}(\pi^\ast)\\ \leq d,~& i \notin \mathrm{supp}(\pi^\ast) \end{cases}$ by Theorem~\ref{thm:equivalence_kiefer_wolfowitz}. 
By concavity of $\bar u_k$, $\bar w_{G_k}:= \lambda \pi^\ast_{G_k}$ is the unique maximizer of $\bar u_k(w_{G_k})$. 

Therefore for any $\lambda > 0$, $(w_{G_k} = \lambda \cdot \pi^\ast_{G_k})_{k  \in [K]}$ is a Nash Equilibrium. 

In what follows, we show that any pure Nash Equilibrium takes the form of $(w_{G_k} = \lambda \cdot \pi^\ast_{G_k})_{k  \in [K]}, \lambda \in \R_+$. 

Suppose for the sake of contradiction a Nash equilibrium $(\wt w_{G_k})_{k \in [K]}$ not in the form of $(\lambda \cdot \pi^\ast_{G_k})_{k  \in [K]}$.

Fix $k \in [K]$. Consider the following utility as a function of $w_{G_k}$
\begin{align*}
    \bar u_k(w_{G_k}) = - \log\det \left((A^{(k)})^\top \M\left((w_{G_k},(\wt w_{G_k^c})\right)^{-1} A^{(k)}\right) - {d} \cdot \left(\frac{(\sum_{i \notin G_k}\pi_i^\ast)(\sum_{i \in G_k}w_i)}{\sum_{i \notin G_k}\wt w_i} -\sum_{i \in G_k}\pi_i^\ast \right)_+
\end{align*}
where $(w_{G_k},\wt w_{G_k^c} )$ denotes the concatenation of $w_{G_k}$ and $\wt w_{G_k^c}$ such that 
\begin{align*}
    (w_{G_k},\wt w_{G_k^c})_i = \begin{cases}
        w_i,&~ \text{if} ~i \in G_k\\
        \wt w_i,&~ \text{otherwise.}
    \end{cases}
\end{align*}

We assert that
\begin{align}\label{eq:marginal_prob_alignment}
    \frac{(\sum_{i \notin G_k}\pi_i^\ast)(\sum_{i \in G_k}\wt w_i)}{\sum_{i \notin G_k}\wt w_i} -\sum_{i \in G_k}\pi_i^\ast = 0,~\forall k \in [K].
\end{align}

Indeed, it is obvious that $\frac{(\sum_{i \notin G_k}\pi_i^\ast)(\sum_{i \in G_k}\wt w_i)}{\sum_{i \notin G_k}\wt w_i} -\sum_{i \in G_k}\pi_i^\ast \geq 0$ for all $k \in [K]$. (If there exists $k \in [K]$ such that $\frac{(\sum_{i \notin G_k}\pi_i^\ast)(\sum_{i \in G_k}\wt w_i)}{\sum_{i \notin G_k}\wt w_i} -\sum_{i \in G_k}\pi_i^\ast < 0$, then define $\hat w_{G_k} := (1+\epsilon)\wt w_{G_k}$. By applying Lemma~\ref{lem:log_det_gradient},
\begin{align*}
    \bar u_k(\hat w_{G_k}) = &~ - \log\det \left((A^{(k)})^\top \M\left(((1+\epsilon)\wt w_{G_k},(\wt w_{G_k^c})\right)^{-1} A^{(k)}\right)\\
    > &~ - \log\det \left((A^{(k)})^\top \M\left((\wt w_{G_k},(\wt w_{G_k^c})\right)^{-1} A^{(k)}\right) \\
    = &~ \bar u_k(\wt w_{G_k})
\end{align*}
holds for sufficiently small $\epsilon >0$. Contradiction!) Recall that $\pi^\ast \in \Delta[n]$. Examining $\frac{(\sum_{i \notin G_k}\pi_i^\ast)(\sum_{i \in G_k}\wt w_i)}{\sum_{i \notin G_k}\wt w_i} -\sum_{i \in G_k}\pi_i^\ast \geq 0$ for all $k \in [K]$ yields Eq.~\eqref{eq:marginal_prob_alignment}.

By Theorem~\ref{thm:equivalence_kiefer_wolfowitz}, there must exist $l \in [n]$ such that $\langle x_lx_l^\top,\M(\wt w/\|\wt w\|_1)^{-1} \rangle > d$. Suppose $l \in G_k$, then we have 
\begin{align*}
    d \bar u_k\left(\wt w_{G_k},e_l\right) 
    \geq &~ \left\langle x_lx_l^\top , \M(\wt w)^{-1} \right\rangle - \frac{d(\sum_{i \notin G_k}\pi^\ast_i)}{\sum_{i \notin G_k} \wt w_i}\\
    = &~ \left\langle x_lx_l^\top ,\M(\wt w/\|\wt w\|_1)^{-1} \right\rangle \cdot \frac{\sum_{i \notin G_k}\pi^\ast_i}{\sum_{i \notin G_k} \wt w_i} - \frac{d(\sum_{i \notin G_k}\pi^\ast_i)}{\sum_{i \notin G_k} \wt w_i}\\
    >&~ 0
\end{align*}
where the first step applies Lemma~\ref{lem:log_det_gradient}; the second step uses $\M(\wt w)^{-1} =\M(\wt w/\|\wt w\|_1)^{-1}\cdot \frac{1}{\sum_{i=1}^n \wt w_i} = \M(\wt w/\|\wt w\|_1)^{-1}\cdot \frac{\sum_{i \notin G_k} \pi^\ast_i}{\sum_{i \notin G_k} \wt w_i}$ due to Eq.~\eqref{eq:marginal_prob_alignment}. 
It follows that letting $\wt w' = \epsilon \cdot e_l + \wt w$ would increase $\bar u_k$ for sufficiently small $\epsilon > 0$. Contradiction!

\end{proof}

\subsection{Mechanism design for efficient allocation under different cost parameters}

We design the following feasible mechanism to achieve efficient allocation in Eq.~\eqref{eq:efficient_allocation_object} when the cost parameters are not the same. Define $\mec_{\eff}{(k)}$ as follows
\begin{align}\label{eq:mechanism_efficient_allocation}
    &~ \mec_{\eff}^{(k)}(w) = \rho_k {w},~\text{where} \\
    &~ \rho_k^{-1} = \exp\left(\frac{c^{(k)}}{r_k} \cdot \sum_{i \in G_k} \left( n_{\max} \cdot \pi_i^\ast - w_i\right)_+ + \sum_{i \in G_k} \left(\frac{(\sum_{i \notin G_k}\pi_i^\ast)w_i}{(\sum_{i \notin G_k}\bar w_i)\pi^*_i} - 1 \right)_+ \cdot \mathbbm{1}\left(w_i \geq n_{\max} \cdot \pi_i \right)\right) \notag
\end{align}
In $\rho_k$, the first term is the same as the regularization term in $\mec_{\max}$ and functions as incentivizing more data contribution; the second term is similar to the regularization term in $\mec_{\peff}$ and serves as incentivizing alignment with optimal design measure. Therefore, although the objective $w_{\eff}$ may take complex forms, these two simple terms together create the incentive for each agent to follow the optimal design measure while increasing the total information. 
Besides subsampling, this mechanism can also be efficiently implemented by letting
\begin{align*}
    \hat \theta^{(k)} = \hat \theta + \zeta^{(k)}, ~
    \text{where}~ \zeta^{(k)} \sim\mathcal{N}\left(0,(\rho_k^{-1}-1) \cdot \M(w)^{-1} \right)\notag.
\end{align*}
The $k$-th agent's utility is then given by
\begin{align}\label{eq:utility_incentivized_cost}
    \left(u^{(k)}\circ \mec_{\eff}^{(k)}\right)(w) = &~ - \log\det \left((A^{(k)})^\top \M(w)^{-1} A^{(k)}\right)   - c^{(k)} \cdot \sum_{i \in G_k} \left( n_{\max} \cdot \pi_i^\ast - w_i\right)_+ \\
    &~ - r_k\cdot\sum_{i \in G_k} \left(\frac{(\sum_{i \notin G_k}\pi_i^\ast)w_i}{(\sum_{i \notin G_k} w_i)\pi^*_i} - 1 \right)_+ \cdot \mathbbm{1}\left(w_i \geq n_{\max} \cdot \pi_i \right)- c^{(k)} \left(\sum_{i \in G_k} w_i\right)\notag.
\end{align}

\begin{proposition}[Data maximization and efficient allocation]\label{prop:data_max_efficient_allocation}
The efficient allocation design $(w_{\eff,G_k})_{k \in [K]}$ is the unique strategic response to the mechanism $\mec_{\eff}$.

\end{proposition}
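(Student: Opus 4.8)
The plan is to show that $w_{\eff}=n_{\max}\cdot\pi^\ast$ is simultaneously a pure Nash equilibrium and individually rational, and then that no other strategic response exists, by reusing the two mechanisms out of which $\mec_{\eff}$ is assembled: the first regularizer in $\rho_k$ plays exactly the role of the penalty in $\mec_{\max}$ (deterring \emph{under}-contribution by cancelling the marginal cost below the target), while the second plays the role of the penalty in $\mec_{\peff}$ (deterring \emph{dis-proportionate} contribution and pinning the design to the ray through $\pi^\ast$). The starting observation is that at $w=w_{\eff}$ both regularizers vanish, so $\rho_k=1$ and $\big(u^{(k)}\circ\mec_{\eff}^{(k)}\big)(w_{\eff})=u^{(k)}(w_{\eff})\ge v^{(k)}_\ast$ by the defining constraint of $n_{\max}$ in Eq.~\eqref{eq:efficient_allocation_object} (which is feasible under Assumption~\ref{asp:data_compatibility} by Proposition~\ref{prop:feasibility}). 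Thus individual rationality holds automatically, and the whole argument reduces to a best-response analysis of Eq.~\eqref{eq:utility_incentivized_cost} for each agent.

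Fixing $k$ and freezing the other agents at $w_{\eff,G_k^c}=n_{\max}\pi^\ast_{G_k^c}$, the competitive ratio in the second regularizer collapses to $w_i/(n_{\max}\pi_i^\ast)$, so $\bar u_k(w_{G_k}):=\big(u^{(k)}\circ\mec_{\eff}^{(k)}\big)(w_{G_k},w_{\eff,G_k^c})$ becomes the sum of the concave local D-objective and two per-coordinate piecewise-linear penalties. Using Lemma~\ref{lem:log_det_gradient} together with $x_i\in\mathrm{range}(A^{(k)})$ for $i\in G_k$, the information gradient in coordinate $i$ is $\langle x_ix_i^\top,\M(w)^{-1}\rangle$. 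Below the target the first penalty has slope $+c^{(k)}$, which exactly cancels the linear cost slope $-c^{(k)}$ while the second penalty is inactive; hence $\partial_{w_i}\bar u_k=\langle x_ix_i^\top,\M(w)^{-1}\rangle>0$, so $\bar u_k$ is strictly increasing up to $n_{\max}\pi_i^\ast$ and no best response can under-contribute in any coordinate.

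For over-contribution ($w_i\ge n_{\max}\pi_i^\ast$) the first penalty switches off, the second contributes slope $-r_k/(n_{\max}\pi_i^\ast)$, and, since $\M(w)\succeq\M(w_{\eff})=n_{\max}\M(\pi^\ast)$, the information gradient is bounded by $\langle x_ix_i^\top,\M(w_{\eff})^{-1}\rangle=\tfrac1{n_{\max}}\langle x_ix_i^\top,\M(\pi^\ast)^{-1}\rangle\le d/n_{\max}$, where the last inequality is the Kiefer--Wolfowitz equivalence theorem (Theorem~\ref{thm:equivalence_kiefer_wolfowitz}). Thus the right derivative of $\bar u_k$ at the target is at most $\tfrac{d}{n_{\max}}-\tfrac{r_k}{n_{\max}\pi_i^\ast}-c^{(k)}$, and I would close the equilibrium argument by showing this is $\le 0$ via the feasibility bound $n_{\max}\,\|\pi^\ast_{G_k}\|_1\ge\|w^\ast_{G_k}\|_1=r_k/c^{(k)}$ established inside the proof of Proposition~\ref{prop:feasibility}. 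Concavity of $\bar u_k$ then makes $w_{\eff,G_k}$ its unique maximizer, so $w_{\eff}$ is a pure Nash equilibrium, and together with the individual rationality above it is a strategic response.

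Uniqueness follows the two-case template of Propositions~\ref{prop:data_max} and~\ref{prop:efficient_allocation}. Given any equilibrium $\wt w\neq w_{\eff}$: if some $\wt w_i<n_{\max}\pi_i^\ast$, the under-contribution computation above (with the marginal-probability alignment forced by the second regularizer, exactly as in Eq.~\eqref{eq:marginal_prob_alignment}) produces a strictly profitable upward deviation, a contradiction; if instead $\wt w\succeq w_{\eff}$ with some strict coordinate, then $\log\det\M(\wt w)>\log\det\M(w_{\eff})$, and since $n_{\max}$ is the largest scaling of $\pi^\ast$ compatible with every agent's individual rationality, the deviating agent's own rationality constraint must be violated at $\wt w$, giving $\bar u_k(\wt w_{G_k})<v^{(k)}_\ast\le\bar u_k(w_{\eff,G_k})$ and contradicting optimality. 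Hence $w_{\eff}$ is the unique strategic response. The main obstacle is the over-contribution step: unlike the clean cancellation in $\mec_{\peff}$ (where the aggregate penalty of weight $d$ matches the D-optimal gradient exactly), here the per-coordinate weights $r_k$ and the cost $c^{(k)}$ must be shown to jointly dominate the gradient bound $d/n_{\max}$ uniformly over support coordinates, which is precisely where the feasibility lower bound on $n_{\max}$ and the Kiefer--Wolfowitz identity have to be combined with care.
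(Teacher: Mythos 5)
Your proposal has two genuine gaps, one in each half. First, the best-response (equilibrium) step for over-contribution is explicitly left open: you reduce it to showing $\tfrac{d}{n_{\max}}-\tfrac{r_k}{n_{\max}\pi_i^\ast}-c^{(k)}\le 0$ and say the feasibility bound $n_{\max}\|\pi^\ast_{G_k}\|_1\ge r_k/c^{(k)}$ should close it, but it does not by itself: combined with that bound you still need $d\le \tfrac{r_k}{\pi_i^\ast}+\tfrac{r_k}{\|\pi^\ast_{G_k}\|_1}$, which requires an extra fact about D-optimal designs nowhere stated in the paper (e.g.\ that the D-optimal mass in a rank-$r_k$ subspace satisfies $\|\pi^\ast_{G_k}\|_1\le r_k/d$, provable from $\langle\sum_{i\in G_k}\pi_i^\ast x_ix_i^\top,\M(\pi^\ast)^{-1}\rangle\le r_k$, or the support-weight bound $\pi_i^\ast\le 1/d$). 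The paper avoids this entirely: in its Case 2 it sets $\epsilon_j=w_j/(n_{\max}\pi_j^\ast)-1$, uses the matrix bound $\M\big((w_{G_k},\bar w_{G_k^c})\big)\preceq\big(1+\max_i\epsilon_i\big)\M(\bar w)$ so the log-det gain is at most $r_k\log\big(1+\max_i\epsilon_i\big)$, and beats it with the penalty $r_k\sum_i\epsilon_i$ via $\log(1+x)<x$ --- no Kiefer--Wolfowitz gradient bound, no cost term, and no design-weight facts are needed. Your local gradient route is repairable, but as written it is incomplete, and you would also need to justify uniqueness of the best response beyond bare concavity (the paper again uses strict inequalities rather than first-order conditions).

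Second, and more seriously, your uniqueness argument's over-contribution case transplants the Case-2 logic of Proposition~\ref{prop:data_max} to a setting where it fails. There, $w_{\max}$ is the \emph{global} maximizer of $\log\det\M(w)$ over all $w$ satisfying individual rationality, so $\log\det\M(\wt w)>\log\det\M(w_{\max})$ immediately forces an IR violation. Here $w_{\eff}=n_{\max}\pi^\ast$ is maximal only \emph{along the ray} $\{n\cdot\pi^\ast\}$ (Eq.~\eqref{eq:efficient_allocation_object}); for a candidate equilibrium $\wt w\succeq w_{\eff}$ off the ray, $\log\det\M(\wt w)>\log\det\M(w_{\eff})$ implies nothing about anyone's IR constraint, so your claimed contradiction does not follow. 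This is exactly why the paper's proof has a separate Step~2: it first shows every equilibrium must be \emph{per-coordinate proportional} to $\pi^\ast$, i.e.\ $\wt w=\lambda\pi^\ast$ with $\lambda\ge n_{\max}$, by exhibiting a strictly profitable ratio-reducing deviation (again with the $r_k\log(1+\epsilon_i)-r_k\epsilon_i<0$ estimate), and only then applies the definition of $n_{\max}$ along the ray in Step~3. Your sketch invokes alignment only in the under-contribution case and at the wrong granularity --- the group-marginal alignment of Eq.~\eqref{eq:marginal_prob_alignment} from Proposition~\ref{prop:efficient_allocation}, whereas the second regularizer in $\mec_{\eff}$ is per-coordinate and the proof needs full proportionality $\wt w=\lambda\pi^\ast$. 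Without that bridging step the uniqueness claim is unproved.
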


\begin{proof}
Fix $k$ and $\bar w_i = n_{\max} \pi^\ast_i$ for all $i \notin G_k$. 
Define $N = \sum_{i \notin G_k}\bar w_i x_ix_i^\top$. 
Define the following function of $w_{G_k}$
\begin{align*}
    &~ \bar u_k(w_{G_k})\\
    = &~ - \log\det \left((A^{(k)})^\top \M((w_{G_k},\bar w_{G_k^c}))^{-1} A^{(k)}\right) - c^{(k)} \cdot \left(\sum_{i \in G_k} w_i\right) - c^{(k)} \cdot \sum_{i \in G_k} \left( n_{\max} \cdot \pi_i^\ast - w_i\right)_+\notag \\
    &~ - {r_k} \cdot \sum_{i \in G_k} \left(\frac{(\sum_{i \notin G_k}\pi_i^\ast)w_i}{(\sum_{i \notin G_k}\bar w_i)\pi^*_i} - 1 \right)_+ \cdot \mathbbm{1}\left(w_i \geq n_{\max} \cdot \pi_i \right)
\end{align*}
where $(w_{G_k},\bar w_{G_k^c}) $ denotes the concatenation of $w_{G_k}$ and $\bar w_{G_k^c}$ such that 
\begin{align*}
    (w_{G_k},\bar w_{G_k^c})_i = \begin{cases}
        w_i,&~ \text{if} ~i \in G_k\\
        \bar w_i,&~ \text{otherwise.}
    \end{cases}
\end{align*}
To show that $w_{\eff}$ is a pure Nash equilibrium, it suffices to show that $\bar w_{G_k}:= n_{\max} \pi^\ast_{G_k}$ is the unique maximizer of $\bar u_k(w_{G_k})$.

Indeed, consider any $w_{G_k} \neq \bar w_{G_k}$. 

\textbf{Case 1: there exists $i \in G_k$ such that $w_i < n_{\max}\cdot \pi^\ast_i$. }

For all $j \in G_k$ let $\wt w_{j} = \begin{cases}
    w_j, ~&\text{if}~ w_j \geq n_{\max}\cdot \pi^\ast_j\\
    n_{\max} \cdot \pi_j^*, ~&\text{if}~ w_j < n_{\max}\cdot \pi^\ast_j
\end{cases}$.
From Lemma~\ref{lem:log_det_gradient},
\begin{align*}
    &~ \bar u_k(\wt w_{G_k})\\
    = &~ - \log\det \left((A^{(k)})^\top \M((\wt w_{G_k},\bar w_{G_k^c}))^{-1} A^{(k)}\right) - c^{(k)} \cdot \left(\sum_{i \in G_k} \wt w_i\right) - c^{(k)} \cdot \sum_{i \in G_k} \left( n_{\max} \cdot \pi_i^\ast - \wt w_i\right)_+\notag \\
    &~ - {r_k} \cdot \sum_{i \in G_k} \left(\frac{(\sum_{i \notin G_k}\pi_i^\ast)\wt w_i}{(\sum_{i \notin G_k}\bar w_i)\pi^*_i} - 1 \right)_+ \cdot \mathbbm{1}\left(\wt w_i \geq n_{\max} \cdot \pi_i \right)\\
    > &~ - \log\det \left((A^{(k)})^\top \M(( w_{G_k},\bar w_{G_k^c}))^{-1} A^{(k)}\right) - c^{(k)} \cdot \left(\sum_{i \in G_k} w_i\right) - c^{(k)} \cdot \sum_{i \in G_k} \left( n_{\max} \cdot \pi_i^\ast - w_i\right)_+\notag \\
    &~ - {r_k} \cdot \sum_{i \in G_k} \left(\frac{(\sum_{i \notin G_k}\pi_i^\ast)w_i}{(\sum_{i \notin G_k}\bar w_i)\pi^*_i} - 1 \right)_+ \cdot \mathbbm{1}\left(w_i \geq n_{\max} \cdot \pi_i \right)\\
    = &~ \bar u_k(w_{G_k}).
\end{align*}
This yields a contradiction.

\textbf{Case 2: $w_i \geq n_{\max}\cdot \pi^\ast_i$ for all $i \in G_k$. }

Define $\epsilon_j = \frac{w_{j}}{n_{\max}\cdot \pi^\ast_j} - 1$ for all $j \in G_k$, then $\min_{i\in G_k}\epsilon_i \geq 0$ and $\max_{i\in G_k}\epsilon_i > 0$. We have
\begin{align*}
    &~ \bar u_k(w_{G_k})\\
    = &~ - \log\det \left((A^{(k)})^\top \left(\sum_{i \in G_k} (1+\epsilon_i)\bar w_i x_ix_i^\top + N \right)^{-1} A^{(k)}\right)- c^{(k)} \cdot \sum_{i \in G_k} \left( n_{\max} \cdot \pi_i^\ast - w_i\right)_+\notag \\
    &~  - c^{(k)} \cdot \left(\sum_{i \in G_k} w_i\right) - {r_k} \cdot \sum_{i \in G_k} \epsilon_i\\
    \leq &~ - \log\det \left(\left(1+\max_{i\in G_k}\epsilon_i\right)^{-1}(A^{(k)})^\top \left(\sum_{i \in G_k} \bar w_i x_ix_i^\top + N \right)^{-1} A^{(k)}\right)\\
    &~ - c^{(k)} \cdot \sum_{i \in G_k} \left( n_{\max} \cdot \pi_i^\ast - \bar w_i\right)_+\notag  - c^{(k)} \cdot \left(\sum_{i \in G_k} \bar w_i\right) - {r_k} \cdot \sum_{i \in G_k} \epsilon_i\\
    = &~ \bar u_k(\bar w_{G_k}) + {r_k} \cdot \log\left(1+\max_{i\in G_k}\epsilon_i\right) - {r_k} \cdot \sum_{i \in G_k} \epsilon_i\\
    < &~ \bar u_k(\bar w_{G_k}).
\end{align*}
It follows that $\bar u_k(w_{G_k}) < \bar u_k(\bar w_{G_k})$, also a contradiction. 

Combining the above two cases confirms that $\left(n_{\max} \cdot \pi^\ast_{G_k} \right)_{k \in [K]}$ is a pure NE. By definition of $w_{\eff}$ in Eq.~\eqref{eq:efficient_allocation_object}, Individual rationality is satisfied. Therefore $w_{\eff}$ is a strategic response of $\mec_{\eff}$.

In what follows, we show that $n_{\max} \pi^\ast$ is the unique pure Nash equilibrium. 
Consider any pure Nash equilibrium $(\wt w_{G_k})_{k \in [K]}$, we will show in the following three steps that it must be equal to $(n_{\max} \cdot \pi^\ast_{G_k})_{k  \in [K]}$.

For any $k \in [K]$ define the following utility as a function of $w_{G_k}$
\begin{align*}
    &~ \bar u_k(w_{G_k})\\
    = &~ - \log\det \left((A^{(k)})^\top \M((w_{G_k},\wt w_{G_k^c}))^{-1} A^{(k)}\right) - c^{(k)} \cdot \left(\sum_{i \in G_k} w_i\right) - c^{(k)} \cdot \sum_{i \in G_k} \left( n_{\max} \cdot \pi_i^\ast - w_i\right)_+\notag \\
    &~ - {r_k} \cdot \sum_{i \in G_k} \left(\frac{(\sum_{i \notin G_k}\pi_i^\ast)w_i}{(\sum_{i \notin G_k}\wt w_i)\pi^*_i} - 1 \right)_+ \cdot \mathbbm{1}\left(w_i \geq n_{\max} \cdot \pi_i \right)
\end{align*}
where $(w_{G_k},\wt w_{G_k^c}) $ denotes the concatenation of $w_{G_k}$ and $\wt w_{G_k^c}$ such that 
\begin{align*}
    (w_{G_k},\wt w_{G_k^c})_i = \begin{cases}
        w_i,&~ \text{if} ~i \in G_k\\
        \wt w_i,&~ \text{otherwise.}
    \end{cases}
\end{align*}
It follows that $\bar u_k(\wt w_{G_k}) = \max_{w_{G_k}}\bar u_k(w_{G_k}),~\forall k \in [K]$.

\textbf{Step 1.} We first show that $\wt w_i \geq n_{\max} \pi^\ast_i$ for any $i \in [n]$. 

Indeed, if there exists $k \in [K]$ and $i \in G_k$ such that $\wt w_i < n_{\max} \pi^\ast_i$. Let $\hat w_{j} = \begin{cases}
\wt w_j, ~&\text{if}~ \wt w_j \geq n_{\max}\cdot \pi^\ast_j\\
    n_{\max} \cdot \pi_j^*, ~&\text{if}~ \wt w_j < n_{\max}\cdot \pi^\ast_j
\end{cases}$.
From Lemma~\ref{lem:log_det_gradient},
\begin{align*}
    &~ \bar u_k(\hat w_{G_k})\\
    = &~ - \log\det \left((A^{(k)})^\top \M((\hat w_{G_k},\wt w_{G_k^c}))^{-1} A^{(k)}\right) - c^{(k)} \cdot \left(\sum_{i \in G_k} \hat w_i\right) - c^{(k)} \cdot \sum_{i \in G_k} \left( n_{\max} \cdot \pi_i^\ast - \hat w_i\right)_+\notag \\
    &~ - {r_k} \cdot \sum_{i \in G_k} \left(\frac{(\sum_{i \notin G_k}\pi_i^\ast)\hat w_i}{(\sum_{i \notin G_k}\wt w_i)\pi^*_i} - 1 \right)_+ \cdot \mathbbm{1}\left(\hat w_i \geq n_{\max} \cdot \pi_i \right)\\
    > &~ \bar u_k(\wt w_{G_k}).
\end{align*}
This contradicts with the fact that $\wt w_{G_k}$ is the maximizer of $\bar u_k$.

\paragraph{Step 2.} We show that there exists $\lambda \geq n_{\max}$ such that $\wt w = \lambda \cdot \pi^\ast$.

Suppose $\exists ~ k \in [K]$ and $i \in G_k$ such that $\frac{(\sum_{j \notin G_k}\pi_i^\ast)\wt w_i}{\sum_{j \notin G_k}\wt w_i} > \pi_i^\ast $, define $\epsilon_i = \frac{(\sum_{j \notin G_k}\pi_j^\ast)\wt w_i}{(\sum_{j \notin G_k}\wt w_j)\pi^*_i} - 1 > 0$. Let $\hat w_{j} = \begin{cases}
\wt w_j, ~&\text{if}~ j \neq i \\
   \frac{(\sum_{j \notin G_k}\wt w_j)\pi^*_i}{\sum_{j \notin G_k}\pi_j^\ast}, ~&\text{if}~ j = i
\end{cases}$. We have
\begin{align*}
    &~ \bar u_k(\wt w_{G_k})\\
    = &~ - \log\det \left((A^{(k)})^\top \left((1+\epsilon_i)\hat w_i x_ix_i^\top + \sum_{j \neq i } \wt w_j x_jx_j^\top  \right)^{-1} A^{(k)}\right)- c^{(k)} \cdot \left(\sum_{j \in G_k} \wt w_j\right)\\
    &~ - c^{(k)} \cdot \sum_{j \in G_k} \left( n_{\max} \cdot \pi_j^\ast - \wt w_j\right)_+  - {r_k} \cdot \sum_{j \in G_k} \left(\frac{(\sum_{l \notin G_k}\pi_l^\ast)\wt w_j}{(\sum_{l \notin G_k}\wt w_l)\pi^*_j} - 1 \right)_+ \cdot \mathbbm{1}\left(\hat w_j \geq n_{\max} \cdot \pi_j \right)\\
    \leq &~ - \log\det \left(\left(1+\epsilon_i\right)^{-1}(A^{(k)})^\top \left(\sum_{j \in G_k} \hat w_j x_jx_j^\top \right)^{-1} A^{(k)}\right)- {r_k} \cdot \epsilon_i- c^{(k)} \cdot \left(\sum_{j \in G_k} \hat w_j\right)\\
    &~ - c^{(k)} \cdot \sum_{j \in G_k} \left( n_{\max} \cdot \pi_j^\ast - \hat w_j\right)_+ - {r_k} \cdot \sum_{j \in G_k, j \neq i} \left(\frac{(\sum_{l \notin G_k}\pi_l^\ast)\hat w_j}{(\sum_{l \notin G_k}\wt w_l)\pi^*_j} - 1 \right)_+ \cdot \mathbbm{1}\left(\hat w_j \geq n_{\max} \cdot \pi_j \right)\\
    = &~ \bar u_k(\hat w_{G_k}) + {r_k} \cdot \log\left(1+\epsilon_i\right) - {r_k} \cdot \epsilon_i\\
    < &~ \bar u_k(\hat w_{G_k}).
\end{align*}
This contradicts with the fact that $\wt w_{G_k}$ is the maximizer of $\bar u_k$. 

As a result,  $\frac{(\sum_{i \notin G_k}\pi_i^\ast)\wt w_i}{\sum_{i \notin G_k}\wt w_i} \geq \pi_i^\ast $ holds for any $i \in [n]$. Examining this inequality for all $i \in [n]$ yields that $\wt w = \lambda \cdot \pi^\ast$ for some $\lambda$ which must be greater than or equal to $n_{\max}$. 

\paragraph{Step 3.} We show that $\lambda = n_{\max}$.

From the definition of $n_{\max}$, for any $\lambda > n_{\max}$ there exists $k \in [K]$ such that 
\begin{align*}
    \left(u^{(k)}\circ \mec_{\eff}^{(k)}\right)(\lambda \cdot \pi^\ast) <&~ v_\ast^{(k)}\\
    \leq &~ \left(u^{(k)}\circ \mec_{\eff}^{(k)}\right)(n_{\max} \cdot \pi^\ast)\\
    \leq &~ \left(u^{(k)}\circ \mec_{\eff}^{(k)}\right)\left((n_{\max} \cdot \pi^\ast_{G_k}, \lambda \cdot \pi^\ast_{G_k^c})\right).
\end{align*}
This means that $\lambda \cdot \pi^\ast_{G_k}$ is not a NE, yielding a contradiction. It follows that $\left(n_{\max} \cdot \pi^\ast_{G_k}\right)_{k \in [K]}$ is the unique pure NE.

\end{proof}

\clearpage
\section{Supporting Lemmas}

\begin{lemma}\label{lem:log_det_gradient}
We abbreviate $A^{(k)}$ and $A^{(k)}(A^{(k)})^\top$ as $A$ and $P$ respectively. Suppose $\M(w) \succ 0$. For any $l \in G_k$, we have
\begin{align*}
    \frac{\partial - \log\det \left(A^\top \M(w)^{-1} A\right)}{\partial w_l} = \left\langle x_lx_l^\top , \M(w)^{-1} \right\rangle > 0.
\end{align*}
Furthermore, fixing $w_{G_k^c}$, the function $f(w) = -\log\det \left(A^\top \M(w)^{-1} A\right)$ is concave in $w_{G_k}$.
\end{lemma}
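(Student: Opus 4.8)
The plan is to treat the two claims separately: derive the first-order formula by direct matrix calculus, and then deduce concavity from an affine-structure observation about the local information matrix. Throughout I abbreviate $M := \M(w)$, write $B := A^\top M^{-1} A$ (invertible since $M \succ 0$ and $A$ has full column rank $r_k$), and set $P := AA^\top$.

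For the gradient formula, I would first record that $\partial M/\partial w_l = x_l x_l^\top$ and hence, by the standard identity for differentiating a matrix inverse, $\partial M^{-1}/\partial w_l = -M^{-1} x_l x_l^\top M^{-1}$. Jacobi's formula then gives
\[
\frac{\partial}{\partial w_l}\big(-\log\det B\big) = -\tr\big(B^{-1}\,\partial B/\partial w_l\big) = \tr\big(B^{-1} A^\top M^{-1} x_l x_l^\top M^{-1} A\big) = u^\top B^{-1} u,
\]
where $u := A^\top M^{-1} x_l$. The main step is to simplify this using the fact that $l \in G_k$ forces $x_l \in \mathrm{span}(\X_k) = \mathrm{range}(P)$, so that $P x_l = A A^\top x_l = x_l$. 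Setting $z_l := A^\top x_l$, this reads $x_l = A z_l$, whence $u = A^\top M^{-1} A z_l = B z_l$ and therefore $u^\top B^{-1} u = z_l^\top B z_l = (A z_l)^\top M^{-1}(A z_l) = x_l^\top M^{-1} x_l = \langle x_l x_l^\top, M^{-1}\rangle$. Strict positivity is immediate, since $M^{-1} \succ 0$ and $x_l \neq 0$.

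For concavity, I would observe that $f(w) = -\log\det(A^\top M^{-1} A) = \log\det\big((A^\top M^{-1} A)^{-1}\big) = \log\det \M^{(k)}(w)$. The key is that $\M^{(k)}(w)$ is \emph{affine} in $w_{G_k}$: repeating the computation at the level of $\M^{(k)}(w) = (A^\top M^{-1}A)^{-1}$ and again invoking $P x_i = x_i$ for $i \in G_k$ yields $\partial \M^{(k)}(w)/\partial w_i = A^\top x_i x_i^\top A$, which is constant in $w$. Since all partials in the $w_{G_k}$ directions are constant on the domain $\{w : \M(w) \succ 0\}$, the map $w_{G_k} \mapsto \M^{(k)}(w)$ is affine there. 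Concavity of $f$ then follows by composing this affine map with $\log\det$, which is concave on the positive-definite cone.

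The only genuinely nontrivial point — and the step I expect to require care — is the identity $B^{-1} A^\top M^{-1} x_l = A^\top x_l$, since it is exactly here that membership $l \in G_k$ (equivalently $x_l = P x_l$) is used; without it the derivative would instead equal the strictly smaller quantity $x_l^\top M^{-1} A B^{-1} A^\top M^{-1} x_l$, which is only a projected version of $x_l^\top M^{-1} x_l$. I would therefore state the projection identity $AA^\top x_l = x_l$ explicitly and note that it simultaneously drives the affineness of $\M^{(k)}$, so that both halves of the lemma ultimately rest on this single structural fact.
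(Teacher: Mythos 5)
Your proof is correct, and it splits naturally into a half that matches the paper and a half that does not. The gradient computation is essentially the paper's own argument in a different packaging: both reduce to matrix calculus (Jacobi's formula together with the derivative of a matrix inverse), and both pivot on the projection identity $AA^\top x_l = x_l$ for $l \in G_k$ --- the paper inserts $AA^\top$ inside the trace via $AA^\top x_lx_l^\top = x_lx_l^\top$, which is exactly the structural fact you isolate as $u = Bz_l$, and you are right that this is the one step where membership in $G_k$ is used. Where you genuinely diverge is concavity. The paper differentiates your gradient formula once more, finds $[-\mathrm{Hess}_f(w)]_{i,j} = \big(x_i^\top \M(w)^{-1} x_j\big)^2$ for $i,j \in G_k$, and concludes positive semidefiniteness of the negative Hessian via the Schur product theorem (a Hadamard square of a PSD Gram-type matrix). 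You instead show $\partial \M^{(k)}(w)/\partial w_i = A^\top x_i x_i^\top A$ is constant for $i \in G_k$, hence $w_{G_k} \mapsto \M^{(k)}(w)$ is affine, and compose with concavity of $\log\det$ on the positive-definite cone. Your route is in fact used verbatim by the paper elsewhere: the proof of Proposition~\ref{prop:existence_ne_2} establishes the same constant-derivative identity to obtain concavity of $f^{(k)}\big(\M^{(k)}(w)^{-1}\big)$ for \emph{arbitrary} criteria concave in the information matrix. So your argument is sound and buys strictly more generality than the lemma needs, while the paper's Schur-product Hessian computation is elementary and self-contained but specific to the D-criterion. Two points you should make explicit but which are not gaps: the constant-differential-implies-affine step requires the domain $\{w_{G_k} : \M(w) \succ 0\}$ (with $w_{G_k^c}$ fixed) to be convex, which holds because $\M$ is affine in $w$ and the positive-definite cone is convex; and the strict inequality $\langle x_lx_l^\top, \M(w)^{-1}\rangle > 0$ tacitly requires $x_l \neq 0$, an assumption the paper's proof also makes without comment.
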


\begin{proof}
Indeed, for $l \in G_k$
\begin{align*}
    &~ \frac{\partial - \log\det \left(A^\top \M(w)^{-1} A\right)}{\partial w_l}\\
    = &~ - \left\langle \left(A^\top \M(w)^{-1} A\right)^{-1}, \frac{\partial A^\top \M(w)^{-1} A}{\partial w_l} \right\rangle\\
    = &~ - \left\langle \left(A^\top \M(w)^{-1} A\right)^{-1}, A^\top\frac{\partial \M(w)^{-1} }{\partial w_l} A \right\rangle\\
    = &~ - \left\langle \left(A^\top  \M(w)^{-1} A\right)^{-1}, A^\top \left(\sum_{i,j \in [d]} - \M(w)^{-1} e_ie_j^\top \M(w)^{-1} ( x_lx_l^\top)_{i,j} \right)A \right\rangle\\
    = &~ - \left\langle \left(A^\top  \M(w)^{-1} A\right)^{-1}, A^\top \left(- \M(w)^{-1} x_lx_l^\top \M(w)^{-1} \right) A \right\rangle\\
    = &~ \left\langle \left(A^\top  \M(w)^{-1} A\right)^{-1}, A^\top  \M(w)^{-1} A A^\top x_lx_l^\top \M(w)^{-1} A \right\rangle\\
    = &~ \left\langle I, A^\top x_lx_l^\top \M(w)^{-1} A \right\rangle\\
    = &~ \left\langle  x_lx_l^\top A A^\top, \M(w)^{-1} \right\rangle\\
    = &~ \left\langle x_lx_l^\top , \M(w)^{-1} \right\rangle,
\end{align*}
where the first step uses the fact that $\frac{\partial }{\partial Y_{i,j}} \log \det Y = [Y^{-1}]_{j,i}$; the third step uses the fact that $\frac{\partial }{\partial Y_{i,j}} Y^{-1} = -Y^{-1} e_i e_j^\top Y^{-1}$; the fourth step comes from 
\begin{align*}
    \sum_{i,j \in [d]} e_ie_j^\top( x_lx_l^\top)_{i,j} = \sum_{i,j \in [d]} e_ie_i^\top ( x_lx_l^\top) e_j e_j^\top = x_lx_l^\top;
\end{align*}
the fifth and final step use the fact that $A A^\top x_lx_l^\top = x_lx_l^\top $ since $AA^\top = P$ is the projection matrix on $\{x_i\}_{i \in G_k}$.

This establishes the first statement. 
To show concavity, notice that
\begin{align*}
    \left[-\mathrm{Hess}_f(w)\right]_{i,j} = &~ -\frac{\partial \left\langle x_ix_i^\top , \M(w)^{-1} \right\rangle}{\partial w_j}\\
    = &~ \left\langle x_ix_i^\top , \M(w)^{-1} x_jx_j^\top \M(w)^{-1} \right\rangle\\
    = &~ \left(x_i^\top \M(w)^{-1} x_j\right)^2, ~\forall i,j \in G_k.
\end{align*}
Therefore $-\mathrm{Hess}_f(w)$ is Hadamard product of the positive semi-definite matrix $\left(x_i^\top \M(w)^{-1} x_j\right)_{i,j \in G_k}$ and itself. 
It follows from Schur product theorem that the negative Hessian matrix $-\mathrm{Hess}_f$ is symmetric positive semidefinite everywhere in the domain, which establishes the concavity.
\end{proof}

\begin{lemma}\label{lem:incentive_compatibility}
For any $w$ such that $\sum_{i=1}^n w_i x_ix_i^\top$ is non-singular, 
\begin{align*}
    \log \det \left((A^{(k)})^\top \left(\sum_{i \in G_k} w_i x_ix_i^\top \right) A^{(k)} \right)
    \leq &~  \log \det \left( (A^{(k)})^\top\left( \sum_{i=1}^n w_i x_ix_i^\top  \right)^{-1}A^{(k)}\right)^{-1}.
\end{align*}\
Further, if equality holds, then for any $w_{G_k}'$ we have
\begin{align*}
    \left(A^{(k)})^\top (\sum_{i \in G_k} w_i' x_ix_i^\top ) A^{(k)} \right)^{-1}
     = (A^{(k)})^\top\left( \sum_{i \in G_k} w_i' x_ix_i^\top + \sum_{i\notin G_k} w_i x_ix_i^\top  \right)^{-1}A^{(k)}.
\end{align*}
\end{lemma}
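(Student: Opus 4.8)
The plan is to reduce both claims to a single Schur-complement identity that isolates the contribution of the out-of-group data. Fix $k$ and abbreviate $A := A^{(k)}$, $M := \M(w) = \sum_{i} w_i x_i x_i^\top$, $M_{\mathrm{in}} := \sum_{i \in G_k} w_i x_i x_i^\top$ and $M_{\mathrm{out}} := \sum_{i \notin G_k} w_i x_i x_i^\top$, so $M = M_{\mathrm{in}} + M_{\mathrm{out}}$. Since $AA^\top$ is an orthogonal projection of rank $r_k$ and $A \in \R^{d\times r_k}$, an SVD argument shows $A$ has orthonormal columns ($A^\top A = I_{r_k}$); hence $A$ can be completed to an orthogonal matrix $[A,B]$ with $B \in \R^{d\times(d-r_k)}$ spanning the orthogonal complement of $\mathrm{span}(\X_k)$.

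First I would record the block structure of $M$ in the basis $[A,B]$. Because $x_i \in \mathrm{span}(\X_k) = \mathrm{range}(A)$ for every $i \in G_k$, we have $B^\top x_i = 0$, so $M_{\mathrm{in}} B = 0$ and $M_{\mathrm{in}}$ contributes only to the top-left block. Writing $P := A^\top M A$, $Q := A^\top M B$, $R := B^\top M B$, the crucial observation is that $Q = A^\top M_{\mathrm{out}} B$ and $R = B^\top M_{\mathrm{out}} B$ depend only on $M_{\mathrm{out}}$, while $P = A^\top M_{\mathrm{in}} A + A^\top M_{\mathrm{out}} A$. Since $[A,B]$ is orthogonal and $M \succ 0$, the block-inverse formula gives $A^\top M^{-1} A = (P - QR^{-1}Q^\top)^{-1}$, so that
\[
(A^\top M^{-1} A)^{-1} = A^\top M_{\mathrm{in}} A + S, \qquad S := A^\top M_{\mathrm{out}} A - Q R^{-1} Q^\top .
\]
Here $S$ is exactly the Schur complement of the (invertible) block $R$ in the positive semidefinite matrix $[A,B]^\top M_{\mathrm{out}} [A,B]$, hence $S \succeq 0$; and $S$ does not involve the in-group weights $w_{G_k}$.

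Part 1 is then immediate: $(A^\top M^{-1} A)^{-1} = A^\top M_{\mathrm{in}} A + S \succeq A^\top M_{\mathrm{in}} A$, and monotonicity of $\det$ on the PSD cone gives $\det\big((A^\top M^{-1} A)^{-1}\big) \geq \det(A^\top M_{\mathrm{in}} A)$, which is the asserted log-det inequality (trivially so when $A^\top M_{\mathrm{in}} A$ is singular, where the left-hand side is $-\infty$). For the equality case, equality forces $A^\top M_{\mathrm{in}} A \succ 0$, and then $\det(A^\top M_{\mathrm{in}} A + S) = \det(A^\top M_{\mathrm{in}} A)$ with $S \succeq 0$ forces $S = 0$, since $\det\big(I + (A^\top M_{\mathrm{in}} A)^{-1/2} S (A^\top M_{\mathrm{in}} A)^{-1/2}\big) = 1$ and a product of factors $1+\mu_i$ with $\mu_i \geq 0$ equals one only if every $\mu_i = 0$.

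Finally, for Part 2 I would re-run the identical block computation with $M_{\mathrm{in}}$ replaced by $M_{\mathrm{in}}' := \sum_{i \in G_k} w_i' x_i x_i^\top$ and $M' := M_{\mathrm{in}}' + M_{\mathrm{out}}$ (noting that $M_{\mathrm{in}}' B = 0$ as well). Because $Q$, $R$, and therefore $S$ were seen to depend only on the fixed $M_{\mathrm{out}}$, the \emph{same} $S$ appears, yielding $(A^\top (M')^{-1} A)^{-1} = A^\top M_{\mathrm{in}}' A + S$. Under the equality hypothesis $S = 0$, this reads $A^\top (M')^{-1} A = (A^\top M_{\mathrm{in}}' A)^{-1}$, which is precisely the claimed identity. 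I expect the main obstacle to be the bookkeeping that pins down $S$ as a quantity independent of the in-group weights: this decoupling, a consequence of $B^\top x_i = 0$ for $i \in G_k$, is exactly what lets a single scalar (determinantal) equality propagate to the full matrix identity for \emph{all} $w_{G_k}'$; the orthonormality of $A$ is what makes the completion $[A,B]$ and the Schur-complement formula apply verbatim.
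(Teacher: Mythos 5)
Your proposal is correct and follows essentially the same route as the paper's proof: both pass to an orthonormal basis adapted to $\mathrm{range}(A^{(k)})$ (the paper via the SVD $A = U\Lambda V^\top$, you via the completion $[A,B]$, which coincide since $A$ has orthonormal columns), identify the gap between $(A^\top \M(w)^{-1}A)^{-1}$ and $A^\top M_{\mathrm{in}} A$ as the Schur complement of the out-of-group matrix, and deduce both parts from its positive semidefiniteness, its vanishing under equality, and its independence of the in-group weights. Your additive packaging $(A^\top M^{-1}A)^{-1} = A^\top M_{\mathrm{in}}A + S$ and explicit eigenvalue argument for why equality forces $S=0$ are slightly cleaner than the paper's presentation, but they are the same computation, not a different method.
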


\begin{proof}
Fix $k \in [K]$. We use shorthand notation $M_0 = \sum_{i \in G_k} w_i x_ix_i^\top, M = \sum_{i=1}^n w_i x_ix_i^\top, A = A^{(k)}$. Let $A = U \Lambda V^\top$ denote the Singular Value Decomposition (SVD) decomposition of $A$, where $U \in \R^{d \times d}$ and $V \in \R^{r_k \times r_k}$ are real orthogonal matrices. Then $M_0$ can be written as $M_0 = U \begin{pmatrix}
    D &\mathbf{0}_{r_k \times (d-r_k)}\\\mathbf{0}_{(d-r_k) \times r_k}&\mathbf{0}_{(d-r_k) \times (d-r_k)}
\end{pmatrix} U^\top$ where $D \in \mathbb{S}^{r_k}_+$. We write $U(M - M_0)U^\top = \begin{pmatrix}
    A_{11} & A_{12}\\A_{21} & A_{22}
\end{pmatrix}$ where $A_{11} \in \R^{r_k \times r_k}$. 

Notice that
\begin{align*}
    A^\top M_0 A = &~ V \Lambda^\top U^\top U \begin{pmatrix}     D &\mathbf{0}_{r_k \times (d-r_k)}\\\mathbf{0}_{(d-r_k) \times r_k}&\mathbf{0}_{(d-r_k) \times (d-r_k)} \end{pmatrix} U^\top U \Lambda V^\top\\
    = &~ V \begin{pmatrix}\mathbf{I}_{r_k \times r_k}& \mathbf{0}_{r_k \times (d - r_k)}\end{pmatrix}\begin{pmatrix}     D &\mathbf{0}_{r_k \times (d-r_k)}\\\mathbf{0}_{(d-r_k) \times r_k}&\mathbf{0}_{(d-r_k) \times (d-r_k)} \end{pmatrix}\begin{pmatrix}\mathbf{I}_{r_k \times r_k}\\ \mathbf{0}_{r_k \times (d - r_k)}\end{pmatrix} V^\top\\
    = &~ V D V^\top.
\end{align*}

We assert that $A^\top M^{-1} A \preceq \left(A^\top M_0 A\right)^{-1}$. Indeed, we have
\begin{align*}
    &~ \left(A^\top M_0 A\right)^{1/2}A^\top M^{-1} A\left(A^\top M_0 A\right)^{1/2}\\
    = &~ V D^{1/2} V^\top V \Lambda^\top U^\top U \begin{pmatrix}
    D + A_{11} & A_{12}\\A_{21} & A_{22}
\end{pmatrix}^{-1} U^\top U \Lambda V^\top V D^{1/2} V^\top\\
= &~ V D^{1/2}\begin{pmatrix}\mathbf{I}_{r_k \times r_k}& \mathbf{0}_{r_k \times (d - r_k)}\end{pmatrix} \begin{pmatrix}
    (D + A_{11} -A_{12}A_{22}^{-1}A_{21})^{-1} & \star \\ \star &\star 
\end{pmatrix} \begin{pmatrix}\mathbf{I}_{r_k \times r_k}\\ \mathbf{0}_{r_k \times (d - r_k)}\end{pmatrix} D^{1/2} V^\top\\
= &~ V D^{1/2}\begin{pmatrix}
    D + A_{11} -A_{12}A_{22}^{-1}A_{21}
\end{pmatrix}^{-1} D^{1/2} V^\top\\
\leq &~ V V^\top\\
= &~ \mathbf{I}_{r_k \times r_k}
\end{align*}
where the inequality is due to $A_{11} -A_{12}A_{22}^{-1}A_{21} \succeq 0$ since this is the Schur complement of $M - M_0 \succeq 0$. 
As a result, $A^\top M^{-1} A \preceq \left(A^\top M_0 A\right)^{-1}$. 

Applying the monotonicity of $\log \det(\cdot)$, we have
\begin{align*}
    \log \det \left( (A^{(k)})^\top\left( \sum_{i=1}^n w_i x_ix_i^\top  \right)^{-1}A^{(k)}\right) \leq \log \det \left((A^{(k)})^\top \left(\sum_{i \in G_k} w_i x_ix_i^\top \right) A^{(k)} \right)^{-1}.
\end{align*}
This establishes the inequality. 
Further, if equality holds then $A_{11} -A_{12}A_{22}^{-1}A_{21} = 0$. As a result,
\begin{align*}
    &~ A^\top M^{-1} A\\
    = &~ V \Lambda^\top U^\top U \begin{pmatrix}
    D + A_{11} & A_{12}\\A_{21} & A_{22}
\end{pmatrix}^{-1} U^\top U \Lambda V^\top \\
= &~ V \begin{pmatrix}\mathbf{I}_{r_k \times r_k}& \mathbf{0}_{r_k \times (d - r_k)}\end{pmatrix} \begin{pmatrix}
    (D + A_{11} -A_{12}A_{22}^{-1}A_{21})^{-1} & \star \\ \star &\star 
\end{pmatrix} \begin{pmatrix}\mathbf{I}_{r_k \times r_k}\\ \mathbf{0}_{r_k \times (d - r_k)}\end{pmatrix}  V^\top\\
= &~ V D^{-1} V^\top\\
= &~ \left(A^\top M_0 A\right)^{-1}.
\end{align*}

Notice that the validity above argument does not depend on $w_{G_k}$ and in fact holds for any $w_{G_k}'$.  
This completes the proof.

\end{proof}

\begin{claim}[Kakutani's fixed point theorem]\label{cla:fixed_point_thm}
Consider a set-valued function $F:\mathcal{C} \rightarrow 2^{\mathcal{C}}$ over convex and compact set $\mathcal{C}$ such that (i) $F$ has closed graph; (ii) $F(x)$ is non-empty and convex for any $x \in \mathcal{C}$, then there exists a fixed point $x$ such that $x \in F(x)$.
\end{claim}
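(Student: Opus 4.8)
The plan is to deduce this finite-dimensional version of Kakutani's theorem from Brouwer's fixed-point theorem, by replacing the set-valued map $F$ with a sequence of continuous single-valued approximations and then passing to a limit. Since $\mathcal{C}$ is a compact convex subset of some $\R^N$, it is homeomorphic to a closed ball, so Brouwer's theorem applies to every continuous self-map of $\mathcal{C}$; the whole argument reduces to building good enough approximations and controlling the limit.

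First I would construct the approximations. For each integer $m$, fix a simplicial subdivision (triangulation) $\mathcal{T}_m$ of $\mathcal{C}$ whose mesh, i.e. the maximal diameter of a simplex, tends to $0$ as $m \to \infty$. Using the non-emptiness in condition (ii), select for each vertex $v$ of $\mathcal{T}_m$ a point $s_m(v) \in F(v) \subseteq \mathcal{C}$, and let $f_m : \mathcal{C} \to \mathcal{C}$ be the unique map that agrees with $s_m$ on the vertices and is affine on every simplex of $\mathcal{T}_m$. Because $\mathcal{C}$ is convex and each $s_m(v)$ lies in $\mathcal{C}$, every such affine combination stays in $\mathcal{C}$, so $f_m$ is a well-defined continuous self-map of $\mathcal{C}$. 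Brouwer's theorem then produces a fixed point $x_m = f_m(x_m)$.

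Next I would set up the limiting argument. Each $x_m$ lies in some simplex of $\mathcal{T}_m$ with vertices $v_0^m, \dots, v_N^m$ and barycentric coordinates $\lambda_0^m, \dots, \lambda_N^m$, so that $x_m = \sum_i \lambda_i^m v_i^m$ and, by affineness together with the fixed-point property, $x_m = f_m(x_m) = \sum_i \lambda_i^m s_m(v_i^m)$. Using compactness of $\mathcal{C}$ and of the coordinate simplex, I would extract a subsequence along which $x_m \to x^*$, each $\lambda_i^m \to \lambda_i^*$ with $\lambda_i^* \geq 0$ and $\sum_i \lambda_i^* = 1$, and each $s_m(v_i^m) \to y_i$. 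Since the mesh tends to $0$, every $v_i^m \to x^*$ as well.

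Finally I would invoke the hypotheses on $F$. The closed-graph condition (i), applied to the converging pairs $(v_i^m, s_m(v_i^m)) \to (x^*, y_i)$, gives $y_i \in F(x^*)$ for every $i$. Passing to the limit in $x_m = \sum_i \lambda_i^m s_m(v_i^m)$ yields $x^* = \sum_i \lambda_i^* y_i$, a convex combination of points of $F(x^*)$, so convexity of $F(x^*)$ in condition (ii) gives $x^* \in F(x^*)$, the desired fixed point. I expect the main obstacle to be the bookkeeping in this limiting step, namely arranging that the simplex vertices, the barycentric weights, and the selected values are simultaneously convergent and that the closed-graph property is applied to the correct converging pairs; the remaining ingredients are standard facts about triangulations and Brouwer's theorem.
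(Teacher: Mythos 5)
The paper states this claim without proof---it is cited as Kakutani's classical theorem and used as a black box in the proof of Proposition~\ref{cla:existence_ne}---so there is no internal argument to compare against. Your proposal is the standard derivation of Kakutani from Brouwer: piecewise-affine interpolation of selections $s_m(v) \in F(v)$ over finer and finer subdivisions, Brouwer fixed points $x_m = f_m(x_m)$, and a limiting argument in which the barycentric representation $x_m = \sum_i \lambda_i^m s_m(v_i^m)$, the vanishing mesh, the closed graph applied to $(v_i^m, s_m(v_i^m)) \to (x^*, y_i)$, and convexity of $F(x^*)$ combine to give $x^* = \sum_i \lambda_i^* y_i \in F(x^*)$. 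The limiting bookkeeping is handled correctly, and compactness of $\mathcal{C}$ and of the coordinate simplex justifies all the subsequence extractions.

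The one step that fails as written is the very first: a general compact convex set $\mathcal{C} \subset \R^N$ need not admit a simplicial subdivision by affine simplices. The closed unit disk, for instance, is not a finite union of straight triangles (such a union is a polygon), and your affine-interpolation construction genuinely requires straight simplices, so ``fix a triangulation $\mathcal{T}_m$ of $\mathcal{C}$'' is not available in general. The standard repair is to run your argument verbatim on a simplex (or polytope), which does admit subdivisions of arbitrarily small mesh via iterated barycentric subdivision, and then reduce the general case by retraction: choose a simplex $\Delta \supseteq \mathcal{C}$, let $r : \Delta \to \mathcal{C}$ be the nearest-point projection, which is continuous (indeed $1$-Lipschitz) since $\mathcal{C}$ is closed and convex, and set $G := F \circ r : \Delta \to 2^{\Delta}$. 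Then $G$ has nonempty convex values, and its graph is closed: if $y_m \in F(r(z_m))$ with $(z_m, y_m) \to (z, y)$, then $r(z_m) \to r(z)$ and closedness of the graph of $F$ gives $y \in F(r(z))$. The simplex case produces $x^* \in G(x^*) = F(r(x^*)) \subseteq \mathcal{C}$, and $x^* \in \mathcal{C}$ forces $r(x^*) = x^*$, hence $x^* \in F(x^*)$. With this substitution your proof is complete; alternatively, you could avoid triangulations entirely by building $f_m$ from a partition of unity subordinate to a cover of $\mathcal{C}$ by balls of radius $1/m$ and invoking Carath\'eodory's theorem to retain only $N+1$ terms in the limiting convex combination.
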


\begin{claim}[Monotonicity of determinant]\label{cla:monotonicity_determinant}
Suppose $A$ and $B$ are two symmetric matrices such that $A \succeq B \succ 0$, then $\det A \geq \det B$. 
\end{claim}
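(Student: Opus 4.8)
The plan is to reduce the comparison of the two determinants to a statement about eigenvalues via a congruence transformation. Since $B \succ 0$, the spectral theorem gives a symmetric positive-definite square root $B^{1/2}$, which is invertible with inverse $B^{-1/2}$. I would first conjugate the Loewner inequality $A \succeq B$ by $B^{-1/2}$. Congruence preserves the positive-semidefinite order, so from $A - B \succeq 0$ we obtain $B^{-1/2}(A-B)B^{-1/2} \succeq 0$, i.e.
\[
    C := B^{-1/2} A B^{-1/2} \succeq B^{-1/2} B B^{-1/2} = I.
\]

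Next I would read off the eigenvalue consequence. The matrix $C$ is symmetric and satisfies $C \succeq I$, so every eigenvalue of $C$ is at least $1$; multiplying them gives $\det C = \prod_i \lambda_i(C) \geq 1$. Finally I would invoke multiplicativity of the determinant, $\det C = \det(B^{-1/2})\det(A)\det(B^{-1/2}) = \det(A)/\det(B)$, and note that $B \succ 0$ forces $\det B > 0$. Rearranging $\det C \geq 1$ then yields $\det A \geq \det B$, as claimed.

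There is essentially no serious obstacle here; the only points requiring care are that $B \succ 0$ guarantees both the existence and the invertibility of $B^{1/2}$ and the strict positivity $\det B > 0$, all of which follow from applying the spectral decomposition to the positive-definite $B$. As an alternative self-contained route one could instead invoke Weyl's monotonicity theorem, which states that $A \succeq B$ implies $\lambda_i(A) \geq \lambda_i(B)$ for each index $i$ (eigenvalues sorted in a common order), and then multiply these inequalities across the $i$; positivity of all the $\lambda_i(B)$ (again from $B \succ 0$) ensures the product inequality is preserved. I would present the congruence argument as the main proof since it avoids appealing to eigenvalue interlacing machinery.
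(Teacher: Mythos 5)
Your proof is correct, and the congruence argument you give is the standard one: conjugating by $B^{-1/2}$ to get $C = B^{-1/2}AB^{-1/2} \succeq I$, bounding $\det C \geq 1$ via the eigenvalues, and unwinding with multiplicativity of the determinant is exactly how this fact is usually established. Note that the paper itself states Claim~\ref{cla:monotonicity_determinant} without proof, treating it as a known linear-algebra fact used as a supporting lemma, so there is no in-paper argument to compare against; both your main route and your alternative via Weyl's monotonicity theorem (where you correctly observe that strict positivity of the $\lambda_i(B)$ is needed to multiply the inequalities) are sound and complete.
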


\begin{claim}[Concavity of log-determinant function]\label{cla:concavity_log_determinant}
Suppose $A$ and $B$ are two symmetric positive semidefinite matrices such that $A \succeq B \succ 0$, then $\log \det (\lambda A + (1-\lambda)B) \geq \lambda \log \det B + (1-\lambda) \log \det A$ holds for any $\lambda \in (0,1)$.
\end{claim}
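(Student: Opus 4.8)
The plan is to exploit the hypothesis $A \succeq B \succ 0$ to reduce the matrix inequality to a one-dimensional statement via simultaneous diagonalization. Since $B \succ 0$, I would set $C := B^{-1/2} A B^{-1/2}$; the hypothesis $A \succeq B$ is then equivalent to $C \succeq I$, so every eigenvalue $c_1,\dots,c_d$ of $C$ satisfies $c_i \geq 1$. Writing $\lambda A + (1-\lambda)B = B^{1/2}\big(\lambda C + (1-\lambda)I\big)B^{1/2}$ and using multiplicativity of the determinant gives
\begin{equation*}
\log\det\big(\lambda A + (1-\lambda)B\big) = \log\det B + \sum_{i=1}^d \log\big(\lambda c_i + 1 - \lambda\big).
\end{equation*}
Since likewise $\log\det A = \log\det B + \sum_i \log c_i$, the claimed bound $\log\det(\lambda A+(1-\lambda)B) \geq \lambda\log\det B + (1-\lambda)\log\det A$ is equivalent, after cancelling $\log\det B$, to
\begin{equation*}
\sum_{i=1}^d \log\big(\lambda c_i + 1 - \lambda\big) \geq (1-\lambda)\sum_{i=1}^d \log c_i.
\end{equation*}

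It therefore suffices to prove the scalar inequality $\log(\lambda c + 1 - \lambda) \geq (1-\lambda)\log c$, i.e.\ $\lambda c + (1-\lambda) \geq c^{\,1-\lambda}$, for every $c \geq 1$, and then sum over $i$. This is precisely where the ordering hypothesis enters: without $A \succeq B$ the eigenvalues need not exceed $1$, and the reduction would instead call for the plain weighted AM--GM bound $\lambda c + (1-\lambda) \geq c^{\,\lambda}$, which is the ordinary concavity statement and makes no use of the Loewner order. To attack the scalar bound I would set $h(c) := \lambda c + (1-\lambda) - c^{\,1-\lambda}$, observe $h(1)=0$, and study $h'(c) = \lambda - (1-\lambda)c^{-\lambda}$. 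Because $c^{-\lambda}$ is decreasing, $h'$ is increasing on $[1,\infty)$, so the behaviour of $h$ on the whole ray is governed by the sign of $h'(1) = 2\lambda - 1$; the determinant sandwich $\det B \le \det(\lambda A + (1-\lambda)B) \le \det A$ from Claim~\ref{cla:monotonicity_determinant} is also available as a consistency check.

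The main obstacle is the sign of $h'(1) = 2\lambda - 1$. When $\lambda \geq 1/2$ we have $h'(1) \geq 0$, hence $h' \geq 0$ throughout $[1,\infty)$ and $h(c) \geq h(1) = 0$, which closes the per-eigenvalue bound and therefore the matrix inequality. The delicate regime is $\lambda < 1/2$: there $h'(1) < 0$, so $h$ dips strictly below zero immediately to the right of $c=1$ and the scalar inequality can fail (it is recovered only once $c$ is large enough for the linear term $\lambda c$ to dominate $c^{\,1-\lambda}$). Resolving this case is the crux, and it indicates that the hypothesis $A \succeq B \succ 0$ by itself does not secure the bound uniformly over $\lambda \in (0,1)$; before committing to a full write-up I would re-examine whether the intended right-hand side is the standard form $\lambda\log\det A + (1-\lambda)\log\det B$ (which the reduction above yields for all $\lambda$, even without the ordering) or whether the admissible range of $\lambda$ is meant to be restricted to $[\tfrac12,1)$.
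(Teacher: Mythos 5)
Your analysis is correct, and the difficulty you isolate is real: the claim as printed is false for $\lambda \in (0,1/2)$. Your reduction is sound---with $C = B^{-1/2}AB^{-1/2} \succeq I$ the stated inequality is equivalent to $\lambda c + (1-\lambda) \geq c^{1-\lambda}$ for every eigenvalue $c \geq 1$ of $C$---and the one-dimensional case already refutes it: take $d=1$, $B=(1)$, $A=(2)$, $\lambda = 1/10$; then $\log\det\big(\lambda A + (1-\lambda)B\big) = \log 1.1 \approx 0.095$, while $\lambda \log\det B + (1-\lambda)\log\det A = 0.9\log 2 \approx 0.624$. So no proof of the statement on all of $(0,1)$ can exist, and your argument for the regime $\lambda \in [\tfrac12,1)$ (using $h'(1) = 2\lambda - 1 \geq 0$ together with the monotonicity of $h'$ on $[1,\infty)$) establishes exactly the portion of the claim that is true as written; for $\lambda \geq \tfrac12$ it also follows from ordinary concavity plus $\log\det A \geq \log\det B$, since $(2\lambda - 1)(\log\det A - \log\det B) \geq 0$.

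There is no proof in the paper to compare against: Claim~\ref{cla:concavity_log_determinant} is stated without proof in the ``Supporting Lemmas'' appendix, alongside Kakutani's theorem and monotonicity of the determinant, and it is not referenced anywhere in the paper's proofs (the concavity the arguments actually rely on is established separately in Lemma~\ref{lem:log_det_gradient} via the Schur product theorem). This, together with the superfluous ordering hypothesis $A \succeq B$ you flagged, confirms your diagnosis of a typo: the intended statement is the standard concavity $\log\det\big(\lambda A + (1-\lambda)B\big) \geq \lambda\log\det A + (1-\lambda)\log\det B$, with the right-hand-side weights transposed in the source. Your own reduction proves that corrected statement for every $\lambda \in (0,1)$ and all $A, B \succ 0$, since after cancelling $\log\det B$ it becomes the weighted AM--GM bound $\lambda c + (1-\lambda) \geq c^{\lambda}$, valid for all $c > 0$ with no use of the Loewner order. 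In short, your proposal is a correct proof of the claim restricted to $\lambda \in [\tfrac12,1)$, a correct refutation for $\lambda \in (0,\tfrac12)$, and a correct proof of the statement the paper evidently intended; the only thing left to make the refutation fully explicit is the numerical witness above.
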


The following important result of Kiefer and Wolfowitz established the equivalence of the D-optimal design and G-optimal design.
\begin{theorem}[General equivalence theorem of G-optimal design \cite{kiefer1960equivalence}]\label{thm:equivalence_kiefer_wolfowitz}
Assume $\mathrm{span}(\X) = \R^d$. The followings are equivalent: 
\begin{itemize}
    \item $\pi^\ast = \underset{\pi \in \Delta(\X)}{\arg \max}~ \log \det \M(\pi)$;
    \item $\pi^\ast = \underset{\pi \in \Delta(\X)}{\arg \min}~\max_{x \in \X} x^\top \M(\pi)^{-1} x$;
    \item $\max_{x \in \X} x^\top \M(\pi)^{-1} x = d$.
\end{itemize}
\end{theorem}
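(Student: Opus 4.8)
The plan is to prove the cycle of implications $(1)\Leftrightarrow(3)$ and $(3)\Leftrightarrow(2)$, with a single trace identity carrying most of the weight. The starting observation is that for any design $\pi$ with $\M(\pi)\succ 0$,
\[
\sum_{i}\pi_i\, x_i^\top \M(\pi)^{-1} x_i = \big\langle \M(\pi)^{-1}, \textstyle\sum_i \pi_i x_ix_i^\top\big\rangle = \tr(I_d) = d.
\]
Since the left-hand side is a $\pi$-weighted average of the numbers $x_i^\top \M(\pi)^{-1}x_i$, the maximum over $x\in\X$ is at least this average, giving the universal lower bound $\max_{x\in\X} x^\top \M(\pi)^{-1}x \ge d$ valid for every design. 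This immediately reframes condition (3) as the assertion that $\pi^\ast$ attains the smallest value the G-criterion can ever take.

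First I would establish $(1)\Leftrightarrow(3)$ through first-order optimality. Because $\M(\pi)$ is linear in $\pi$ and $\log\det$ is concave on the PSD cone (Claim~\ref{cla:concavity_log_determinant}), the map $\pi\mapsto\log\det\M(\pi)$ is concave, so stationarity and global optimality coincide. Taking the one-sided perturbation $\pi_\epsilon=(1-\epsilon)\pi^\ast+\epsilon\,\delta_x$ and differentiating yields
\[
\frac{d}{d\epsilon}\log\det\M(\pi_\epsilon)\Big|_{\epsilon=0} = \big\langle \M(\pi^\ast)^{-1}, xx^\top-\M(\pi^\ast)\big\rangle = x^\top\M(\pi^\ast)^{-1}x - d.
\]
Optimality of $\pi^\ast$ forces this derivative to be $\le 0$ for every $x\in\X$ (adding mass at any $x$ is always a feasible direction), hence $x^\top\M(\pi^\ast)^{-1}x\le d$; combined with the lower bound this gives $\max_x x^\top\M(\pi^\ast)^{-1}x=d$, proving $(1)\Rightarrow(3)$. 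Conversely, if (3) holds then $x^\top\M(\pi^\ast)^{-1}x\le d$ for all $x$, so for any competitor $\pi$ the directional derivative $\sum_i\pi_i x_i^\top\M(\pi^\ast)^{-1}x_i-d\le 0$; concavity then certifies $\pi^\ast$ as a global maximizer, giving $(3)\Rightarrow(1)$.

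Next I would close the loop with $(2)\Leftrightarrow(3)$. The lower bound already delivers $(3)\Rightarrow(2)$: if $\pi^\ast$ attains the floor value $d$ of the G-criterion, it minimizes it. For $(2)\Rightarrow(3)$ I would invoke the existence of a D-optimal design $\tilde\pi$ (guaranteed by continuity of $\log\det\M(\cdot)$ on the compact simplex $\Delta(\X)$, with $\mathrm{span}(\X)=\R^d$ ensuring a nonsingular $\M$ is attainable). By the already-proved $(1)\Rightarrow(3)$, this $\tilde\pi$ satisfies $\max_x x^\top\M(\tilde\pi)^{-1}x=d$, so the minimal value of the G-criterion is exactly $d$. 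Any G-optimal $\pi^\ast$ therefore also attains $d$, which is precisely (3).

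I expect the main obstacle to be the two-directional use of concavity in the $(1)\Leftrightarrow(3)$ step: one must argue that checking the one-sided derivatives over the extreme directions $\delta_x$ (rather than arbitrary feasible directions) genuinely characterizes the maximizer of the concave program, and that the Gateaux derivative of $\log\det\M(\cdot)$ is computed correctly even at boundary designs where some $\pi^\ast_i$ vanish (the averaging identity sidesteps any need for two-sided perturbations outside $\supp(\pi^\ast)$). The $(2)\Rightarrow(3)$ direction is comparatively soft, resting only on the existence of a D-optimal design and the identification of the G-criterion's infimum with the value $d$ achieved there.
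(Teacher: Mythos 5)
Your proof is correct, but there is nothing in the paper to compare it against: the statement is quoted as a classical result from \cite{kiefer1960equivalence} in the ``Supporting Lemmas'' section and used as a black box (e.g., in the proofs of Propositions~\ref{prop:fed_efficient_allocation}, \ref{prop:free_riding} and~\ref{prop:efficient_allocation}); the paper supplies no proof of its own. Your argument is the standard equivalence-theorem proof from the optimal-design literature, and all the steps check out: the trace identity $\sum_i \pi_i\, x_i^\top \M(\pi)^{-1} x_i = \tr(I_d) = d$ gives the universal floor $\max_{x\in\X} x^\top \M(\pi)^{-1}x \ge d$; the Gateaux derivative of $\log\det$ along the mixture $(1-\epsilon)\pi^\ast + \epsilon\,\delta_x$ is computed correctly as $x^\top \M(\pi^\ast)^{-1}x - d$ (valid since D-optimality plus $\mathrm{span}(\X)=\R^d$ forces $\M(\pi^\ast)\succ 0$); and compactness of $\Delta(\X)$ guarantees the D-optimal design needed to identify $d$ as the exact minimum of the G-criterion in the $(2)\Rightarrow(3)$ step. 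The obstacle you flag about one-sided derivatives at boundary designs dissolves for precisely the reason you give: in $(3)\Rightarrow(1)$ you never differentiate along arbitrary directions but only invoke the concavity gradient inequality $\log\det\M(\pi) - \log\det\M(\pi^\ast) \le \big\langle \M(\pi^\ast)^{-1}, \M(\pi)-\M(\pi^\ast)\big\rangle = \sum_i \pi_i\, x_i^\top \M(\pi^\ast)^{-1}x_i - d$, whose right-hand side is a $\pi$-weighted average of the extreme-direction slopes and hence nonpositive when each $x^\top\M(\pi^\ast)^{-1}x \le d$; necessity only needs the always-feasible inward mixtures toward $\delta_x$. One cosmetic point: the third bullet of the theorem as printed reads $\M(\pi)$ where it should read $\M(\pi^\ast)$ (or be prefaced by ``$\pi^\ast$ satisfies''), and your proof correctly adopts that reading. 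Your argument would serve as a self-contained appendix proof if the authors wished to make the paper independent of \cite{kiefer1960equivalence}.
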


\clearpage

\end{document}